\theoremstyle{plain}
\newtheorem{lemma}{Lemma}
\newtheorem{definition}{Definition}
\newtheorem{proposition}{Proposition}
\theoremstyle{definition}
\newtheorem{remark}{Remark}[section]
\newcommand{\thickhline}{%
    \noalign {\ifnum 0=`}\fi \hrule height 1pt
    \futurelet \reserved@a \@xhline
}
\newcolumntype{"}{@{\hskip\tabcolsep\vrule width 1pt\hskip\tabcolsep}}
\begin{document}

\onehalfspacing        
\title{Integrating Different Informations for Portfolio Selection
\thanks{This work was partially supported by
National Natural Science Foundation of China (No.72271250, No.71471180), InnoHK initiative,
The Government of the HKSAR, and Laboratory for AI-Powered Financial Technologies, and
Research Foundation of Education Bureau of Hunan Province (No.22B0522).}}

\author{ Yi Huang\thanks{College of Mathematics and Statistics, Jishou University, Jishou, China. Email: huangy@jsu.edu.cn.}
\quad Wei Zhu\thanks{Huawei Technologies Co. Ltd, Shenzhen, China. Email: zhuweifly@163.com.}
\quad Duan Li\thanks{School of Data Science, City
University of Hong Kong, Hong Kong, China. Deceased.}
\quad Shushang Zhu\thanks{Corresponding author. School of Business, Sun Yat-Sen University, Guangzhou, China.
Email: zhuss@mail.sysu.edu.cn}
\quad Shikun Wang\thanks{School of Business, Sun Yat-Sen University, Guangzhou, China. Email: wangshk9@mail2.sysu.edu.cn.}
}

\maketitle

\begin{abstract}
Following the idea of Bayesian learning via Gaussian mixture model, we organically combine the backward-looking information contained in the historical data and the forward-looking information implied by the market portfolio, which is affected by heterogeneous expectations and noisy trading behavior. The proposed combined estimation adaptively harmonizes these two types of information based on the degree of market efficiency and responds quickly at turning points of the market.
Both simulation experiments and a global empirical test confirm that the approach is a flexible and robust forecasting tool and is applicable to various capital markets with different degrees of efficiency.
\end{abstract}

\vskip 0.3cm \noindent \textit{Keywords:} {\small Gaussian mixture model, Bayesian analysis, Backward-looking information, Forward-looking information}

\section{Introduction}\label{Sec1}
As the beginning of modern investment theory, \cite{markowitz_portfolio_1952}
develops the mean-variance model to determine the optimal investment decisions, which not only characterizes the behavioral pattern of return-risk tradeoff but also illustrates the benefits of risk diversification in a quantitative way. Obviously, the successful application of the mean-variance model depends on the accurate estimation of the means and covariance matrix of asset returns. Existing studies tend to estimate model parameters by adopting different sources of information, which can be roughly divided into two types.

The first type is the well-developed backward-looking approach, which uses statistical approaches or econometric models to estimate expected parameters from historical data. This approach is a good choice when confronted with predicting future scenarios that resemble those found in historical patterns, or lack of other effective information \citep{Huang2021Combined}.
However, in practice, the distribution of the asset returns is time-varying, hence the backward-looking approach tends to exhibit lag near the turning point of the market \citep{zhu_portfolio_2014}.
As a result, the out-of-sample performance of backward-looking approaches is usually affected by the prediction error \citep{britten1999sampling, jobson1980estimation, ledoit2004well}.
Furthermore, the portfolios using parameters estimated with backward-looking approaches are usually extremely concentrated \citep{broadie_computing_1993, green1992will}
and especially sensitive to the means \citep{best1991sensitivity, merton1980estimating}.
The mean-variance optimization based on historical data only is even criticized as an error maximizer
\citep{michaud1989markowitz},
although there are some approaches proposed to alleviate the influence of estimation errors
(see e.g., \cite{black_global_1992, jagannathan2003risk, frost_for_1988}).

The second type is the forward-looking approach, which uses financial theories to speculate on investor expectations from market variables such as the market portfolio and option prices.
For the market portfolio, a typical work is the Black-Litterman model
\citep{black_global_1992},
where CAPM
\citep{sharpe_capital_1964}
is used to specify the means of the prior distribution on expected returns.
They assume that the market portfolio is the optimal decision of the representative agent with perfect rationality, thus implying investor's expectations on future returns. With the covariance matrix estimated from historical data, the expected return is derived from the first-order conditions of the mean-variance optimization problem, which is termed the inverse optimization approach in the sense that it recovers the expected returns (parameters input for optimization) from the market portfolio (decision as the result of optimization) according to the conditions of optimality.
Likewise,
\citep{sharpe2007expected} considers a more general framework for return forecasting by using the inverse optimization approach based on the expected utility maximization.

Another category of forward-looking approaches involves extracting return information from option prices, as risk-neutral quantities and the pricing kernel summarize investors' expectations and risk preferences.
\cite{bliss2004option}
extract the implied risk-neutral probability
\citep{breeden1978prices}
of the underlying asset from option prices and convert it into physical probability via the pricing kernel.
There are also findings that option-implied information helps to improve the out-of-sample performance of portfolios \citep{demiguel_improving_2013, kempf_portfolio_2015, kostakis_market_2011}.
However, it is notable that estimating the physical return distribution of underlying asset from option prices is only applicable for a single risky asset, while the recovery of joint physical return distribution of multiple underlying assets remains an open problem \citep{Carr2018}.

Both backward-looking and forward-looking approaches have their respective advantages and disadvantages in practice.
Despite concerns regarding its out-of-sample performance and theoretical foundations, the backward-looking approach remains a popular method for parameter estimations, particularly for the terms associated with risk, such as the covariance matrix.
The forward-looking approach can theoretically possess the ability to respond quickly at turning points \citep{zhu_portfolio_2014}.
However, its predictive ability is influenced by market efficiency and the degree to which financial theory matches the real market.
Especially, the forward-looking approaches usually impose assumptions of perfect rationality and homogeneous expectations among investors, while the impact of noise traders on market equilibrium always exists \citep{1986Noise, kyle_continuous_1985, 10.2307/2937765}.
As an inverse optimization approach, another challenging issue for the forward-looking approach is the nonuniqueness of the solution of the inverse problem.

Most of the existing literature tends to make predictions by solely utilizing one of the above-mentioned approaches.
Given the limitations of relying solely on backward-looking or forward-looking information, it is prudent to incorporate information from different sources and generate forecasts that are robust across a diverse range of markets.
\citep{cheang_optimal_2020}
integrate historical data and option-implied information in estimating the covariance matrix, which plugs risk-neutral variances in the sample covariance matrix directly.
\citep{Huang2021Combined}
consider a market with heterogeneous investors and propose an attractive approach to incorporate the market-implied information and historical data in expected return forecasting, where the covariance matrix remains the one estimated by historical data.
\begin{figure}[ht]
\centering
\includegraphics[height=3in,width=6in]{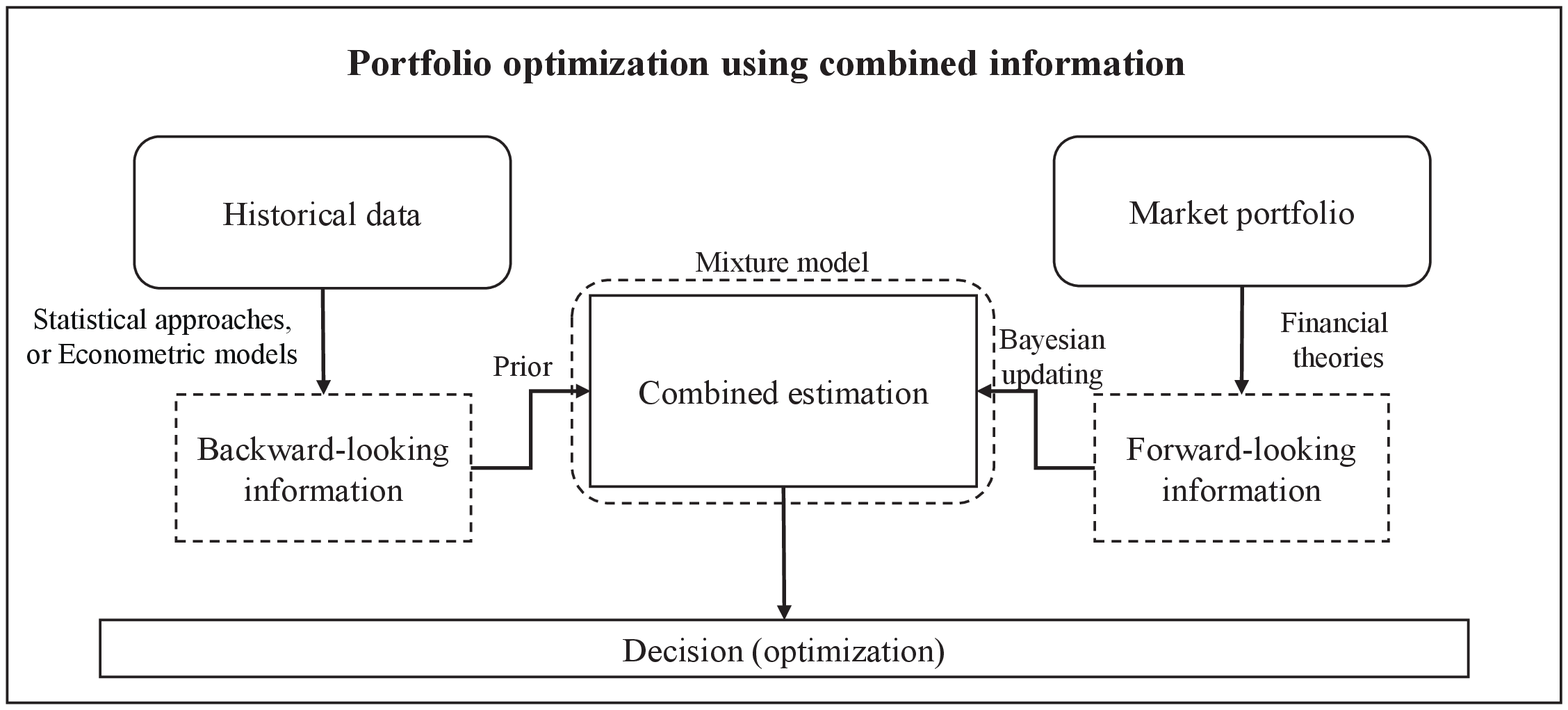}
\caption{Logic of problem formulation and research roadmap}
\label{fig:1}
\end{figure}

In this paper, we propose a general method that combines backward-looking and forward-looking information for predicting returns in portfolio management. Rather than assuming that investors are perfectly rational and have homogeneous expectations, we consider a more general market with heterogeneous investors and aim to extract the informed investors' expectations of returns from the market equilibrium influenced by the noise. Meanwhile, we provide a robust estimation of the means and covariance matrix of returns that combines the backward-looking information contained in historical data and the forward-looking information implied in the market portfolio. In addition, we intend to estimate the means and covariance matrix simultaneously, which, as an inverse problem, always suffers from the issue of multi-solution. To achieve these goals, we use the Gaussian mixture model to characterize uncertain returns and implement the estimation in the spirit of Bayesian learning. Figure \ref{fig:1} illustrates the logic and the research roadmap of our work. We theoretically show that the combined estimation can adaptively select the extent of forward-looking information to incorporate based on the degree of market efficiency, using the historical prior as a basis. As a result, it simultaneously possesses the advantages of both backward-looking and forward-looking forecasting approaches.

The remainder of the paper is organized as follows.
Section \ref{Sec2}
introduces a novel combined forecasting model and discusses some theoretical properties.
Section \ref{Sec3}
presents simulation experiments that verify the properties of the proposed combined estimation.
Section \ref{Sec4}
provides empirical analyses across 35 markets over the world and shows its out-of-sample performance.
Section \ref{Sec5}
concludes the study.

\section{Model}\label{Sec2}

In this section, we first model a general market considering investors' heterogeneous expectations and irrational behaviors. Next, we propose a novel combined approach to forecast asset returns and analyze its characteristics of self-adativity to the degree of market efficiency. Finally, we study a special case for the explicit form of the combined estimation.

\subsection{Market equilibrium with heterogeneous investors}\label{sec2.1}

In the classical CAPM model \citep{sharpe_capital_1964},
the market equilibrium is based on some ideal and strict assumptions, including homogeneous expectations and perfect rationality.
Instead, in this paper, we categorize investors into three types based on their available information and decision-making behaviors, namely a representative rational less-informed investor, a representative rational informed investor, and a representative noise trader.
The less-informed investor has only historical information, while the informed investor has access to a broad knowledge of asset returns. Both types of rational investors maximize the mean-variance utility based on their own information. The noise investor, as defined in
\cite{kyle_continuous_1985},
has no useful information and makes decisions randomly. In reality, we cannot classify all investors so clearly, as even an informed investor may behave irrationally.
However, we believe that this simple classification can effectively represent the three primary forces in the market.

Suppose there are $n$ risky assets and one risk-free asset available in the market.
Denote $\bm{r}=\left(\tilde{r}_1-r_f,\dots, \tilde{r}_n-r_f\right)'$ as the excess return of risky assets, where $\tilde{r}_i,i=1,\dots,n,$ and $r_f$ are the returns on the risky and risk-free assets, respectively. Let $\bm{x}=\left(x_1,\dots, x_n\right)'$ denote the portfolio weight on risky assets and $1-\sum_{i=1}^n x_i$ denote the weight on the risk-free asset. Both the less-informed investor and the informed investor make their investment decisions with the mean-variance model. Specifically, they determine their portfolios by solving the following optimization problem
\begin{equation}\label{eq:1}
 \mathop{\max}\limits_{\bm{x}} \bm{x}'\hat{\bm{\mu}}-\frac{\delta}{2}\bm{x}'\hat{\Sigma}\bm{x},
\end{equation}
where $\hat{\bm{\mu}}$ and $\hat{\Sigma}$ are the mean vector and covariance matrix of $\bm r$ estimated by investors and $\delta$ is the risk-aversion coefficient of investors.

Denote $\delta_U$ and $\delta_I$, $(\bm{\mu}_U,\Sigma_U)$ and $(\bm{\mu}_I,\Sigma_I)$ as the risk-aversion coefficients, mean vectors and covariance matrices specified by the less-informed investor and the informed investor, respectively. Accordingly, the optimal investment decisions of these two types of investors can be represented as
\begin{equation}\label{eq:2}
\bm{x}^*_U \triangleq \left(\delta_U\Sigma_U\right)^{-1}\bm{\mu}_U\
\mbox{and}~
\bm{x}_I^* \triangleq \left(\delta_I\Sigma_I\right)^{-1}\bm{\mu}_I.
\end{equation}
Furthermore, we assume that the informed investor has perfect knowledge on the return distribution and their predictions are equal to the true value of the parameters $(\bm{\mu}, \Sigma)$, i.e.,
\begin{equation}\label{eq:3}
\bm{\mu}_I = \bm{\mu},\ \Sigma_I = \Sigma\
\mbox{and}~
\bm{x}_I^* = \left(\delta_I\Sigma\right)^{-1}\bm{\mu}.
\end{equation}
In this paper, our main goal is to infer the informed investor's expectation  $(\bm{\mu},\Sigma)$.

By contrast, the noise investor makes decisions randomly, representing the irrational part of the market. Assume that their holding of risky assets follows a normal distribution
\begin{equation}\label{eq:4}
\bm{x}_N\sim\mathcal{N}\left(\bm{0},\Sigma_N\right),
\end{equation}
where $\bm{0}$ is a vector of zeros, $\Sigma_N = diag\left(\sigma_1,\dots,\sigma_n\right)$ is the covariance matrix, and $\{\sigma_i\}_{i = 1}^n$ are the intensities of noise trading, which implies that the random positions on risky assets are not correlated with each others.

Now we discuss the market equilibrium. Since the three investors borrow and lend equally on risk-free assets, the market portfolio is entirely allocated to risky assets.
Let $\bm{x}_M$ denote the market portfolio, and $w_U$, $w_I$, and $w_N$ denote the wealth of the less-informed investor, the informed investor, and the noise trader. The market clearing condition yields
\begin{equation}\label{eq:5}
w_U\bm{x}^*_U+w_I\bm{x}^*_I+w_N\bm{x}_N=w\bm{x}_M.
\end{equation}
Dividing both sides of
\eqref{eq:5}
by $w$ and substituting into
\eqref{eq:4},
we have
\begin{equation}\label{eq:6}
\bm{x}_M = \alpha_U\bm{x}^*_U+\alpha_I\left(\delta_I\Sigma\right)^{-1}\bm{\mu}+\bm{\varepsilon},
\end{equation}
where $\bm{\varepsilon} \triangleq \alpha_N\bm{x}_N$, and $\alpha_U
\triangleq w_U/w,~\alpha_I \triangleq w_I/w$ and $\alpha_N \triangleq w_N/w$ are the market shares of the less-informed investor, the informed investor, and the noise investor, respectively. Since $\bm{x}_N$ is normally distributed, the holding of the noise trader follows
\[
\bm{\varepsilon}\sim\mathcal{N}\left(\bm{0}_n,\Omega\right), ~
\Omega \triangleq \alpha_{N}^2\Sigma_N.
\]

Equation \eqref{eq:6}
contains significant forward-looking information about asset returns since the market portfolio $\bm{x}_M$ characterizes the market equilibrium and synthesizes investors' expectations and decision-making behaviors.
However, because of the random term $\bm{\varepsilon}$, we cannot directly derive the parameter $(\bm{\mu},\Sigma)$ from
\eqref{eq:6}.
In addition, even if the random term $\bm{\varepsilon}$ can be eliminated by adopting the expectation operator, the estimation is still difficult.
Although the less-informed investor's holding $\bm{x}^*_U$ can be determined by the estimate $(\bm{\mu}_U, \Sigma_U)$ using historical data and $\bm{x}_M$ can be directly observed, the number of unknown parameters in $(\bm{\mu},\Sigma)$ far exceeds the number of equations, which will result in several difficulties for estimation, for example, the issue of multi-solution.

To address the problem, we model the return distribution via a Gaussian mixture model (see e.g., \cite{zhu_portfolio_2014}).
More specifically, the probability density function of return $\bm{r}$ is defined by
\begin{equation}\label{eq:7}
p\left(\bm{r}\right)=\sum_{k=1}^m \lambda_k p_k\left(\bm{r}|\bm{\mu}_k,\Sigma_k\right),
\end{equation}
where $p_k\left(\bm{r}|\bm{\mu}_k,\Sigma_k\right) ~(k = 1, \cdots,m)$ is the density function of the $k^{th}$ Gaussian component with mean vector $\bm{\mu}_k$ and covariance matrix $\Sigma_k$, and the mixture coefficients $\{\lambda_k\}_{k=1}^m$ satisfy $\sum_{k=1}^m \lambda_k=1, \lambda_k\geq 0$. Accordingly, the mean and covariance of asset returns are given by
\begin{align}\label{eq:8}
\bm{\mu}=\sum\limits_{k=1}^m \lambda_k \bm{\mu}_k,\ \Sigma=\sum\limits_{k=1}^m \lambda_k \Sigma_k+\sum\limits_{k=1}^m \lambda_k\left(\bm{\mu}_k-\sum\limits_{k=1}^m \lambda_k \bm{\mu}_k\right)\left(\bm{\mu}_k-\sum\limits_{k=1}^m \lambda_k \bm{\mu}_k\right)'.
\end{align}
Since the sum of mixture coefficients is equal to one, without loss of generality, we regard $\lambda_m$ as a redundant parameter and denote the first $m-1$ mixture coefficients as $\bm{\lambda}_-=\left(\lambda_1,\ldots,\lambda_{m-1}\right)'$.

To reduce the number of parameters to be estimated, we reasonably suppose the Gaussian components $p_k\left(\bm{r}|\bm{\mu}_k,\Sigma_k\right) (k = 1, \cdots, m)$  are predetermined. Then we can transform the problem of estimating $(\bm{\mu},\Sigma)$ into the estimation of $\bm{\lambda}_-$. Rewrite \eqref{eq:6} with \eqref{eq:8}
\begin{equation}\label{eq:9}
\bm{x}_M = \alpha_U \bm{x}^*_U+\alpha_Ig\left(\bm{\lambda}_-\right)+\bm{\varepsilon},
\end{equation}
where
\begin{equation*}
\notag g\left(\bm{\lambda}_-\right) \triangleq
\frac{1}{\delta_I}\left[\sum_{k=1}^m \lambda_k\Sigma_k+
\sum_{k=1}^m \lambda_k\left(\bm{\mu}_k-\sum_{k=1}^m \lambda_k\bm{\mu}_k\right)
\left(\bm{\mu}_k-\sum_{k=1}^m \lambda_k\bm{\mu}_k\right)'\right]^{-1}
\sum_{k=1}^m \lambda_k\bm{\mu}_k.
\end{equation*}

The process of recovering $\bm{\lambda}_-$ from \eqref{eq:9} can be viewed as an inverse optimization problem, where $\bm{x}_I$ and $\bm{x}_U$ integrated into the observed market portfolio $\bm{x}_M$ are the result of optimization and the mixture weights $\bm{\lambda}_-$ is the parameter input for optimization.
Since the market portfolio contains the holding of the noise trader, we recover the forward-looking information of the informed investor from noise data.
Here, the Gaussian mixture model leads to a significant reduction in the number of unknown parameters, making the inverse optimization highly tractable.
Given mixture components, the model requires only $m-1$ mixture weights, and the number of unknown parameters is typically less than the number of assets. Furthermore, the utilization of the Gaussian mixture distribution in this model is reasonable both theoretically and empirically. As an unsupervised learning approach, the mixture distribution can approximate virtually any distribution
\citep{pang2023chance},
which makes it a suitable choice for fitting the distribution of financial data
\citep{mercuri_finite_2021}.

\subsection{Combined forecasting approach}\label{subsec2.2.2}

As shown in \eqref{eq:6}, the efficacy of forward-looking information is affected by noise trading. To achieve an estimation that is applicable across markets with varying market efficiency, we propose a novel combined approach to estimate the return distribution by integrating information from multiple channels in the spirit of Bayesian learning  (see e.g., \cite{marisu_bayesian_2023}).
Specifically, we use historical data to generate a prior distribution of mixture weights, then update with forward-looking information in \eqref{eq:9} to obtain the posterior distribution of mixture coefficients. We demonstrate the advantages of combined estimation later, specifically its ability to adaptively integrate different sources of information based on the degree of market efficiency. More specifically, if the degree of market efficiency is relatively low, the market portfolio does not contain valuable information on expectation, and the estimation is close to the prior distribution learned from the historical data. Otherwise, if the market is relatively efficient, then the estimation will reflect more of the market-implied information.

For the prior information, we adopt a Gaussian prior distribution for mixture coefficients
\[
\bm{\lambda}_- \sim \mathcal{N}\left(\hat{\bm{\lambda}}_-,\Phi\right),
\]
where $(\hat{\bm{\lambda}}_-,\Phi)$ can be estimated by existing technologies from historical data, such as the Expectation-Maximization (EM) algorithm \citep{mclachlan2007algorithm}.
The large sample theory on maximum likelihood estimation ensures the rationality of using normal distributions to model the distributions of the estimator of $\bm{\lambda}_-$ \citep{zhu_portfolio_2014}.
For the forward-looking information in \eqref{eq:9}, it is obvious that
\[
\bm{x}_M | \bm{\lambda}_- \sim \mathcal{N}\left(\alpha_U \bm{x}^*_U+\alpha_Ig\left(\bm{\lambda}_-\right),\Omega\right).
\]

Denote
\[
\bm{\theta} \triangleq (\alpha_I,\alpha_U,\alpha_N,\sigma_1,\cdots,\sigma_n,\delta_I,\delta_U)',
\]
the model parameter in the parameter space
\begin{equation}\label{para_space}
\Theta \triangleq \{ \bm{\theta} | (\alpha_I,\alpha_U,\alpha_N)'\bm{1} = 1, \bm{\theta} >0 \},
\end{equation}
and define $\Vert \bm y \Vert_P \triangleq \sqrt{\bm y'P\bm y}$. We assume that the market shares of three representative investors and noise intensities cannot reach zero to avoid the trivial case in which the forward-looking information is deterministic, and $\delta_I, \delta_U>0$ implies risk-averse investors.

Combining the prior information and forward-looking information mentioned above yields the following results on the posterior distribution of $\bm{\lambda}_-$.
\begin{proposition}\label{posterior}
Suppose that $\bm{\lambda}_- \sim\mathcal{N}\left(\hat{\bm{\lambda}}_-,\Phi\right)$, $\bm{x}_M | \bm{\lambda}_- \sim \mathcal{N}\left(\alpha_U \bm{x}^*_U+\alpha_Ig\left(\bm{\lambda}_-\right),\Omega\right)$ and both
$\Phi$ and $\Omega$ are invertible. Then the conditional distribution of $\bm{\lambda}_-$ for a given $\bm{x}_M $ satisfies
\begin{equation}\label{eq:14}
\notag p(\bm{\lambda}_- |\bm{x}_M ) \propto
exp\left\{ -\frac{1}{2}(F_1\left(\bm{\lambda}_-\right)+F_2\left(\bm{\lambda}_- | \bm{\theta} \right)) \right\},
\end{equation}
where
\begin{equation}\label{eq:11}
\begin{aligned}
F_1\left(\bm{\lambda}_-\right)
\triangleq
\left\Vert\bm{\lambda}_--\hat{\bm{\lambda}}_-\right\Vert^2_{\Phi^{-1}}\mbox{ and }
 F_2\left(\bm{\lambda}_- | \bm{\theta}\right)
\triangleq\left\Vert\frac{1}{\alpha_N}\bm{x}_M-\frac{\alpha_U}{\alpha_N}\bm{x}^*_U-
\frac{\alpha_I}{\alpha_N}g\left(\bm{\lambda}_-\right)\right\Vert^2_{\Sigma_N^{-1}}.
\end{aligned}
\end{equation}

\end{proposition}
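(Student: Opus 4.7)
The plan is to apply Bayes' rule in its proportional form and then rewrite each of the two Gaussian densities so that the exponents match $F_1$ and $F_2$ exactly. By Bayes' rule,
\[
p(\bm{\lambda}_-\mid \bm{x}_M)\;\propto\; p(\bm{x}_M\mid \bm{\lambda}_-)\,p(\bm{\lambda}_-),
\]
where the factor $p(\bm{x}_M)$ is absorbed into the proportionality since it does not depend on $\bm{\lambda}_-$. The invertibility of $\Phi$ and $\Omega$ (assumed in the statement) guarantees that both Gaussian densities are well defined, so this manipulation is rigorous.

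First, I would expand the prior density. Since $\bm{\lambda}_-\sim\mathcal{N}(\hat{\bm{\lambda}}_-,\Phi)$, its density equals a constant (depending only on $\Phi$) times $\exp\{-\tfrac12(\bm{\lambda}_--\hat{\bm{\lambda}}_-)'\Phi^{-1}(\bm{\lambda}_--\hat{\bm{\lambda}}_-)\}$. Using the notation $\Vert\cdot\Vert_P$ introduced just before the proposition, this is precisely $\exp\{-\tfrac12 F_1(\bm{\lambda}_-)\}$ up to a $\bm{\lambda}_-$-free constant.

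Next, I would expand the likelihood. Since $\bm{x}_M\mid\bm{\lambda}_-\sim\mathcal{N}(\alpha_U\bm{x}^*_U+\alpha_I g(\bm{\lambda}_-),\Omega)$, its density is, up to a constant independent of $\bm{\lambda}_-$,
\[
\exp\!\left\{-\tfrac12\bigl(\bm{x}_M-\alpha_U\bm{x}^*_U-\alpha_I g(\bm{\lambda}_-)\bigr)'\Omega^{-1}\bigl(\bm{x}_M-\alpha_U\bm{x}^*_U-\alpha_I g(\bm{\lambda}_-)\bigr)\right\}.
\]
The key step is to use $\Omega=\alpha_N^2\Sigma_N$ from Section~2.1, which gives $\Omega^{-1}=\alpha_N^{-2}\Sigma_N^{-1}$. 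Pulling the scalar $1/\alpha_N$ inside each factor of the quadratic form (allowed because it is simply a rescaling of the argument), the exponent becomes exactly $-\tfrac12 F_2(\bm{\lambda}_-\mid\bm{\theta})$ as defined in \eqref{eq:11}.

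Finally, multiplying the two exponentials and absorbing all $\bm{\lambda}_-$-free multiplicative constants (including the determinant factors of $\Phi$ and $\Omega$) into the proportionality yields the claimed form. The main obstacle, if any, is purely bookkeeping: one must check that the rescaling by $1/\alpha_N$ inside the norm matches the definition of $\Vert\cdot\Vert_{\Sigma_N^{-1}}$, i.e.\ that $(\alpha_N^{-1}\bm{y})'\Sigma_N^{-1}(\alpha_N^{-1}\bm{y})=\bm{y}'\Omega^{-1}\bm{y}$; once this is verified, no further computation is required, and the nonlinearity of $g(\bm{\lambda}_-)$ is irrelevant since we only assert proportionality rather than a closed-form Gaussian posterior.
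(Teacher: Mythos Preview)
Your proposal is correct and follows essentially the same approach as the paper's own proof: apply Bayes' rule in proportional form, write out the two Gaussian exponents, and use $\Omega=\alpha_N^{2}\Sigma_N$ to rescale the likelihood term so that its quadratic form is expressed in $\Sigma_N^{-1}$ with the $1/\alpha_N$ factors pulled inside. The paper's proof is slightly terser but identical in substance.
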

\begin{proof}
With Bayes' rule, we can calculate the posterior probability density function as
\begin{align}
\notag p(\bm{\lambda}_- |\bm{x}_M )
\notag \propto& p(\bm{\lambda}_-)p(\bm{x}_M| \bm{\lambda}_- )\\
\notag \propto& exp\left\{-\frac{1}{2}\left(\bm{\lambda}_--\hat{\bm{\lambda}}_-\right)'\Phi^{-1}
\left(\bm{\lambda}_--\hat{\bm{\lambda}}_-\right) \right.\\
\notag &\left. -\frac{1}{2}\left(\bm{x}_M-\alpha_U \bm{x}^*_U-\alpha_Ig\left(\bm{\lambda}_-\right)\right)'\Omega^{-1}
\left(\bm{x}_M-\alpha_U \bm{x}^*_U-\alpha_Ig\left(\bm{\lambda}_-\right)\right)\right\} \\
\notag = & exp\left\{-\frac{1}{2}\left(\bm{\lambda}_--\hat{\bm{\lambda}}_-\right)'\Phi^{-1}
\left(\bm{\lambda}_--\hat{\bm{\lambda}}_-\right) \right.\\
\notag &\left. -\frac{1}{2}\left(\frac{1}{\alpha_N}\bm{x}_M-\frac{\alpha_U}{\alpha_N}\bm{x}^*_U-
\frac{\alpha_I}{\alpha_N}g\left(\bm{\lambda}_-\right)\right)'\Sigma_N^{-1}
\left(\frac{1}{\alpha_N}\bm{x}_M-\frac{\alpha_U}{\alpha_N}\bm{x}^*_U-
\frac{\alpha_I}{\alpha_N}g\left(\bm{\lambda}_-\right)\right)\right\},
\end{align}
by which we get Proposition
\ref{posterior}
with the definition of $F_1\left(\bm{\lambda}_-\right)$ and $F_2\left(\bm{\lambda}_- | \bm{\theta}\right)$ in \eqref{eq:11}.
\end{proof}

In \eqref{eq:11}, $F_1\left(\bm{\lambda}_-\right)$ is associated with backward-looking information, while $ F_2\left(\bm{\lambda}_- | \bm{\theta}\right)$ is a function of forward-looking information extracted from \eqref{eq:9}.
Here, although the variables we are interested in are mixture weights $\bm{\lambda}_-$, we add parameter $\bm{\theta}$ in $ F_2\left(\bm{\lambda}_- | \bm{\theta}\right)$ to emphasize that the forward-looking information is influenced by the three types of investors in the market, which are completely characterized by $\bm \theta$.

Although the explicit posterior distribution is given, it is unclear which distribution $\bm{\lambda}_- |\bm{x}_M$ follows due to the nonlinearity of $g\left(\bm{\lambda}_-\right)$. We propose a generalized combined forecasting model to maximize the posterior probability, which is equivalent to solving
\begin{equation}\label{eq:10}
\notag {\rm(P_C)}\
\min_{\bm{\lambda}_- \in \Gamma} F_1\left(\bm{\lambda}_-\right)+ F_2\left(\bm{\lambda}_- | \bm{\theta}\right),
\end{equation}
where
\begin{equation}\label{feasible_set}
\Gamma \triangleq\left\{\bm{\lambda}_-:\bm{1}'\bm{\lambda}_-\leq 1,\lambda_k\geq 0, k=1,\dots, m-1\right\}.\\
\end{equation}
Since $\Phi$ and $\Sigma_N$ are positive-definite matrices and $F_1\left(\bm{\lambda}_-\right)+F_2\left(\bm{\lambda}_- | \bm{\theta}\right)$ is a bounded continuous function over the feasible set $\Gamma$, there exists at least one optimal solution to problem ${\rm(P_C)}$.

\subsection{Characteristics of combined estimation}
\label{2.4}

In this section, we further analyze how the market shares, noise intensity, and the risk-aversion degree affect the combined estimation.
To do so, we define three types of estimations of mixture weights based on different information.
\begin{definition}\label{defin}
The estimation $\bm{\lambda}_-^B$ is called a backward-looking estimation if $\bm{\lambda}_-^B$ is an optimal solution to problem
\begin{equation}\label{backward}
\notag {\rm(P_B)}\ \mathop{\min}\limits_{\bm{\lambda}_-\in\Gamma} F_1\left(\bm{\lambda}_-\right);
\end{equation}
the estimation $\bm{\lambda}_-^F$ is called a forward-looking estimation if $\bm{\lambda}_-^F$ is a optimal solution to problem
\begin{equation}\label{forward}
\notag {\rm(P_F)}\ \mathop{\min}\limits_{\bm{\lambda}_-\in\Gamma} F_2\left(\bm{\lambda}_- | \bm{\theta}\right);
\end{equation}
and the estimation $\bm{\lambda}_-^C$ is called a combined estimation if $\bm{\lambda}_-^C$ is a optimal solution to problem
\begin{equation}\label{combined}
\notag {\rm(P_C)}\ \mathop{\min}\limits_{\bm{\lambda}_-\in\Gamma} F_1\left(\bm{\lambda}_-\right)+ F_2\left(\bm{\lambda}_- | \bm{\theta}\right).
\end{equation}
\end{definition}

We first provide the lemma below that is useful for further analysis. The proof of this lemma is based on the idea of
\cite{fukushima2001fundamentals}
(Chap. 3).

\begin{lemma}\label{para_opt}
Suppose that $h(\bm{\lambda}_-,\bm{\theta})$ is a continuous function on $\Gamma \times \Theta$, where $\Gamma$ is given in  \eqref{feasible_set} and $\Theta$ is given in \eqref{para_space}.
Define
\begin{equation}
\notag \psi( \bm{\theta}) \triangleq \inf \left\{ h(\bm{\lambda}_-,\bm{\theta}) | \bm{\lambda}_- \in \Gamma \right\}.
\end{equation}
Then, $\psi( \bm{\theta})$ is continuous.
\end{lemma}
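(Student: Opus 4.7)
The plan is to establish continuity of $\psi$ at an arbitrary fixed point $\bm{\theta}_0 \in \Theta$ by proving upper and lower semi-continuity separately, taking advantage of the crucial simplification that the feasible set $\Gamma$ does not depend on $\bm{\theta}$. This is essentially a specialization of Berge's maximum theorem, but the fact that the constraint correspondence is constant eliminates the need for hemicontinuity hypotheses, so the proof reduces to a direct sequential argument.

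First I would observe that $\Gamma \subset \mathbb{R}^{m-1}$ is compact: it is defined by the finitely many weak linear inequalities $\bm{1}'\bm{\lambda}_- \le 1$ and $\lambda_k \ge 0$, hence closed, and it is bounded since each $\lambda_k \in [0,1]$. Consequently, for every fixed $\bm{\theta} \in \Theta$ the continuous function $h(\cdot,\bm{\theta})$ attains its minimum on $\Gamma$, so the infimum defining $\psi(\bm{\theta})$ is attained; write $M(\bm{\theta}) \subset \Gamma$ for the nonempty argmin set. Now fix an arbitrary sequence $\bm{\theta}_n \to \bm{\theta}_0$ in $\Theta$.

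For the upper-semi-continuity half, pick any $\bm{\lambda}_0 \in M(\bm{\theta}_0)$. Since $\bm{\lambda}_0 \in \Gamma$ is feasible at every $\bm{\theta}_n$, we have $\psi(\bm{\theta}_n) \le h(\bm{\lambda}_0,\bm{\theta}_n)$; letting $n\to\infty$ and using continuity of $h$ in its second argument yields $\limsup_n \psi(\bm{\theta}_n) \le h(\bm{\lambda}_0,\bm{\theta}_0) = \psi(\bm{\theta}_0)$. For the lower-semi-continuity half, select a minimizer $\bm{\lambda}_n^* \in M(\bm{\theta}_n)$ for each $n$, so that $\psi(\bm{\theta}_n) = h(\bm{\lambda}_n^*,\bm{\theta}_n)$. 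I would then extract a subsequence along which $\psi(\bm{\theta}_{n_k})$ converges to $\liminf_n \psi(\bm{\theta}_n)$, and using compactness of $\Gamma$ pass to a further subsequence with $\bm{\lambda}_{n_{k_j}}^* \to \bm{\lambda}^* \in \Gamma$ (closedness of $\Gamma$ guarantees the limit lies in $\Gamma$). Joint continuity of $h$ on $\Gamma \times \Theta$ then gives
\begin{equation*}
\liminf_n \psi(\bm{\theta}_n) = \lim_j h(\bm{\lambda}_{n_{k_j}}^*,\bm{\theta}_{n_{k_j}}) = h(\bm{\lambda}^*,\bm{\theta}_0) \ge \psi(\bm{\theta}_0).
\end{equation*}
Combining the two inequalities yields $\lim_n \psi(\bm{\theta}_n) = \psi(\bm{\theta}_0)$, proving continuity at $\bm{\theta}_0$; since $\bm{\theta}_0$ was arbitrary, $\psi$ is continuous on $\Theta$.

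There is no real obstacle in this argument because the constraint set is fixed and compact. The only point requiring a bit of care is the lower-semi-continuity step: one should extract a subsequence realizing the $\liminf$ \emph{before} extracting a convergent subsequence of minimizers, so that the joint-continuity limit genuinely bounds $\liminf_n \psi(\bm{\theta}_n)$ from below rather than just the limit along some arbitrary subsequence.
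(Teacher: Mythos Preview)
Your proof is correct and follows essentially the same approach as the paper: both split continuity into upper and lower semi-continuity and exploit the compactness of the fixed feasible set $\Gamma$ together with joint continuity of $h$. The only cosmetic difference is that for lower semi-continuity you extract a convergent subsequence of minimizers and apply joint continuity directly, whereas the paper writes $\psi(\overline{\bm{\theta}}) \le h(\bm{\lambda}_-^k,\overline{\bm{\theta}}) - h(\bm{\lambda}_-^k,\bm{\theta}^k) + \psi(\bm{\theta}^k)$ and argues the difference vanishes; your version is arguably the cleaner execution.
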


\begin{proof}
Note that proving the continuity of $\psi( \bm{\theta})$ is equivalent to proving the lower and upper semi-continuity of $\psi( \bm{\theta})$. Specifically, we need to show that for any $\overline{\bm{\theta}} \in \Theta$ and sequence $\{\bm{\theta}^k\} \subseteq \Theta$, $\bm{\theta}^k \rightarrow \overline{\bm{\theta}}$, we have
\[
\psi(\overline{\bm{\theta}})
\leq
\liminf_{k \rightarrow \infty} \psi(\bm{\theta}^k)~
\text{and}~
\psi(\overline{\bm{\theta}})
\geq
\limsup_{k \rightarrow \infty} \psi(\bm{\theta}^k).
\]

First, we show that $\psi(\overline{\bm{\theta}})
\leq
\liminf\limits_{k \rightarrow \infty} \psi(\bm{\theta}^k)$. Since $h(\bm{\lambda}_-,\bm{\theta}) $ is continuous and $\Gamma$ is closed and bounded, for any $\{\bm{\theta}^k\}$, there exists sequence $\{\bm{\lambda}_-^k\}$ such that
\[
h(\bm{\lambda}_-^k,\bm{\theta}^k) = \psi(\bm{\theta}^k) .
\]
According to the definition of $\psi$, we have
\[
\psi(\overline{\bm{\theta}}) \leq
h(\bm{\lambda}_-^k,\overline{\bm{\theta}})
=
h(\bm{\lambda}_-^k,\overline{\bm{\theta}}) - h(\bm{\lambda}_-^k,\bm{\theta}^k) + \psi(\bm{\theta}^k).
\]
Furthermore, by the continuity of $h(\bm{\lambda}_-,\bm{\theta})$, we obtain
\[
\psi(\overline{\bm{\theta}}) \leq \liminf_{k \rightarrow \infty} \psi(\bm{\theta}^k),
\]
which gives the lower semi-continuity of $\psi(\bm{\theta})$ at $\overline{\bm{\theta}}$.

Now, we show that $\psi(\overline{\bm{\theta}})
\geq
\limsup\limits_{k \rightarrow \infty} \psi(\bm{\theta}^k)$. Since $h(\bm{\lambda}_-,\bm{\theta})$ is continuous on closed and bounded set $\Gamma$, for a given $\overline{\bm{\theta}}$, there exists $\overline{\bm{\lambda}}_- \in \Gamma$ such that
\[
h(\overline{\bm{\lambda}}_-,\overline{\bm{\theta}})
=
\psi(\overline{\bm{\theta}}) .
\]
Since $\Gamma$ is dense-in-itself, there exists a sequence $\{ \bm{\lambda}_-^k \} \subseteq  \Gamma$, $\bm{\lambda}_-^k \rightarrow \overline{\bm{\lambda}}_-$.  By the continuity of $h(\bm{\lambda}_-,\bm{\theta})$,
\[
\psi(\overline{\bm{\theta}}) = h(\overline{\bm{\lambda}}_-,\overline{\bm{\theta}})
=
\limsup_{k \rightarrow \infty} h(\bm{\lambda}_-^k,\bm{\theta}^k)
\geq
\limsup_{k \rightarrow \infty} \psi(\bm{\theta}^k),
\]
which gives the upper semi-continuity of $\psi(\bm{\theta})$ at $\overline{\bm{\theta}}$. Therefore, $\psi(\bm{\theta})$ is continuous at $\overline{\bm{\theta}}$.
\end{proof}

In the following analysis, we separately study the influence of three types of market parameters. For each type of parameters, we will isolate its effect by holding the other parameters constant. We first focus on the influence of market shares.

\begin{proposition}\label{prop2}
(i) In the market dominated by the noise investor, i.e., $\alpha_N\rightarrow 1^-$, if estimation $\bm{\lambda}_-^C$ solves ${\rm(P_C)}$, then $\bm{\lambda}_-^C$ solves ${\rm(P_B)}$. (ii) In the market dominated by the informed investor, i.e., $\alpha_I\rightarrow 1^-$, if estimation $\bm{\lambda}_-^C$ solves ${\rm(P_C)}$, then $\bm{\lambda}_-^C$ solves ${\rm(P_F)}$.
\end{proposition}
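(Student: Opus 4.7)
My plan is to interpret each claim as a limit statement: along any sequence $\{\bm{\theta}^k\} \subset \Theta$ with the prescribed coordinate tending to $1$, any accumulation point of the corresponding solutions $\bm{\lambda}_-^{C,k}$ of $\mathrm{(P_C)}$ (which exists by compactness of $\Gamma$) is optimal for the claimed reduced problem. The key mechanism is to isolate, by an appropriate rescaling, which of $F_1$ and $F_2$ is the dominant term in the combined objective, and then exploit the fact that when a family of continuous objectives on a compact set converges uniformly, its argmin set converges too. Lemma \ref{para_opt} operates in the background to guarantee continuity of the optimal values.

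\textbf{Part (i).} Using the explicit form of $F_2$ in \eqref{eq:11}, the market-clearing identity $\alpha_I+\alpha_U+\alpha_N=1$ forces $\alpha_I/\alpha_N \to 0$ and $\alpha_U/\alpha_N \to 0$ as $\alpha_N \to 1^-$. Since $g$ is continuous on the compact set $\Gamma$ (hence bounded) and $\bm{x}^*_U$ does not depend on $\bm{\lambda}_-$, $F_2(\cdot \mid \bm{\theta})$ converges uniformly on $\Gamma$ to the constant $\|\bm{x}_M\|^2_{\Sigma_N^{-1}}$. Combining this uniform convergence with the continuity of $F_1$ on $\Gamma$, the argmin of $F_1 + F_2$ tends to the argmin of $F_1$ alone, which is precisely $\mathrm{(P_B)}$.

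\textbf{Part (ii).} Multiplying the objective of $\mathrm{(P_C)}$ by the positive scalar $\alpha_N^2$ does not change the argmin, so $\mathrm{(P_C)}$ is equivalent to
\[
\min_{\bm{\lambda}_- \in \Gamma}\; \alpha_N^2 F_1(\bm{\lambda}_-) + \bigl\|\bm{x}_M - \alpha_U \bm{x}^*_U - \alpha_I g(\bm{\lambda}_-)\bigr\|^2_{\Sigma_N^{-1}}.
\]
As $\alpha_I \to 1^-$, both $\alpha_U$ and $\alpha_N$ tend to $0^+$, and boundedness of $F_1$ on the compact $\Gamma$ forces $\alpha_N^2 F_1 \to 0$ uniformly. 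The second term converges uniformly on $\Gamma$ to $\|\bm{x}_M - g(\bm{\lambda}_-)\|^2_{\Sigma_N^{-1}}$, which, up to the positive factor $1/\alpha_N^2$ that is irrelevant for the argmin, is exactly $F_2(\cdot \mid \bm{\theta})$ in the same limit. Hence the argmin of $\mathrm{(P_C)}$ converges to the argmin of $F_2$, i.e., $\mathrm{(P_F)}$.

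\textbf{Main obstacle.} The conceptually delicate step is turning uniform convergence of objectives into convergence of argmins, since the claim is about minimizers rather than about minimum values. The required fact is that if continuous $h^k$ converge uniformly to $h$ on a compact set, every accumulation point of any sequence of minimizers of $h^k$ lies in $\arg\min h$; this follows by a routine $\varepsilon$-argument and is the argmin-level companion of the value-function continuity given by Lemma \ref{para_opt}. A secondary technicality is that $\Theta$ is open thanks to the strict positivity of the market shares, so $\alpha_N=1$ and $\alpha_I=1$ are boundary limits, which is precisely why the proposition must be read as a statement along sequences in $\Theta$ rather than at a single point.
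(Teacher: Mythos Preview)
Your proposal is correct and follows essentially the same route as the paper: for (i) you observe that $F_2$ tends uniformly to the constant $\|\bm{x}_M\|^2_{\Sigma_N^{-1}}$ so only $F_1$ matters, and for (ii) you perform the same rescaling by $\alpha_N^2$ so that $\alpha_N^2 F_1\to 0$ uniformly and only the $\tilde F_2$ term survives. The paper carries out the identical steps but writes out explicit inequality chains (adding and subtracting the common limit, then invoking optimality of $\bm{\lambda}_-^C$ for $\mathrm{(P_C)}$) rather than packaging them as ``uniform convergence of objectives on a compact set implies convergence of argmins''; your formulation via accumulation points of minimizer sequences is in fact a slightly cleaner and more rigorous reading of the limit statement than the paper's, which tacitly treats $\bm{\lambda}_-^C$ as a single-valued function of $\bm{\theta}$.
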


\begin{proof}
(i)
Note that proving the optimality of $\bm{\lambda}_-^C$ to problem ${\rm(P_B)}$ as $\alpha_N \to 1^-$ is equivalent to proving that $\lim\limits_{\alpha_N \to 1^-} F_1\left(\bm{\lambda}_-^C  \right)$ exists, and for any $\bm{\lambda}_-\in\Gamma$
\[
\lim\limits_{\alpha_N \to 1^-} F_1\left(\bm{\lambda}_-^C  \right) \leq \lim\limits_{\alpha_N \to 1^-} F_1\left(\bm{\lambda}_- \right).
\]

First, we demonstrate the existence of $\lim\limits_{\alpha_N \rightarrow 1^-} F_1\left(\bm{\lambda}_-^C  \right)$.
Since $\alpha_I,\alpha_U \rightarrow 0^+$ as $\alpha_N \rightarrow 1^-$ and $g(\bm{\lambda}_-^C )$ is bounded, we have
\[
\lim\limits_{\alpha_N \rightarrow 1^-} F_2\left(\bm{\lambda}_-^C | \bm{\theta} \right)
=
 \lim\limits_{\alpha_N \rightarrow 1^-} \left\Vert\frac{1}{\alpha_N}\bm{x}_M-\frac{\alpha_U}{\alpha_N}\bm{x}^*_U-
\frac{\alpha_I}{\alpha_N}g\left(\bm{\lambda}_-^C\right)\right\Vert^2_{\Sigma_N^{-1}}
=
\left\Vert \bm{x}_M \right\Vert^2_{\Sigma_N^{-1}},
\]
which yields the existence of $\lim\limits_{\alpha_N \rightarrow 1^-} F_2\left(\bm{\lambda}_-^C | \bm{\theta} \right)$.
Since the objective function in problem ${\rm(P_C)}$ is continuous and $\bm{\lambda}_-^C$ is the optimal solution,
Lemma \ref{para_opt}
provides the existence of $\lim\limits_{\alpha_N \rightarrow 1^-} \left\{F_1\left(\bm{\lambda}_-^C\right)+ F_2\left(\bm{\lambda}_-^C | \bm{\theta}\right)\right\}$. Taken together, we have the existence of $\lim\limits_{\alpha_N \rightarrow 1^-} F_1\left(\bm{\lambda}_-^C  \right)$ by the addition operation of limits.

Now, we demonstrate the optimality of $\bm{\lambda}_-^C$ to problem ${\rm(P_B)}$ as $\alpha_N \to 1^-$. $\forall \bm{\lambda}_-\in\Gamma$, we have
\begin{align}
\notag \lim_{\alpha_N \rightarrow 1^-} F_1\left(\bm{\lambda}_-^C\right) &=
\lim_{\alpha_N \rightarrow 1^-} F_1\left(\bm{\lambda}_-^C\right)
+ \left\Vert \bm{x}_M \right\Vert^2_{\Sigma_N^{-1}}
- \left\Vert \bm{x}_M \right\Vert^2_{\Sigma_N^{-1}}\\
\notag &=
\lim_{\alpha_N \rightarrow 1^-} F_1\left(\bm{\lambda}_-^C\right)
+ \lim_{\alpha_N \rightarrow 1^-} F_2\left(\bm{\lambda}_-^C | \bm{\theta}\right)
-  \lim_{\alpha_N \rightarrow 1^-} F_2\left(\bm{\lambda}_- | \bm{\theta}\right)\\
\notag &=
\lim_{\alpha_N \rightarrow 1^-} \left\{ F_1\left(\bm{\lambda}_-^C\right)
+  F_2\left(\bm{\lambda}_-^C | \bm{\theta}\right) \right\}
-  \lim_{\alpha_N \rightarrow 1^-} F_2\left(\bm{\lambda}_- | \bm{\theta}\right)\\
\notag &\leq
\lim_{\alpha_N \rightarrow 1^-} \left\{ F_1\left(\bm{\lambda}_-\right)
+  F_2\left(\bm{\lambda}_- | \bm{\theta}\right) \right\}
-  \lim_{\alpha_N \rightarrow 1^-} F_2\left(\bm{\lambda}_- | \bm{\theta}\right)\\
\notag &=
\lim_{\alpha_N \rightarrow 1^-} F_1\left(\bm{\lambda}_-\right) ,
\end{align}
where the second equality uses the boundedness of $g(\bm{\lambda}_-)$ and the fact that $ \lim\limits_{\alpha_N \rightarrow 1^-} F_2\left(\bm{\lambda}_- | \bm{\theta}\right) =  \left\Vert \bm{x}_M \right\Vert^2_{\Sigma_N^{-1}}$, and the inequality uses the optimality of $\bm{\lambda}_-^C$ to problem ${\rm(P_C)}$.
Therefore, $\bm{\lambda}_-^C$ is optimal to problem ${\rm(P_B)}$ as $\alpha_N \to 1^-$.

(ii)
Note that problem ${\rm(P_F)}$ is equivalent to
\[
\mathop{\min}_{\bm{\lambda}_-\in\Gamma}  \tilde{F}_2\left(\bm{\lambda}_-| \bm{\theta}\right),
\]
where $
\tilde{F}_2\left(\bm{\lambda}_-| \bm{\theta}\right) \triangleq
\left\Vert \bm{x}_M- \alpha_U\bm{x}^*_U-
\alpha_Ig\left(\bm{\lambda}_-\right)\right\Vert^2_{\Sigma_N^{-1}}$, and problem ${\rm(P_C)}$ is equivalent to
\[
\mathop{\min}_{\bm{\lambda}_-\in\Gamma} \left\{ \alpha_N^2 F_1\left(\bm{\lambda}_-\right)+ \tilde{F}_2\left(\bm{\lambda}_-| \bm{\theta}\right) \right\}.
\]

Proving the optimality of $\bm{\lambda}_-^C$ to problem ${\rm(P_F)}$ as $\alpha_I \to 1^-$ is equivalent to proving that $\lim\limits_{\alpha_I \to 1^-} \tilde{F}_2\left(\bm{\lambda}_-^C  \right)$ exists, and for any $\bm{\lambda}_-\in\Gamma$
\[
\lim\limits_{\alpha_I \to 1^-} \tilde{F}_2\left(\bm{\lambda}_-^C  \right)
\leq
\lim\limits_{\alpha_I \to 1^-} \tilde{F}_2\left(\bm{\lambda}_- \right).
\]

First, we demonstrate the existence of $\lim\limits_{\alpha_I \to 1^-} \tilde{F}_2\left(\bm{\lambda}_-^C\right)$. Since $\alpha_N,\alpha_U \rightarrow 0^+$ as $\alpha_I \rightarrow 1^-$ and $F_1\left(\bm{\lambda}^C_-\right)$ is bounded, we have that $\lim\limits_{\alpha_I \rightarrow 1^-} \alpha_N^2 F_1\left(\bm{\lambda}_-^C\right) = 0$; Since $\alpha_N^2 F_1\left(\bm{\lambda}^C_-\right)+ \tilde{F}_2\left(\bm{\lambda}^C_-| \bm{\theta}\right)$ is continuous and $\bm{\lambda}^C_-$ is the optimal solution,
Lemma \ref{para_opt}
shows that $\lim\limits_{\alpha_I \rightarrow 1}\left\{ \alpha_N^2 F_1\left(\bm{\lambda}^C_-\right)+ \tilde{F}_2\left(\bm{\lambda}^C_-| \bm{\theta}\right)\right\}$ exists.  Taken together, we have the existence of $\lim\limits_{\alpha_I \to 1^-} \tilde{F}_2\left(\bm{\lambda}_-^C\right)$ by the addition operation of limits.

Now, we demonstrate the optimality of $\bm{\lambda}_-^C$ to problem ${\rm(P_F)}$ as $\alpha_I \to 1^-$. $\forall \bm{\lambda}_-\in\Gamma$, we have
\begin{align}
\notag
\lim_{\alpha_I \rightarrow 1^-}  \tilde{F}_2\left(\bm{\lambda}_-^C| \bm{\theta}\right)
&=
\lim_{\alpha_I \rightarrow 1^-}  \tilde{F}_2\left(\bm{\lambda}_-^C| \bm{\theta}\right)
+ \lim_{\alpha_I \rightarrow 1^-}\alpha_N^2  F_1\left(\bm{\lambda}_-^C\right)
- \lim_{\alpha_I \rightarrow 1^-}\alpha_N^2  F_1\left(\bm{\lambda}_-\right) \\
\notag &= \lim_{\alpha_I \rightarrow 1^-}\left\{ \alpha_N^2  F_1\left(\bm{\lambda}_-^C\right) + \tilde{F}_2\left(\bm{\lambda}_-^C| \bm{\theta}\right)\right\}
- \lim_{\alpha_I \rightarrow 1^-}\alpha_N^2  F_1\left(\bm{\lambda}_-\right)
\\
\notag &\leq
\lim_{\alpha_I \rightarrow 1^-}\left\{ \alpha_N^2  F_1\left(\bm{\lambda}_-\right) + \tilde{F}_2\left(\bm{\lambda}_-| \bm{\theta}\right)\right\}
- \lim_{\alpha_I \rightarrow 1^-}\alpha_N^2  F_1\left(\bm{\lambda}_-\right)
\\
\notag &=
\lim_{\alpha_I \rightarrow 1^-}  \tilde{F}_2\left(\bm{\lambda}_-| \bm{\theta}\right),
\end{align}
where the first equality uses the boundedness of $F_1\left(\bm{\lambda}_-\right)$ and the fact that $\lim\limits_{\alpha_I \rightarrow 1^-}\alpha_N^2  F_1\left(\bm{\lambda}_-\right) = 0$, and the inequality uses the optimality of $\bm{\lambda}_-^C$ to problem ${\rm(P_C)}$.
Therefore, $\bm{\lambda}_-^C$ is optimal to problem ${\rm(P_F)}$ as $\alpha_I \to 1^-$.
\end{proof}

\begin{remark}
Proposition \ref{prop2}
does not provide the analysis of the market dominated by the less-informed investor.
Since $\alpha_I, \alpha_N\rightarrow 0^+$ as $\alpha_U\rightarrow 1^-$, $\forall \bm{\lambda}_- \in \Gamma$, the objective function of ${\rm(P_C)}$
\[
\lim_{\alpha_U\rightarrow 1^-} \left\{\alpha_N^2 F_1\left(\bm{\lambda}_-\right)+ \tilde{F}_2\left(\bm{\lambda}_-| \bm{\theta}\right)\right\}
= \left\Vert \bm{x}_M- \bm{x}^*_U\right\Vert^2_{\Sigma_N^{-1}},
\]
which reduces to a trivial case since the objective function is a constant independent of $\bm{\lambda}_-$ and the feasible set $\Gamma$ is the optimal solution set.

\end{remark}

\begin{remark}
If there is only the rational informed investor, as shown in case (ii) of
Proposition \ref{prop2} ,
the market equilibrium \eqref{eq:9} can be expressed as
\begin{equation*}
\lim_{\alpha_I\rightarrow 1^-} \bm{\mu}^C = \lim_{\alpha_I\rightarrow 1^-}  \bm{\mu}^F
=\notag \bm{\mu}
=\delta_I\Sigma\bm{x}_M,
\end{equation*}
which is the mean vector of the prior distribution of expected returns in \cite{black_global_1992}.
\end{remark}

We then focus on the influence of the noise trading intensity.
Denote $\sigma_{\min}=\min\left\{\sigma_1,\dots,\sigma_n\right\}$ and $\sigma_{\max}=\max\left\{\sigma_1,\dots,\sigma_n\right\}$ as the minimum and maximum noise trading intensity, respectively.

\begin{proposition}\label{prop3}
(i) In the market with infinite noise trading intensity, i.e., $\sigma_{\min}\rightarrow \infty$, if estimation $\bm{\lambda}_-^C$ solves ${\rm(P_C)}$, then $\bm{\lambda}_-^C$ solves ${\rm(P_B)}$.
(ii) In the market with zero noise trading intensity, i.e., $\sigma_{\max}\rightarrow 0^+$, if estimation $\bm{\lambda}_-^C$ solves ${\rm(P_C)}$, then $\bm{\lambda}_-^C$ solves ${\rm(P_F)}$.
\end{proposition}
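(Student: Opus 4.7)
I would prove Proposition \ref{prop3} by adapting the scheme of Proposition \ref{prop2}, since the two results share the same structure---an extreme value of a market parameter forces the combined estimator onto one of the pure estimators---with the only difference being which parameter enters the quadratic form $F_2$ and how. The main tools will be the compactness of $\Gamma$, the continuity of $g$, the boundedness of $F_1$ and of the residual vector $\bm{x}_M-\alpha_U\bm{x}^*_U-\alpha_I g(\bm{\lambda}_-)$ on $\Gamma$, and Lemma \ref{para_opt}.

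For part (i), I would first observe that $\Sigma_N^{-1}=\operatorname{diag}(1/\sigma_1,\ldots,1/\sigma_n)$ with every $\sigma_i\geq\sigma_{\min}$, so the operator norm $\Vert\Sigma_N^{-1}\Vert\leq 1/\sigma_{\min}\to 0$. Since $g$ is continuous on the compact set $\Gamma$ (hence bounded) and $\bm{x}_M,\bm{x}^*_U$ are fixed, this forces $F_2(\bm{\lambda}_-|\bm{\theta})\to 0$ uniformly on $\Gamma$ as $\sigma_{\min}\to\infty$. Lemma \ref{para_opt} applied to the continuous objective of ${\rm(P_C)}$ gives the existence of $\lim_{\sigma_{\min}\to\infty}(F_1+F_2)(\bm{\lambda}_-^C)$, and combined with the uniform vanishing of $F_2$ also the existence of $\lim F_1(\bm{\lambda}_-^C)$. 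The optimality of $\bm{\lambda}_-^C$ for ${\rm(P_B)}$ then follows from the telescoping inequality used in Proposition \ref{prop2}(i): for any $\bm{\lambda}_-\in\Gamma$,
\[
F_1(\bm{\lambda}_-^C)=(F_1+F_2)(\bm{\lambda}_-^C)-F_2(\bm{\lambda}_-^C)\leq (F_1+F_2)(\bm{\lambda}_-)-F_2(\bm{\lambda}_-^C)\to F_1(\bm{\lambda}_-),
\]
as $\sigma_{\min}\to\infty$, since both $F_2$ terms vanish in the limit.

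For part (ii) the situation is dual: $\Sigma_N^{-1}$ now has eigenvalues diverging like $1/\sigma_{\max}$, so $F_2$ dominates $F_1$. Mirroring the rescaling device used in Proposition \ref{prop2}(ii), I would multiply the objective of ${\rm(P_C)}$ by $\sigma_{\max}$, producing the equivalent problem
\[
\min_{\bm{\lambda}_-\in\Gamma}\ \sigma_{\max}F_1(\bm{\lambda}_-)+\bar{F}_2(\bm{\lambda}_-|\bm{\theta}),\qquad \bar{F}_2(\bm{\lambda}_-|\bm{\theta})\triangleq\sigma_{\max}F_2(\bm{\lambda}_-|\bm{\theta}),
\]
whose optimizer is still $\bm{\lambda}_-^C$, while ${\rm(P_F)}$ is likewise equivalent to $\min_{\bm{\lambda}_-\in\Gamma}\bar{F}_2(\bm{\lambda}_-|\bm{\theta})$. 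Since $F_1$ is bounded on $\Gamma$, $\sigma_{\max}F_1(\bm{\lambda}_-)\to 0$ uniformly on $\Gamma$. Lemma \ref{para_opt} applied to the rescaled objective yields existence of $\lim_{\sigma_{\max}\to 0^+}\bar{F}_2(\bm{\lambda}_-^C|\bm{\theta})$, and the telescoping inequality analogous to part (i)---with the roles of $F_1$ and $F_2$ swapped and $F_1$ replaced by $\sigma_{\max}F_1$---gives $\lim\bar{F}_2(\bm{\lambda}_-^C|\bm{\theta})\leq \lim\bar{F}_2(\bm{\lambda}_-|\bm{\theta})$ for every $\bm{\lambda}_-\in\Gamma$, which is precisely the optimality of $\bm{\lambda}_-^C$ for ${\rm(P_F)}$.

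The main delicacy is in part (ii): the rescaled weighted norm $\bar{F}_2$ involves weights $\sigma_{\max}/\sigma_i\in[1,\sigma_{\max}/\sigma_{\min}]$ which need not have individual limits as $\sigma_{\max}\to 0^+$, so one cannot analyze $\bar{F}_2(\bm{\lambda}_-^C)$ and $\bar{F}_2(\bm{\lambda}_-)$ separately without appealing to Lemma \ref{para_opt}. The telescoping trick---cancelling the $\bar{F}_2$ and $\sigma_{\max}F_1$ terms before passing to the limit---precisely circumvents the need for these individual limits, and is exactly the device the authors already deploy in Proposition \ref{prop2}(ii).
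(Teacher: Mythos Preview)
Your proposal is correct and follows essentially the same route as the paper's proof: the same uniform vanishing of $F_2$ in part (i), the same rescaling by $\sigma_{\max}$ in part (ii) (the paper calls your $\bar{F}_2$ by $\hat{F}_2$ and writes its weight matrix as $\tilde{\Sigma}_N^{-1}=\operatorname{diag}(\sigma_{\max}/\sigma_1,\ldots,\sigma_{\max}/\sigma_n)$), the same appeal to Lemma~\ref{para_opt} for existence of the optimal-value limits, and the same telescoping inequality. Your closing remark on the possible non-convergence of the individual ratios $\sigma_{\max}/\sigma_i$ is a nice observation that the paper leaves implicit.
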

\begin{proof}
(i)
Note that proving the optimality of $\bm{\lambda}_-^C$ to problem ${\rm(P_B)}$ as $\sigma_{\min}\rightarrow \infty$ is equivalent to proving that $\lim\limits_{\sigma_{\min}\rightarrow \infty} F_1\left(\bm{\lambda}_-^C  \right)$ exists, and for any $\bm{\lambda}_-\in\Gamma$
\[
\lim\limits_{\sigma_{\min}\rightarrow \infty} F_1\left(\bm{\lambda}_-^C  \right) \leq \lim\limits_{\sigma_{\min}\rightarrow \infty} F_1\left(\bm{\lambda}_- \right).
\]

First, we demonstrate the existence of $\lim\limits_{\sigma_{\min}\rightarrow \infty} F_1\left(\bm{\lambda}_-^C  \right)$.
Since $\sigma_i \rightarrow  \infty,~ i=1,\cdots,n$ as $\sigma_{\min}\rightarrow \infty$ and $\frac{1}{\alpha_N}\bm{x}_M-\frac{\alpha_U}{\alpha_N}\bm{x}^*_U-
\frac{\alpha_I}{\alpha_N}g\left(\bm{\lambda}_-\right)$ is bounded, we have
$\lim\limits_{\sigma_{\min}\rightarrow \infty} F_2\left(\bm{\lambda}_-^C| \bm{\theta}\right) = 0$;
Lemma \ref{para_opt}
demonstrates the existence of $\lim\limits_{\sigma_{\min}\rightarrow \infty} \left\{F_1\left(\bm{\lambda}_-^C\right)+ F_2\left(\bm{\lambda}_-^C | \bm{\theta}\right)\right\}$. Taken together, we have the existence of $\lim\limits_{\sigma_{\min}\rightarrow \infty} F_1\left(\bm{\lambda}_-^C  \right)$ by the addition operation of limits.

Now, we demonstrate the optimality of $\bm{\lambda}_-^C$ to problem ${\rm(P_B)}$ as $\sigma_{\min}\rightarrow \infty$. For any given $\bm{\lambda}_-\in\Gamma$, we have
\begin{align}
\notag \lim_{\sigma_{\min}\rightarrow \infty} F_1\left(\bm{\lambda}_-^C\right) &=
\lim_{\sigma_{\min}\rightarrow \infty} F_1\left(\bm{\lambda}_-^C\right)
+ \lim_{\sigma_{\min}\rightarrow \infty} F_2\left(\bm{\lambda}_-^C | \bm{\theta}\right)
-  \lim_{\sigma_{\min}\rightarrow \infty} F_2\left(\bm{\lambda}_- | \bm{\theta}\right)\\
\notag &=
\lim_{\sigma_{\min}\rightarrow \infty} \left\{ F_1\left(\bm{\lambda}_-^C\right)
+  F_2\left(\bm{\lambda}_-^C | \bm{\theta}\right) \right\}
-  \lim_{\sigma_{\min}\rightarrow \infty} F_2\left(\bm{\lambda}_- | \bm{\theta}\right)\\
\notag &\leq
\lim_{\sigma_{\min}\rightarrow \infty} \left\{ F_1\left(\bm{\lambda}_-\right)
+  F_2\left(\bm{\lambda}_- | \bm{\theta}\right) \right\}
-  \lim_{\sigma_{\min}\rightarrow \infty} F_2\left(\bm{\lambda}_- | \bm{\theta}\right)\\
\notag &=
\lim_{\sigma_{\min}\rightarrow \infty} F_1\left(\bm{\lambda}_-\right) ,
\end{align}
where the first equality uses the fact that $ \lim\limits_{\sigma_{\min}\rightarrow \infty} F_2\left(\bm{\lambda}_- | \bm{\theta}\right) =  0$, and the inequality uses the optimality of $\bm{\lambda}_-^C$ to problem ${\rm(P_C)}$.
Therefore, $\bm{\lambda}_-^C$ is optimal to problem ${\rm(P_B)}$ as $\sigma_{\min}\rightarrow \infty$.

(ii) Note that problem ${\rm(P_F)}$ is equivalent to
\[
 \mathop{\min}_{\bm{\lambda}_-\in\Gamma}  \hat{F}_2\left(\bm{\lambda}_-| \bm{\theta} \right),
\]
where
$\hat{F}_2\left(\bm{\lambda}_- | \bm{\theta}\right) \triangleq
\left\Vert \frac{1}{\alpha_N}\bm{x}_M-
\frac{\alpha_U}{\alpha_N}\bm{x}^*_U-
\frac{\alpha_I}{\alpha_N}
g\left(\bm{\lambda}_-\right)\right\Vert^2_{\tilde{\Sigma}_N^{-1}}$ with $\tilde{\Sigma}_N^{-1} \triangleq diag\left(\frac{\sigma_{\max}}{\sigma_1},\dots,\frac{\sigma_{\max}}{\sigma_n}\right)$, and problem ${\rm(P_C)}$ is equivalent to
\[
  \mathop{\min}_{\bm{\lambda}_-\in\Gamma} \left\{ \sigma_{\max} F_1\left(\bm{\lambda}_-\right)+ \hat{F}_2\left(\bm{\lambda}_-| \bm{\theta}\right) \right\}.
\]

Proving the optimality of $\bm{\lambda}_-^C$ to problem ${\rm(P_F)}$ as $\sigma_{\max} \to 0^+$ is equivalent to proving that $\lim\limits_{\sigma_{\max} \to 0^+} \hat{F}_2\left(\bm{\lambda}_-^C  \right)$ exists, and for any $\bm{\lambda}_-\in\Gamma$
\[
\lim\limits_{\sigma_{\max} \to 0^+} \hat{F}_2\left(\bm{\lambda}_-^C  \right)
\leq
\lim\limits_{\sigma_{\max} \to 0^+} \hat{F}_2\left(\bm{\lambda}_- \right).
\]

First, we demonstrate the existence of
$\lim\limits_{\sigma_{\max} \to 0^+} \hat{F}_2\left(\bm{\lambda}_-^C\right)$. Since  $F_1\left(\bm{\lambda}^C_-\right)$ is bounded, we have
$\lim\limits_{\sigma_{\max} \to 0^+} \sigma_{\max} F_1\left(\bm{\lambda}_-^C\right) = 0$. Since $\sigma_{\max} F_1\left(\bm{\lambda}^C_-\right)+ \hat{F}_2\left(\bm{\lambda}^C_-| \bm{\theta}\right)$ is continuous and $\bm{\lambda}^C_-$ is the optimal solution,
Lemma \ref{para_opt}
shows that $\lim\limits_{\sigma_{\max} \to 0^+}\left\{ \sigma_{\max} F_1\left(\bm{\lambda}^C_-\right)+ \hat{F}_2\left(\bm{\lambda}^C_-| \bm{\theta}\right)\right\}$ exists.  Taken together, we have the existence of $\lim\limits_{\sigma_{\max} \to 0^+} \hat{F}_2\left(\bm{\lambda}_-^C\right)$ by the addition operation of limits.

Now, we demonstrate the optimality of $\bm{\lambda}_-^C$ to problem ${\rm(P_F)}$ as $\sigma_{\max} \to 0^+$. $\forall \bm{\lambda}_-\in\Gamma$, we have
\begin{align}
\notag
\lim_{\sigma_{\max} \to 0^+}  \hat{F}_2\left(\bm{\lambda}_-^C| \bm{\theta}\right)
&=
\lim_{\sigma_{\max} \to 0^+}  \hat{F}_2\left(\bm{\lambda}_-^C| \bm{\theta}\right)
+ \lim_{\sigma_{\max} \to 0^+}\sigma_{\max}  F_1\left(\bm{\lambda}_-^C\right)
- \lim_{\sigma_{\max} \to 0^+}\sigma_{\max}  F_1\left(\bm{\lambda}_-\right) \\
\notag &= \lim_{\sigma_{\max} \to 0^+}\left\{ \sigma_{\max} F_1\left(\bm{\lambda}_-^C\right) + \hat{F}_2\left(\bm{\lambda}_-^C| \bm{\theta}\right)\right\}
- \lim_{\sigma_{\max} \to 0^+}\sigma_{\max} F_1\left(\bm{\lambda}_-\right)
\\
\notag &\leq
\lim_{\sigma_{\max} \to 0^+}\left\{ \sigma_{\max} F_1\left(\bm{\lambda}_-\right) + \hat{F}_2\left(\bm{\lambda}_-| \bm{\theta}\right)\right\}
- \lim_{\sigma_{\max} \to 0^+}\sigma_{\max}  F_1\left(\bm{\lambda}_-\right)
\\
\notag &=
\lim_{\sigma_{\max} \to 0^+}  \hat{F}_2\left(\bm{\lambda}_-| \bm{\theta}\right),
\end{align}
where the first equality uses the boundedness of $F_1\left(\bm{\lambda}_-\right)$ and the fact that $\lim\limits_{\sigma_{\max} \to 0^+}\sigma_{\max} F_1\left(\bm{\lambda}_-\right) = 0$, and the inequality uses the optimality of $\bm{\lambda}_-^C$ to problem ${\rm(P_C)}$.
Therefore, $\bm{\lambda}_-^C$ is optimal to problem ${\rm(P_F)}$ as $\sigma_{\max} \to 0^+$.
\end{proof}

Finally, we focus on the influence of the informed investor's risk-aversion degree.
\begin{proposition}\label{prop4}
(i) In the market where the informed investor is extremely conservative, i.e., $\delta_I\rightarrow \infty$,
if estimation $\bm{\lambda}_-^C$ solves ${\rm(P_C)}$, then $\bm{\lambda}_-^C$ solves ${\rm(P_B)}$.
(ii) In the market where the informed investor is extremely aggressive,  i.e., $\delta_I\rightarrow 0^+$, if estimation $\bm{\lambda}_-^C$ solves ${\rm(P_C)}$, then $\bm{\lambda}_-^C$ solves ${\rm(P_F)}$.
\end{proposition}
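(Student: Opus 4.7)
The plan is to mirror the strategy already used in Propositions \ref{prop2} and \ref{prop3}: in each case I would rescale the objective of ${\rm(P_C)}$ by a positive, $\bm{\lambda}_-$-independent factor (which preserves the argmin) so that the ``unwanted'' summand either converges to a constant or vanishes in the limit, leaving a problem that coincides with either ${\rm(P_B)}$ or ${\rm(P_F)}$, and then close with the same telescoping inequality used earlier in the paper.

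For case (i), I would exploit that $g(\bm{\lambda}_-)$ contains the prefactor $1/\delta_I$ while the remaining matrix-vector expression is continuous and hence bounded on the compact set $\Gamma$. Therefore $g(\bm{\lambda}_-) \to \bm{0}$ uniformly as $\delta_I \to \infty$, and $\lim_{\delta_I \to \infty} F_2(\bm{\lambda}_- \mid \bm{\theta}) = \left\|\tfrac{1}{\alpha_N}\bm{x}_M - \tfrac{\alpha_U}{\alpha_N}\bm{x}^*_U\right\|^2_{\Sigma_N^{-1}} =: C$ is a constant independent of $\bm{\lambda}_-$. The proof then runs as in Proposition \ref{prop3}(i): existence of $\lim F_1(\bm{\lambda}_-^C)$ follows from Lemma \ref{para_opt} applied to the rescaled objective (using the reparameterization $1/\delta_I$ so that continuity extends to the limit point), together with the existence of $\lim F_2(\bm{\lambda}_-^C \mid \bm{\theta}) = C$; and for any $\bm{\lambda}_- \in \Gamma$ the chain $\lim F_1(\bm{\lambda}_-^C) = \lim\{F_1(\bm{\lambda}_-^C) + F_2(\bm{\lambda}_-^C \mid \bm{\theta})\} - C \le \lim\{F_1(\bm{\lambda}_-) + F_2(\bm{\lambda}_- \mid \bm{\theta})\} - C = \lim F_1(\bm{\lambda}_-)$ gives optimality of $\bm{\lambda}_-^C$ for ${\rm(P_B)}$.

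For case (ii), the factor $1/\delta_I$ in $g(\bm{\lambda}_-)$ now diverges, so I would rescale first. Setting $h(\bm{\lambda}_-) \triangleq \delta_I \, g(\bm{\lambda}_-)$ yields a function of $\bm{\lambda}_-$ alone, and I would work with $\hat F_2(\bm{\lambda}_- \mid \bm{\theta}) \triangleq \delta_I^2 F_2(\bm{\lambda}_- \mid \bm{\theta}) = \left\|\tfrac{\delta_I}{\alpha_N}\bm{x}_M - \tfrac{\delta_I \alpha_U}{\alpha_N}\bm{x}^*_U - \tfrac{\alpha_I}{\alpha_N} h(\bm{\lambda}_-)\right\|^2_{\Sigma_N^{-1}}$. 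Because $\delta_I^2 > 0$, ${\rm(P_F)}$ and ${\rm(P_C)}$ share optimal solutions with $\min_{\bm{\lambda}_- \in \Gamma} \hat F_2$ and $\min_{\bm{\lambda}_- \in \Gamma} \{\delta_I^2 F_1 + \hat F_2\}$ respectively, and since $F_1$ is continuous on the compact set $\Gamma$ and therefore bounded, $\delta_I^2 F_1 \to 0$ uniformly as $\delta_I \to 0^+$. The same telescoping argument as in Proposition \ref{prop2}(ii) or \ref{prop3}(ii) then delivers $\lim_{\delta_I \to 0^+} \hat F_2(\bm{\lambda}_-^C \mid \bm{\theta}) \le \lim_{\delta_I \to 0^+} \hat F_2(\bm{\lambda}_- \mid \bm{\theta})$ for every $\bm{\lambda}_- \in \Gamma$, i.e., optimality of $\bm{\lambda}_-^C$ for ${\rm(P_F)}$.

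The only step demanding genuine care is justifying the existence of the limits at the optimum $\bm{\lambda}_-^C$, since $\delta_I \to \infty$ and $\delta_I \to 0^+$ both correspond to boundary behaviour of the parameter space $\Theta$. This is handled exactly as in the prior propositions, by invoking Lemma \ref{para_opt} after the elementary continuous reparameterization described above (working with $1/\delta_I$ in case (i) and with $\delta_I$ itself extended to $0$ in case (ii)), which extends the rescaled objective continuously up to the limiting point. Once this existence is in hand, the rest of the argument is the now-standard telescoping manipulation combined with the optimality of $\bm{\lambda}_-^C$ for ${\rm(P_C)}$.
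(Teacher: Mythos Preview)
Your proposal is correct and follows essentially the same route as the paper's own proof: the paper likewise writes $\tilde g(\bm{\lambda}_-) \triangleq \delta_I\, g(\bm{\lambda}_-)$ in case~(i) to show $F_2$ tends to a $\bm{\lambda}_-$-independent constant, and in case~(ii) rescales by $\delta_I^2$ to obtain the equivalent problem $\min\{\delta_I^2 F_1 + \bar F_2\}$ with $\delta_I^2 F_1 \to 0$, before closing with the identical telescoping inequality via Lemma~\ref{para_opt}. Your explicit reparameterization remark (working with $1/\delta_I$ in~(i) and extending $\delta_I$ to $0$ in~(ii)) is, if anything, slightly more careful than the paper about why Lemma~\ref{para_opt} applies at the boundary.
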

\begin{proof}
 For convenience, denote $\tilde{g}\left(\bm{\lambda}_-\right) \triangleq \delta_I g\left(\bm{\lambda}_-\right)$.

(i)
Note that proving the optimality of $\bm{\lambda}_-^C$ to problem ${\rm(P_B)}$ as $\delta_I\rightarrow \infty$ is equivalent to proving that $\lim\limits_{\delta_I \rightarrow \infty} F_1\left(\bm{\lambda}_-^C  \right)$ exists, and for any $\bm{\lambda}_-\in\Gamma$
\[
\lim\limits_{\delta_I \rightarrow \infty} F_1\left(\bm{\lambda}_-^C  \right) \leq \lim\limits_{\delta_I \rightarrow \infty} F_1\left(\bm{\lambda}_- \right).
\]

First, we demonstrate the existence of $\lim\limits_{\delta_I \rightarrow \infty} F_1\left(\bm{\lambda}_-^C  \right)$.
Since $\tilde{g}\left(\bm{\lambda}_-^C\right)$ is bounded, we have
\[
\lim\limits_{\delta_I\rightarrow \infty} F_2\left(\bm{\lambda}_-^C | \bm{\theta}\right)
=
\lim\limits_{\delta_I\rightarrow \infty} \left\Vert \frac{1}{\alpha_N}\bm{x}_M
- \frac{\alpha_U}{\alpha_N}\bm{x}^*_U
- \frac{1}{\delta_I}\frac{\alpha_I}{\alpha_N} \tilde{g}\left(\bm{\lambda}_-^C\right)
\right\Vert^2_{\Sigma_N^{-1}}
=  \left\Vert \frac{1}{\alpha_N}\bm{x}_M
- \frac{\alpha_U}{\alpha_N}\bm{x}^*_U
\right\Vert^2_{\Sigma_N^{-1}} ,
\]
Lemma\ref{para_opt}
demonstrates the existence of $\lim\limits_{\delta_I \rightarrow \infty} \left\{F_1\left(\bm{\lambda}_-^C\right)+ F_2\left(\bm{\lambda}_-^C | \bm{\theta}\right)\right\}$. Taken together, we have the existence of $\lim\limits_{\delta_I \rightarrow \infty} F_1\left(\bm{\lambda}_-^C  \right)$ by the addition operation of limits.

Now, we demonstrate the optimality of $\bm{\lambda}_-^C$ to problem ${\rm(P_B)}$ as $\delta_I \rightarrow \infty$. $\forall \bm{\lambda}_-\in\Gamma$, we have
\begin{align}
\notag \lim_{\delta_I \rightarrow \infty} F_1\left(\bm{\lambda}_-^C\right) &=
\lim_{\delta_I \rightarrow \infty} F_1\left(\bm{\lambda}_-^C\right)
+ \left\Vert \frac{1}{\alpha_N}\bm{x}_M
- \frac{\alpha_U}{\alpha_N}\bm{x}^*_U
\right\Vert^2_{\Sigma_N^{-1}}
-  \left\Vert \frac{1}{\alpha_N}\bm{x}_M
- \frac{\alpha_U}{\alpha_N}\bm{x}^*_U
\right\Vert^2_{\Sigma_N^{-1}}\\
\notag &=
\lim_{\delta_I \rightarrow \infty} F_1\left(\bm{\lambda}_-^C\right)
+ \lim_{\delta_I \rightarrow \infty} F_2\left(\bm{\lambda}_-^C | \bm{\theta}\right)
-  \lim_{\delta_I \rightarrow \infty} F_2\left(\bm{\lambda}_- | \bm{\theta}\right)\\
\notag &=
\lim_{\delta_I \rightarrow \infty} \left\{ F_1\left(\bm{\lambda}_-^C\right)
+  F_2\left(\bm{\lambda}_-^C | \bm{\theta}\right) \right\}
-  \lim_{\delta_I \rightarrow \infty} F_2\left(\bm{\lambda}_- | \bm{\theta}\right)\\
\notag &\leq
\lim_{\delta_I \rightarrow \infty} \left\{ F_1\left(\bm{\lambda}_-\right)
+  F_2\left(\bm{\lambda}_- | \bm{\theta}\right) \right\}
-  \lim_{\delta_I \rightarrow \infty} F_2\left(\bm{\lambda}_- | \bm{\theta}\right)\\
\notag &=
\lim_{\delta_I \rightarrow \infty} F_1\left(\bm{\lambda}_-\right) ,
\end{align}
where the first equality uses the boundedness of $\tilde{g}\left(\bm{\lambda}_-\right)$ and the fact that $ \lim\limits_{\sigma_{\min}\rightarrow \infty} F_2\left(\bm{\lambda}_- | \bm{\theta}\right) =  \left\Vert \frac{1}{\alpha_N}\bm{x}_M
- \frac{\alpha_U}{\alpha_N}\bm{x}^*_U
\right\Vert^2_{\Sigma_N^{-1}}$, and the inequality uses the optimality of $\bm{\lambda}_-^C$ to problem ${\rm(P_C)}$.
Therefore, $\bm{\lambda}_-^C$ is optimal to problem ${\rm(P_B)}$ as $\delta_I\rightarrow \infty$.

(ii) Note that problem ${\rm(P_F)}$ is equivalent to
\[
 \mathop{\min}_{\bm{\lambda}_-\in\Gamma}  \bar{F}_2\left(\bm{\lambda}_-| \bm{\theta}\right),
\]
where $\bar{F}_2\left(\bm{\lambda}_-| \bm{\theta}\right) \triangleq \left\Vert \delta_I \left(\frac{1}{\alpha_N}\bm{x}_M- \frac{\alpha_U}{\alpha_N}\bm{x}^*_U\right)-\frac{\alpha_I}{\alpha_N}\tilde{g}\left(\bm{\lambda}_-\right)\right\Vert^2_{\Sigma_N^{-1}}$, and problem ${\rm(P_C)}$ is equivalent to
\[
\mathop{\min}_{\bm{\lambda}_-\in\Gamma} \left\{ \delta_I^2 F_1\left(\bm{\lambda}_-\right)+ \bar{F}_2\left(\bm{\lambda}_-| \bm{\theta}\right) \right\}.
\]

Proving the optimality of $\bm{\lambda}_-^C$ to problem ${\rm(P_F)}$ as $\delta_I \to 0^+$ is equivalent to proving that $\lim\limits_{\delta_I \to 0^+} \bar{F}_2\left(\bm{\lambda}_-^C  \right)$ exists, and for any $\bm{\lambda}_-\in\Gamma$
\[
\lim\limits_{\delta_I \to 0^+} \bar{F}_2\left(\bm{\lambda}_-^C  \right)
\leq
\lim\limits_{\delta_I  \to 0^+} \bar{F}_2\left(\bm{\lambda}_- \right).
\]

First, we demonstrate the existence of $\lim\limits_{\delta_I \to 0^+} \bar{F}_2\left(\bm{\lambda}_-^C\right)$. Since  $F_1\left(\bm{\lambda}^C_-\right)$ is bounded, we have $\lim\limits_{\delta_I \to 0^+}  \delta_I^2 F_1\left(\bm{\lambda}_-^C\right) = 0$. Since $ \delta_I^2 F_1\left(\bm{\lambda}^C_-\right)+ \bar{F}_2\left(\bm{\lambda}^C_-| \bm{\theta}\right)$ is continuous and $\bm{\lambda}^C_-$ is the optimal solution,
$\lim\limits_{\delta_I \to 0^+}\left\{ \delta_I^2 F_1\left(\bm{\lambda}^C_-\right)+ \bar{F}_2\left(\bm{\lambda}^C_-| \bm{\theta}\right)\right\}$ exists by Lemma \ref{para_opt}.
Taken together, we have the existence of $\lim\limits_{\delta_I \to 0^+} \bar{F}_2\left(\bm{\lambda}_-^C\right)$ by the addition operation of limits.

Now, we demonstrate the optimality of $\bm{\lambda}_-^C$ to problem ${\rm(P_F)}$ as $\delta_I \to 0^+$. $\forall \bm{\lambda}_-\in\Gamma$, we have
\begin{align}
\notag
\lim_{\delta_I \to 0^+}  \bar{F}_2\left(\bm{\lambda}_-^C| \bm{\theta}\right)
&=
\lim_{\delta_I \to 0^+}  \bar{F}_2\left(\bm{\lambda}_-^C| \bm{\theta}\right)
+ \lim_{\delta_I \to 0^+}\delta_I^2  F_1\left(\bm{\lambda}_-^C\right)
- \lim_{\delta_I \to 0^+}\delta_I^2   F_1\left(\bm{\lambda}_-\right) \\
\notag &= \lim_{\delta_I \to 0^+}\left\{ \delta_I F_1\left(\bm{\lambda}_-^C\right) + \bar{F}_2\left(\bm{\lambda}_-^C| \bm{\theta}\right)\right\}
- \lim_{\delta_I \to 0^+}\delta_I^2  F_1\left(\bm{\lambda}_-\right)
\\
\notag &\leq
\lim_{\delta_I \to 0^+}\left\{ \delta_I^2  F_1\left(\bm{\lambda}_-\right) + \bar{F}_2\left(\bm{\lambda}_-| \bm{\theta}\right)\right\}
- \lim_{\delta_I \to 0^+}\delta_I^2  F_1\left(\bm{\lambda}_-\right)
\\
\notag &=
\lim_{\delta_I \to 0^+}  \bar{F}_2\left(\bm{\lambda}_-| \bm{\theta}\right),
\end{align}
where the first equality uses the boundedness of $F_1\left(\bm{\lambda}_-\right)$ and the fact that $\lim\limits_{\delta_I \to 0^+}\delta_I^2 F_1\left(\bm{\lambda}_-\right) = 0$, and the inequality uses the optimality of $\bm{\lambda}_-^C$ to problem ${\rm(P_C)}$.
Therefore, $\bm{\lambda}_-^C$ is optimal to problem ${\rm(P_F)}$ as $\delta_I \to 0^+$.
\end{proof}

\begin{remark}
If the risk-aversion coefficient of informed investors tends to zero, the combined estimation and forward-looking estimation of excess expected returns tend to zero, which corresponds to a risk-neutral world.
With \Cref{eq:9}, we have
\begin{equation*}
\lim_{\delta_I \rightarrow 0^+} \bm{\mu}^C = \lim_{\delta_I \rightarrow 0^+} \bm{\mu}^F
=\lim_{\delta_I \rightarrow 0^+}\frac{\delta_I}{\alpha_I}\Sigma
\left(\bm{x}_M-\alpha_U\bm{x}^*_U-\alpha_N\bm{x}_N\right)
=0.
\end{equation*}
\end{remark}

In summary,
Proposition \ref{prop2} to Proposition \ref{prop4}
indicate that the combined estimation is closer to the forward-looking estimation in markets where the market share of the informed investor is higher, the noise trading intensity is lower, or the risk aversion degree of informed traders is lower. In these markets, the informed investor plays a more dominant role, hence the market portfolio is more heavily influenced by the informed trader and efficient forward-looking information can be recovered. Conversely, in markets where the market portfolio is dominated by the less-informed investor and the noise investor, the forward-looking information is invalid hence the combined estimation tends to be closer to the backward-looking estimation.
Therefore,
Proposition \ref{prop2} to Proposition \ref{prop4}
demonstrate the self-adaptivity of the proposed combined estimation.

\subsection{Explicit form of combined estimation for a special case}\label{sec 2.2.3}
The portfolio decision is usually more sensitive to the mean vector than the covariance matrix. In this section, we mainly focus on mean estimation and assume that the covariance matrix can be exactly estimated by historical data, i.e., $\Sigma_U =\Sigma_I = \Sigma$. We also assume the homogeneous risk aversion for the informed and the less-informed investor, i.e., $\delta_U=\delta_I=\delta$.
 Denote
 \[
 \bm{q}_0 \triangleq \alpha_U\bm{x}_U^*+\alpha_I\left(\delta\Sigma\right)^{-1}\bm{\mu}_m \mbox{
 and } P \triangleq \alpha_I(\delta\Sigma)^{-1}\left(\bm{\mu}_1-\bm{\mu}_m,\dots,\bm{\mu}_{m-1}-\bm{\mu}_m\right).
 \]
Then, we can rewrite
\eqref{eq:9}
as
\begin{equation}\label{eq:12}
\notag \bm{x}_M=\bm{q}_0+P\bm{\lambda}_-+\bm{\varepsilon},
\end{equation}
and $F_1\left(\bm{\lambda}_-\right)$ and $F_2\left(\bm{\lambda}_- | \bm{\theta}\right)$ becomes
\[
F_1\left(\bm{\lambda}_-\right)
=
\left\Vert\bm{\lambda}_--\hat{\bm{\lambda}}_-\right\Vert^2_{\Phi^{-1}}\mbox{ and }
 F_2\left(\bm{\lambda}_- | \bm{\theta}\right)
=
\left\Vert \bm{x}_M - \bm{q}_0 - P\bm{\lambda}_- \right\Vert^2_{\Omega^{-1}}.
\]
For simplicity, denote
$$
G \triangleq \begin{pmatrix}
\bm{1}' \\
-\mathrm{I}_{m-1}
\end{pmatrix}_{m \times (m-1)}
\text{and} ~
h \triangleq \begin{pmatrix}
1 \\
\bm{0}
\end{pmatrix}_{m \times 1},
$$
where $\mathrm{I}_{m-1}$ is an $(m-1)\times(m-1)$ identical matrix. Problem ${\rm(P_C)}$ can be rewritten as
\begin{align}
\notag    \min_{\bm{\lambda}_-} &~
    \left\Vert\bm{\lambda}_--\hat{\bm{\lambda}}_-\right\Vert^2_{\Phi^{-1}}
    +
    \left\Vert \bm{x}_M - \bm{q}_0 - P\bm{\lambda}_- \right\Vert^2_{\Omega^{-1}} \\
\notag    \text{s.t.} &~ G\bm{\lambda}_-\leq h.
\end{align}

The following proposition gives the explicit form of the combined estimation.
\begin{proposition}\label{eq:prop1}
Suppose that $\bm{\lambda}_- \sim\mathcal{N}\left(\hat{\bm{\lambda}}_-,\Phi\right)$ and $\bm{x}_M \sim \mathcal{N}(\bm{q}_0+P\bm{\lambda}_-,\Omega)$ and both $\Phi$ and $\Omega$ are invertible. Then, there exists $\bm{\nu}^*\geq 0$, such that the combined estimation $\bm{\lambda}_-^C \in \Gamma$ takes the form
\[
\notag \bm{\lambda}_-^C = \left(\Phi^{-1}+P'\Omega^{-1}P\right)^{-1}
\left(\Phi^{-1}\hat{\bm{\lambda}}_-+P'\Omega^{-1}({\bm{x}_M}-\bm{q}_0)\right) -\frac12 \left( \left(\Phi^{-1}+P'\Omega^{-1}P\right)^{-1}G'\right)\bm{\nu}^*
\]
where $\bm{\nu}^*$ satisfies
\[
 {\bm{\nu}^*}'(G\bm{\lambda}_-^C - h) = 0.
\]
\end{proposition}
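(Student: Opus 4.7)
The plan is to recognize this as a convex quadratic program with linear inequality constraints and then invoke the KKT conditions, which are both necessary and sufficient here.

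First I would verify convexity and well-posedness of the QP. The objective $F_1(\bm{\lambda}_-) + F_2(\bm{\lambda}_- \mid \bm{\theta})$ expands into a quadratic form whose Hessian is $2(\Phi^{-1} + P'\Omega^{-1}P)$; since $\Phi^{-1}$ is positive definite by hypothesis and $P'\Omega^{-1}P$ is positive semidefinite, this Hessian is positive definite, so the problem is strictly convex and the matrix $\Phi^{-1}+P'\Omega^{-1}P$ appearing in the claimed formula is invertible. The feasible set $\Gamma$ is a nonempty polytope (e.g.\ $\bm{\lambda}_- = 0$ satisfies $G\bm{\lambda}_- = 0 < h$ strictly in the first component and with equality in the rest, and more importantly $\Gamma$ has nonempty relative interior), so Slater's constraint qualification holds. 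Hence the KKT conditions are necessary and sufficient for optimality and a unique minimizer $\bm{\lambda}_-^C$ with associated multiplier $\bm{\nu}^* \geq 0$ exists.

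Next I would form the Lagrangian $L(\bm{\lambda}_-,\bm{\nu}) = F_1 + F_2 + \bm{\nu}'(G\bm{\lambda}_- - h)$ and compute its gradient in $\bm{\lambda}_-$. Using $\nabla F_1 = 2\Phi^{-1}(\bm{\lambda}_- - \hat{\bm{\lambda}}_-)$ and $\nabla F_2 = -2P'\Omega^{-1}(\bm{x}_M - \bm{q}_0 - P\bm{\lambda}_-)$, the stationarity condition $\nabla_{\bm{\lambda}_-}L = 0$ becomes
\begin{equation*}
2(\Phi^{-1}+P'\Omega^{-1}P)\bm{\lambda}_-^C = 2\Phi^{-1}\hat{\bm{\lambda}}_- + 2P'\Omega^{-1}(\bm{x}_M - \bm{q}_0) - G'\bm{\nu}^*.
\end{equation*}
Multiplying by $\tfrac{1}{2}(\Phi^{-1}+P'\Omega^{-1}P)^{-1}$ (which exists by the convexity step) yields the displayed closed form for $\bm{\lambda}_-^C$. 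The remaining KKT conditions give primal feasibility $\bm{\lambda}_-^C \in \Gamma$, dual feasibility $\bm{\nu}^* \geq 0$, and complementary slackness ${\bm{\nu}^*}'(G\bm{\lambda}_-^C - h) = 0$, which is precisely the last identity in the statement.

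There is no real obstacle here; the argument is essentially a bookkeeping exercise in convex quadratic programming. The only point worth a careful sentence is the invertibility of $\Phi^{-1}+P'\Omega^{-1}P$ and the justification that Slater's condition (or equivalently the linearity of the constraints together with nonempty feasibility) lets us use KKT as a characterization rather than merely a necessary condition, so that the multiplier $\bm{\nu}^*$ is guaranteed to exist.
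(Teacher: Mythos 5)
Your proposal is correct and follows essentially the same route as the paper: both treat ${\rm(P_C)}$ as a convex quadratic program over the polytope $\Gamma$, invoke the existence of an interior feasible point (constraint qualification) to get KKT multipliers, and read the closed form off the stationarity condition together with complementary slackness. The only cosmetic difference is that the paper first completes the square, writing the objective as $\left\Vert \bm{\lambda}_- - \Lambda^{-1}\bm{\kappa}\right\Vert^2_{\Lambda}$ with $\Lambda = \Phi^{-1}+P'\Omega^{-1}P$, whereas you differentiate $F_1+F_2$ directly; the resulting stationarity equations are identical (and note that $\bm{\lambda}_-=\bm{0}$ is not a strict interior point, but your fallback to the nonempty interior of $\Gamma$, or simply the linearity of the constraints, suffices).
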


\begin{proof}
Note that
\begin{align}
\notag
F_1\left(\bm{\lambda}_-\right) + F_2\left(\bm{\lambda}_- | \bm{\theta}\right)
\propto
\left\Vert \bm{\lambda}_--\Lambda^{-1}\bm{\kappa} \right\Vert^2_{\Lambda},
\end{align}
where $\Lambda \triangleq \Phi^{-1}+P'\Omega^{-1}P$ and
$\bm{\kappa} \triangleq \Phi^{-1}\hat{\bm{\lambda}}_-+P'\Omega^{-1}\left({\bm{x}_M}-\bm{q}_0\right)$. Hence, problem ${\rm(P_C)}$ is equivalent to
\[
\min_{\bm{\lambda}_- \in \Gamma} \left\Vert \bm{\lambda}_--\Lambda^{-1}\bm{\kappa} \right\Vert^2_{\Lambda}
\]
which is a convex quadratic program, and the corresponding Lagrangian function is defined as
\[
\mathcal{L}(\bm{\lambda}_-, \bm{\nu}) \triangleq
\left\Vert \bm{\lambda}_--\Lambda^{-1}\bm{\kappa} \right\Vert^2_{\Lambda}
+
\bm{\nu}' (G\bm{\lambda}_- - h),
\]
where $\bm{\nu}$ is the Lagrangian multiplier vector for the inequality constraints. It is clear that there exists an interior point in feasible set $\Gamma$; hence, there exist $\bm{\lambda}_-^C \in \Gamma$ and $\bm{\nu}^* \geq 0$ satisfying the Karush–Kuhn–Tucker (KKT) conditions
\citep{boyd2004convex}
\begin{align}
\notag 2\Lambda(\bm{\lambda}_-^C - \Lambda^{-1}\bm{\kappa} ) + G'\bm{\nu}^* = 0,\\
\notag {\bm{\nu}^*}'(G\bm{\lambda}_-^C - h) = 0.
\end{align}
By the convexity of problem ${\rm(P_C)}$, $\bm{\lambda}_-^C$ is a optimal solution.
\end{proof}

\Cref{eq:prop1} shows that if $\bm{\lambda}_-^C = \left(\Phi^{-1}+P'\Omega^{-1}P\right)^{-1}
\left(\Phi^{-1}\hat{\bm{\lambda}}_-+P'\Omega^{-1}({\bm{x}_M}-\bm{q}_0)\right)$ is an interior point of $\Gamma$, then we have $\bm{\nu}^* = \bm{0}$ from the  complementary slackness and $\bm{\lambda}_-^C$ is optimal to problem ${\rm(P_C)}$. Therefore, we can also solve the combined estimation by Bayesian estimation. With the prior distribution and forward-looking information, the posterior distribution is still a normal distribution
\[
\bm{\lambda}_-|{\bm{x}_M} \sim\mathcal{N}
\left(\left(\Phi^{-1}+P'\Omega^{-1}P\right)^{-1}
\left(\Phi^{-1}\hat{\bm{\lambda}}_-+P'\Omega^{-1}({\bm{x}_M}-\bm{q}_0)\right),\left(\Phi^{-1}+P'\Omega^{-1}P\right)^{-1}\right).
\]
The proof is given in the Appendix.
Since the mean and mode of a normal distribution are equal, the mean of the posterior distribution is the same as the maximum a posterior estimation in problem ${\rm(P_C)}$.

Notably, the mathematical expression for the posterior mean
\[
\left(\Phi^{-1}+P'\Omega^{-1}P\right)^{-1}
\left(\Phi^{-1}\hat{\bm{\lambda}}_-+P'\Omega^{-1}({\bm{x}_M}-\bm{q}_0)\right)
\]
is the same as in \cite{black_global_1992}
and
\cite{zhu_portfolio_2014}
in form. Here, we highlight the differences between this work and theirs. In the Black-Litterman model, the prior distribution and investor views  focus directly on expected returns, while the forward-looking information in this paper is indirectly related to expected returns and the covariance matrix through mixture weights. Additionally, in extracting  market-implied information, we consider a more general equilibrium with three types of investors, while they simply impose assumptions of perfect rationality and homogeneous expectation. Compared to
\cite{zhu_portfolio_2014},
who also model the asset returns with the Gaussian mixture distribution, the present model has fundamental differences in the information for estimating the mixture weights. Their approach focuses on incorporating investors' views, while we study the market equilibrium to extract market-implied information.

\begin{remark}
Recall that we analyze how model parameters influence the combined estimation in
subsection \ref{2.4},
here, we verify
Proposition \ref{prop2} to Proposition \ref{prop4}
in an explicit form.
Take the influence of noise intensity $\sigma_{\min}$ as an example.

If $\bm{\lambda}_-^C = \left(\Phi^{-1}+P'\Omega^{-1}P\right)^{-1}\left(\Phi^{-1}\hat{\bm{\lambda}}_-+P'\Omega^{-1}({\bm{x}_M}-\bm{q}_0)\right)$
is optimal to problem ${\rm(P_C)}$, as $\sigma_{\min}$ tends to infinity, we have
\begin{align}
\notag \lim_{\sigma_{\min} \to \infty} \bm{\lambda}_-^C
&=
\lim_{\sigma_{\min} \to \infty}
 \left(\Phi^{-1}+\alpha_{N}^{-2}P'\Sigma_N^{-1}P\right)^{-1}\left(\Phi^{-1}\hat{\bm{\lambda}}_- + \alpha_{N}^{-2}P'\Sigma_N^{-1}({\bm{x}_M}-\bm{q}_0)\right) = \Phi \Phi^{-1} \hat{\bm{\lambda}} = \hat{\bm{\lambda}},
\end{align}
which is optimal to problem ${\rm(P_B)}$, consistent with
Proposition \ref{prop3}.
For this special case, the influences of other parameters can be also verified explicitly, and we can easily observe the self-adaptivity of the combined estimation.
\end{remark}

\section{Simulations}\label{Sec3}

In this section, we perform Monte Carlo simulations to  confirm the self-adaptive capabilities of the combined estimation discussed in subsection \ref{2.4} and show the significance of information integration in forecasting.
To do so, we first introduce the parameter settings and sample generation, then compare the predictive effects of the three types of estimations.

Return samples are generated based on the Gaussian mixture distribution, where different components can be interpreted as different market states and mixture weights are the probabilities of the corresponding states.
Following
\cite{zhu_portfolio_2014},
we take $m=3$ components, representing the bull, oscillator, and bear markets, and $n=10$ risky assets in the experiments.
The total length of the experiment period is 400 days, divided into four segments of length 200, 80, 50, and 70
days.
The mixture coefficients of the four segments are described as
$\bm{\lambda}_1=\left(0,0.5,0.5\right)'$, $\bm{\lambda}_2=\left(0.7,0.3,0\right)'$, $\bm{\lambda}_3=\left(0,0.5,0.5\right)'$ and $\bm{\lambda}_4=\left(0.2,0.8,0\right)'$,
respectively.\footnote{For example, $\bm{\lambda}_1$ implies a 50/50 chance of an oscillator or a bear market occurring in the first segment.} We begin our experiment on day 111, and the three representative investors rebalance their holdings every 5 days.

The Gaussian components are specified according to the single-factor model. Specifically, the excess returns $\bm{r}_k$ of the $k^{th}$ component are
\begin{equation}\label{eq:25}
\notag \bm{r}_k=\bm{\alpha}_k+\bm{\beta}_k r^M_k+\bm{\varepsilon}_k,\ k=1,2,3
\end{equation}
where $r^M_k$ is the excess return of the market index (factor), $\bm{\alpha}_k$ is the return that cannot be explained by the factor,
$\bm{\beta}_k$ is the sensitivity to the factor and
$\bm{\varepsilon}_k$ is the error term, following a normal distribution and satisfying
$\mathbb {E}\left(\bm{\varepsilon}_k\right)=\bm{0}$, $\mathbb {C}ov\left(r^M_k,\varepsilon_k^i\right)=\mathbb {C}ov\left(\varepsilon_k^i,\varepsilon_k^j\right)=0,\  i,j=1,\dots,n$.
The mean vector and the covariance matrix of the $k^{th}$ component  are given by
\begin{equation}\label{eq:26}
\notag \bm{\mu}_k=\bm{\alpha}_k+\mathbb{E}\left(r^M_k\right)\bm{\beta}_k,\
\Sigma_k=\mathbb{V}ar\left(r^M_k\right)\bm{\beta}_k\bm{\beta}_k'+
\mathbb{V}ar\left(\bm{\varepsilon}_k\right),\ k=1,2,3.
\end{equation}
We set the parameters of mean $\mathbb{E}\left(r^M_k\right)$ and variance $\mathbb{V}ar\left(r^M_k\right), k = 1,2,3$ as $(0.004,0,-0.008)$ and $(0.001,0.0002,0.002)$ for the bull, oscillator, and bear markets.
Elements of $\bm{\alpha}_k$ are the samples of $10^{-5}\tau$, and elements of $\bm{\beta}_k$ are the samples of $1.2-0.6\tau$, where $\tau\sim\mathcal{N}\left(0,1\right)$. The diagonal elements of $\mathbb{V}ar\left(\bm{\varepsilon}_k\right)$ are the samples of $0.002\tilde{\tau}$, where $\tilde{\tau}\sim\mathcal{U}\left(0,1\right)$. The above-mentioned $\tau$ and $\tilde{\tau}$ are independent of each other. With mixture coefficients and the density function \Cref{eq:7}, we can generate samples of asset returns in four investment segments. The risk-free return is set as $r_f=3.5\%$.

Then, we generate samples of the market portfolio. According to \eqref{eq:6}, the market portfolio is a linear combination of the holdings of the three types of investors.
The informed investor knows the true distribution of asset returns, while the less-informed investor estimates the mixture coefficients by using the EM algorithm
\citep{mclachlan2007algorithm}
with a $\Delta t = 30$-day rolling estimation window.
Both informed and less-informed investors determine their portfolios by solving \eqref{eq:2}.
In addition, the holdings of the noise investor are drawn from $\mathcal{N}(\bm{0},\Sigma_N)$, $\Sigma_N = diag(\sigma,\cdots,\sigma)$. The noise-trading intensity $\sigma$ is chosen according to the $1- c $ confidence interval  $\left[z_{\frac{c}{2}}\sigma,z_{1-\frac{c}{2}}\sigma\right]$, where $z_{c}$ is the $c$ quantile of the standard normal distribution.

To verify the characteristics in subsection \ref{2.4},
we conduct the following experiments to compare the three types of forecast approaches. To measure estimation errors, root mean square errors (RMSE) for turning points for the market and the whole investment period are calculated
\begin{equation}
\notag \eta_i=\frac{1}{k}\sum_{k=1}^m\sqrt{\frac{1}{N}\sum_{t=1}^{N}
\left(\hat{\lambda}_{k,t}^i-\lambda_{k,t}\right)^2},\ i=B,C,F
\end{equation}
where $\lambda_{k,t}$ is the true mixture weight of the $k^{th}$ mixture coefficient at time $t$, $\hat{\lambda}^i_{k,t}$ is the $i$ estimation, $N$ is the number of samples in the test period, B, C, and F represent backward-looking, combined, and forward-looking approaches, respectively.

{\bf Experiment 1: market shares.} In this experiment, we vary the market share of the informed investor at time 0 from 0.001 to 0.990. In addition, we set the risk-aversion coefficients of the informed investor and the less-informed investor as $\delta_I= \delta_U= 2.500$ and set the $95\%$ confidence interval of the holding of the noise investor as $\left[-1,1\right]$.

{\bf Experiment 2: noise-trading intensity.} In this experiment, we vary the noise intensity $\sigma$ such that the $95\%$ confidence interval of the holding of the noise investor range from
$\left[-0.001, 0.001 \right]$ to $\left[-100,100\right]$
In addition, we set the informed investor's market share at time 0 as $\alpha_{I,0}=0.400$ and set the risk-aversion coefficients of the informed investor and the less-informed investor as $\delta_I= \delta_U= 2.500$.

{\bf Experiment 3: risk-aversion attitude.} In this experiment, we vary the risk-aversion coefficient of the informed investor from \num{2.777e-3} to \num{2.250e3}
In addition, we set the informed investor's market share at time 0 as $\alpha_{I,0}=0.400$, set the $95\%$ confidence interval of the holding of the noise investor as $\left[-1,1\right]$, and set the risk-aversion coefficient of the less-informed investor as $\delta_U=2.500$.

\begin{table}[htbp]
  \centering
  \captionsetup{labelsep = period, labelfont = bf, font = {stretch=1}}
  \caption{Simulation tests: Comparisons of backward-looking, combined, and forward-looking estimations.
  We report the estimation errors of three types of estimations at the turning points and over the investment periods, where $\eta_B,\eta_C,\eta_F$ denote the RMSE for the backward-looking, combined, and forward-looking approaches, respectively.}\label{table:1}
    \begin{tabular}{ccccccccccc}
    \hline
\multicolumn{3}{c}{\multirow{2}[4]{*}{Influencing factors}} &       & \multicolumn{7}{c}{RMSE} \\
\cmidrule{5-11}    \multicolumn{3}{r}{}  &       & \multicolumn{3}{c}{Turning points} &       & \multicolumn{3}{c}{Investment periods} \\
\cmidrule{1-11}
\multicolumn{6}{l}{\underline{Experiment 1:  market shares}} \\
    $\alpha_{I,0}$ & $\alpha_{U,0}$ & $\alpha_{N,0}$ &       & $\eta_B$  & $\eta_C$  & $\eta_F$  &       &  $\eta_B$  & $\eta_C$  & $\eta_F$ \\
\hline
    0.001  & 0.009  & 0.990  &       & \textbf{0.421} & \textbf{0.420} & 0.540  &       & \textbf{0.168} & \textbf{0.168} & 0.570  \\
    0.100  & 0.800  & 0.100  &       & 0.421  & \textbf{0.077}  & 0.140  &       & 0.168  & \textbf{0.049}  & 0.378  \\
    0.200  & 0.700  & 0.100  &       & 0.421  & \textbf{0.037}  & 0.060  &       & 0.168  & \textbf{0.038}  & 0.233  \\
    0.300  & 0.600  & 0.100  &       & 0.421  & \textbf{0.014}  & 0.029  &       & 0.168  & \textbf{0.031}  & 0.145  \\
    0.400  & 0.500  & 0.100  &       & 0.421  & \textbf{0.004}  & 0.018  &       & 0.168  & \textbf{0.025}  & 0.055  \\
    0.500  & 0.400  & 0.100  &       & 0.421  & \textbf{0.000}  & 0.014  &       & 0.168  & \textbf{0.022}  & 0.047  \\
    0.600  & 0.300  & 0.100  &       & 0.421  & \textbf{0.004}  & 0.012  &       & 0.168  & \textbf{0.020}  & 0.040  \\
    0.700  & 0.200  & 0.100  &       & 0.421  & \textbf{0.004}  & 0.008  &       & 0.168  & \textbf{0.018}  & 0.036  \\
    0.800  & 0.100  & 0.100  &       & 0.421  & \textbf{0.004}  & 0.008  &       & 0.168  & \textbf{0.017}  & 0.032  \\
    0.990  & 0.009  & 0.001  &       & 0.421  & \textbf{0.000} & \textbf{0.000} &       & 0.168  & \textbf{0.000} & \textbf{0.000} \\
\hline
\multicolumn{6}{l}{\underline{Experiment 2: noise-trading intensity}} \\%
    \multicolumn{3}{c}{$\sigma$} &       & $\eta_B$  & $\eta_C$  & $\eta_F$  &       & $\eta_B$  & $\eta_C$  & $\eta_F$ \\
\hline
  \multicolumn{3}{c}{\num{2.603e3}} &       & \textbf{0.421} & \textbf{0.420} & 0.619  &       & \textbf{0.168} & \textbf{0.169} & 0.625  \\
    \multicolumn{3}{c}{\num{2.603e1}} &       & 0.421  & \textbf{0.417}  & 0.619  &       & \textbf{0.168}  & \textbf{0.168}  & 0.602  \\
    \multicolumn{3}{c}{\num{2.603e-1}} &       & 0.421  & \textbf{0.004}  & 0.018  &       & 0.168  & \textbf{0.025}  & 0.055  \\
    \multicolumn{3}{c}{\num{2.603e-3}} &       & 0.421  & \textbf{0.007}  & 0.017  &       & 0.168  & \textbf{0.019}  & \textbf{0.019}  \\
    \multicolumn{3}{c}{\num{2.603e-5}} &       & 0.421  & \textbf{0.000} & \textbf{0.000} &       & 0.168  & \textbf{0.001} & \textbf{0.001} \\
\hline
\multicolumn{6}{l}{\underline{Experiment 3: risk-aversion attitude}} \\ %
    \multicolumn{3}{c}{$\delta_I$} &       & $\eta_B$  & $\eta_C$  & $\eta_F$  &       & $\eta_B$  & $\eta_C$  & $\eta_F$ \\
\hline
    \multicolumn{3}{c}{\num{2.250e3}} &       & \textbf{0.421} & \textbf{0.420} & 0.540  &       & \textbf{0.168} & \textbf{0.168} & 0.570  \\
    \multicolumn{3}{c}{\num{7.500e1}} &       & 0.421  & 0.372  & \textbf{0.188}  &       & 0.168  & \textbf{0.138}  & 0.399  \\
    \multicolumn{3}{c}{\num{2.500e0}} &       & 0.421  & \textbf{0.004}  & 0.018  &       & 0.168  & \textbf{0.025}  & 0.055  \\
    \multicolumn{3}{c}{\num{8.333e-2}} &       & 0.421  & \textbf{0.000}  & \textbf{0.000}  &       & 0.168  & \textbf{0.002}  & \textbf{0.001}  \\
    \multicolumn{3}{c}{\num{2.777e-3}} &       & 0.421  & \textbf{0.000} & \textbf{0.000} &       & 0.168  & \textbf{0.000} & \textbf{0.000} \\
\hline
    \end{tabular}%
\end{table}

\begin{figure}[H]
\centering
\subcaptionbox{$\alpha_{I,0}=0.001$\label{subfig:a}}
    {
    \includegraphics[height=1.8in,width=1.8in]{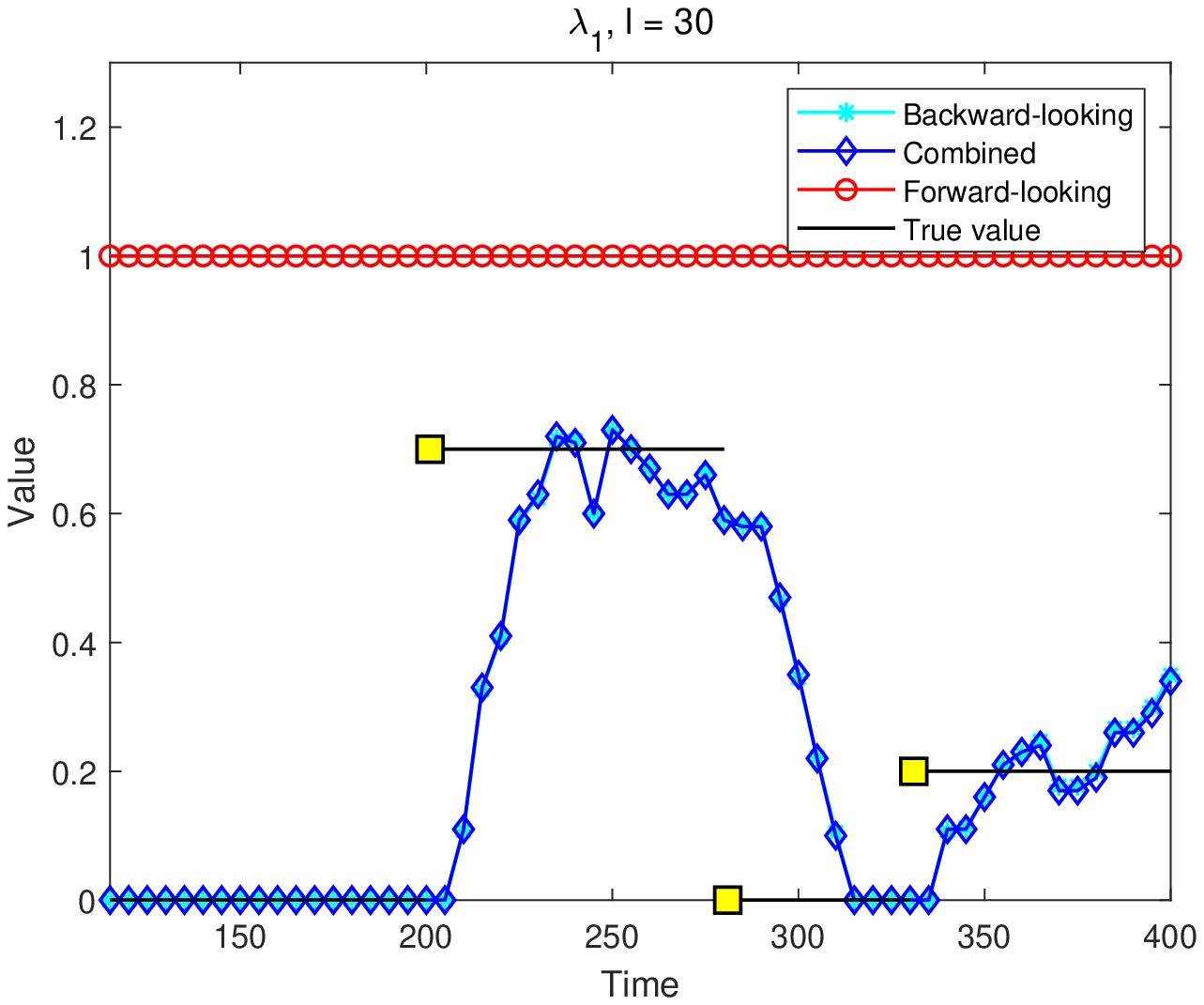}
    \includegraphics[height=1.8in,width=1.8in]{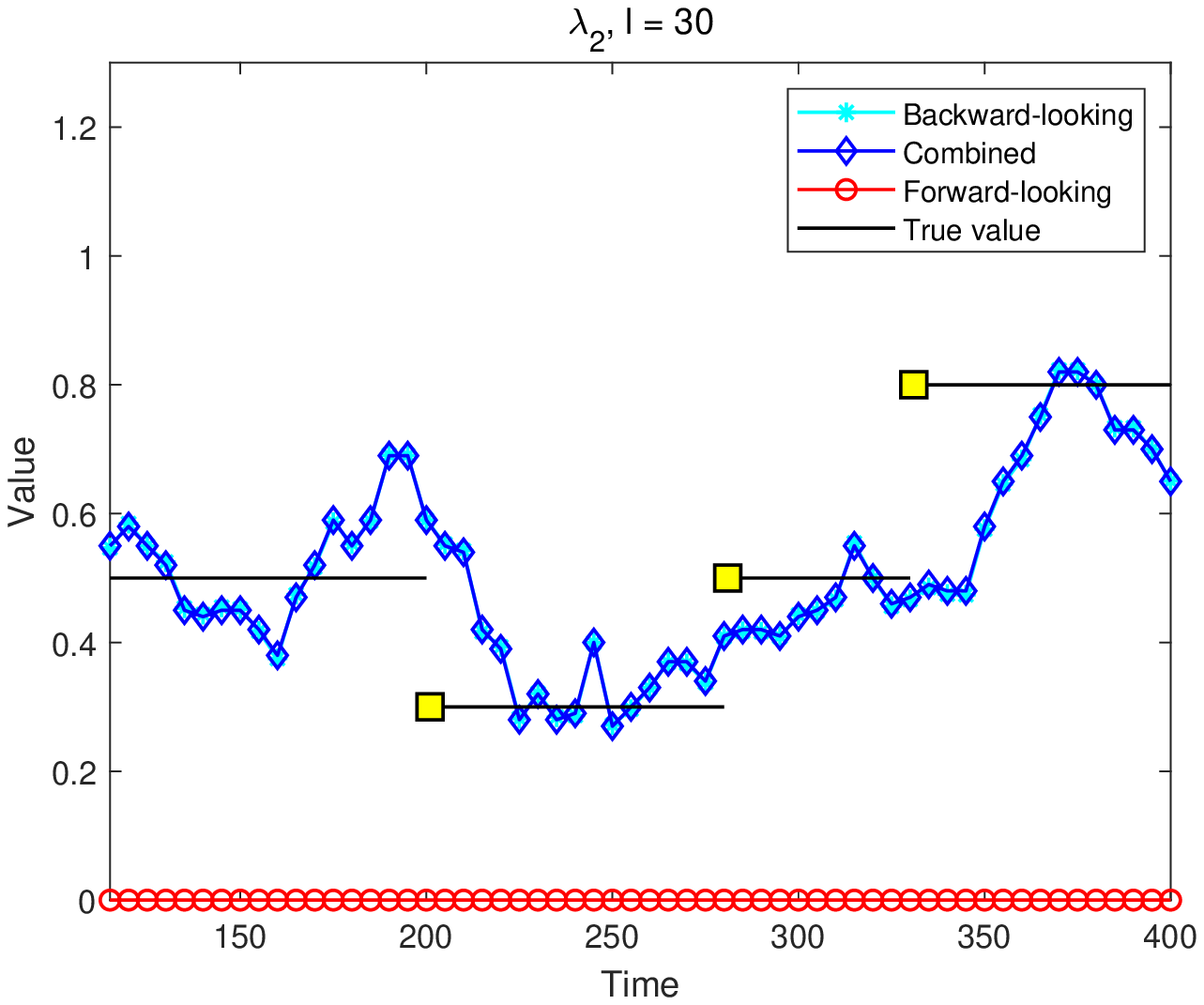}
    \includegraphics[height=1.8in,width=1.8in]{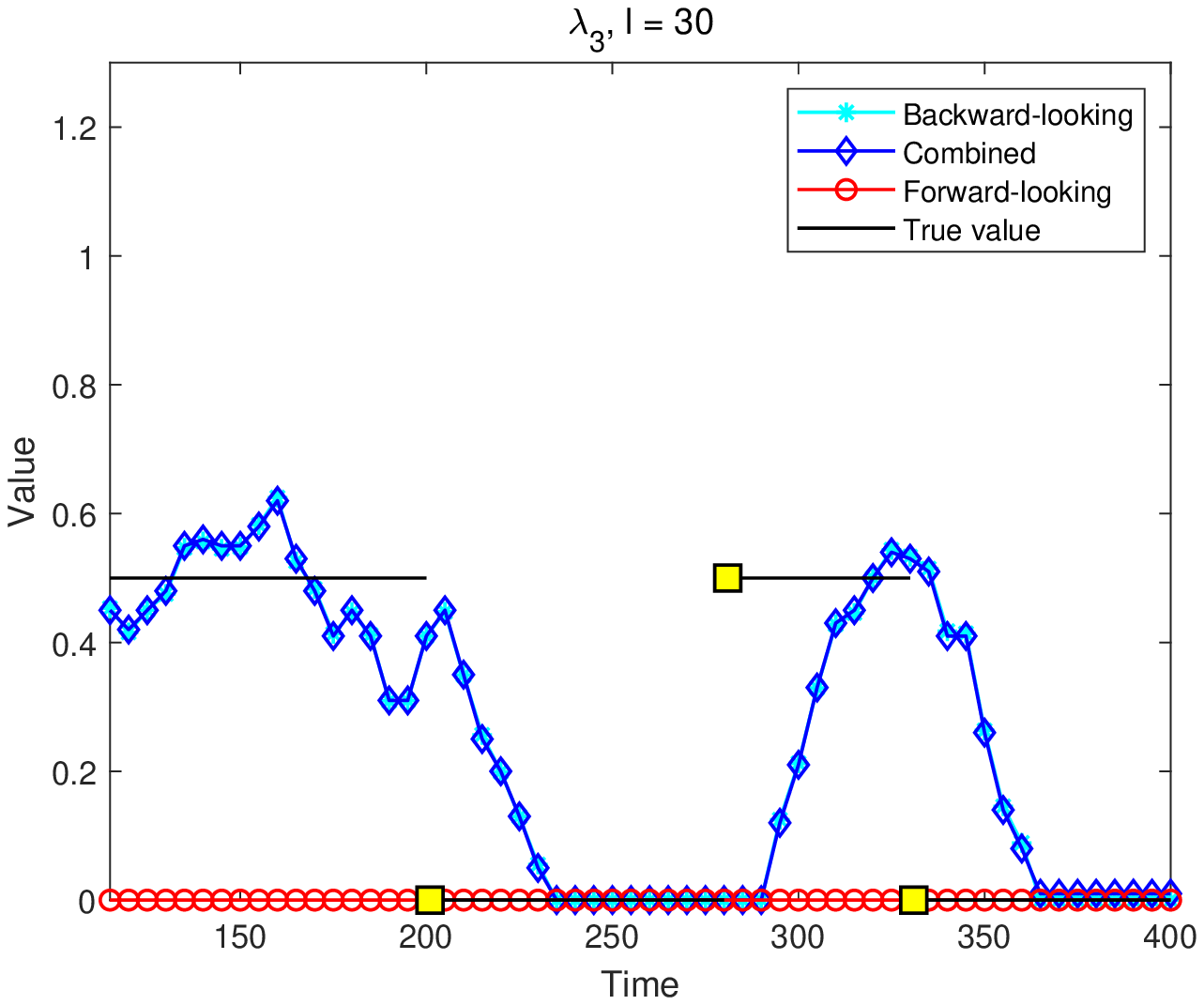}}
\subcaptionbox{$\alpha_{I,0}=0.300$\label{subfig:b}}
    {
    \includegraphics[height=1.8in,width=1.8in]{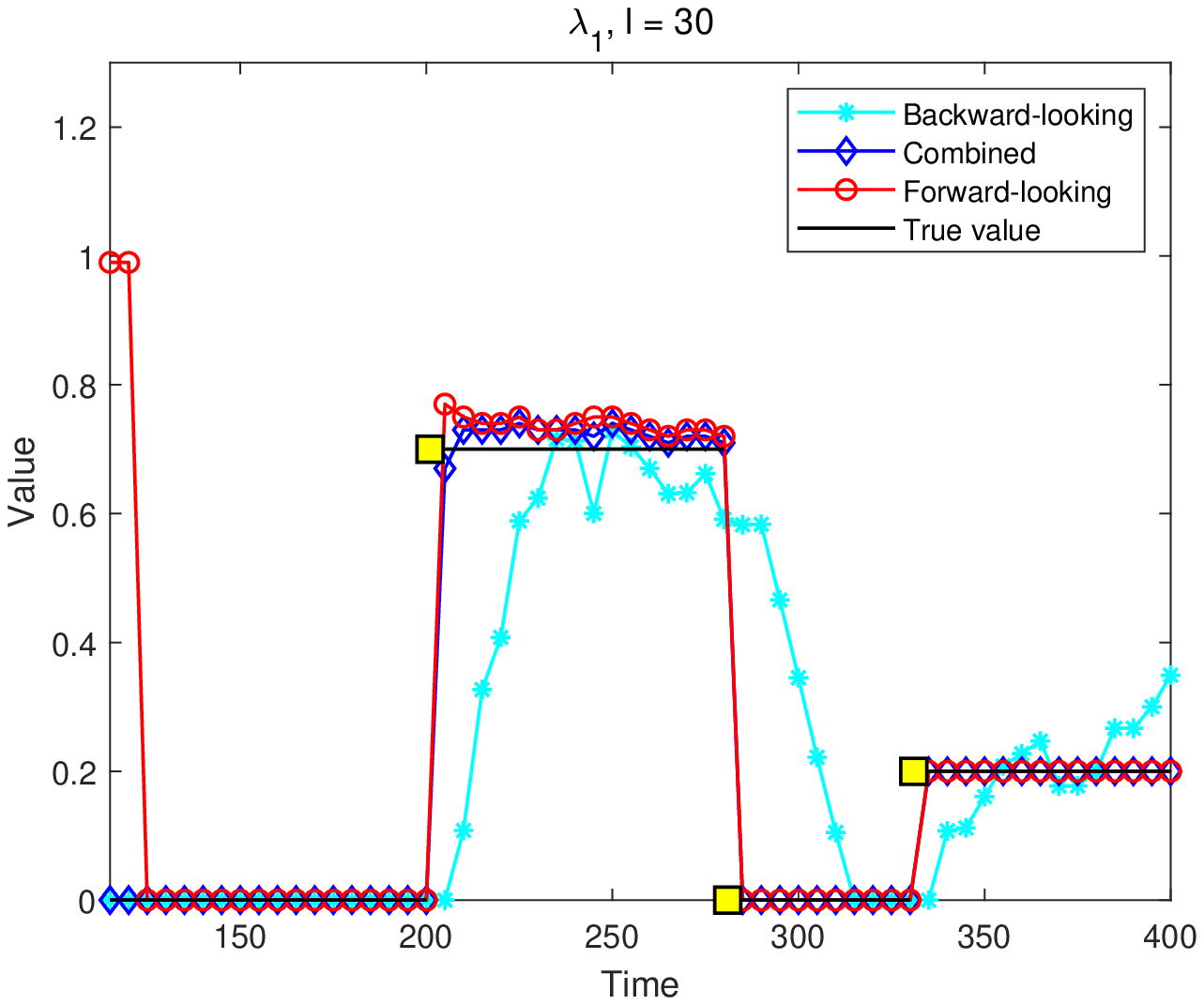}
    \includegraphics[height=1.8in,width=1.8in]{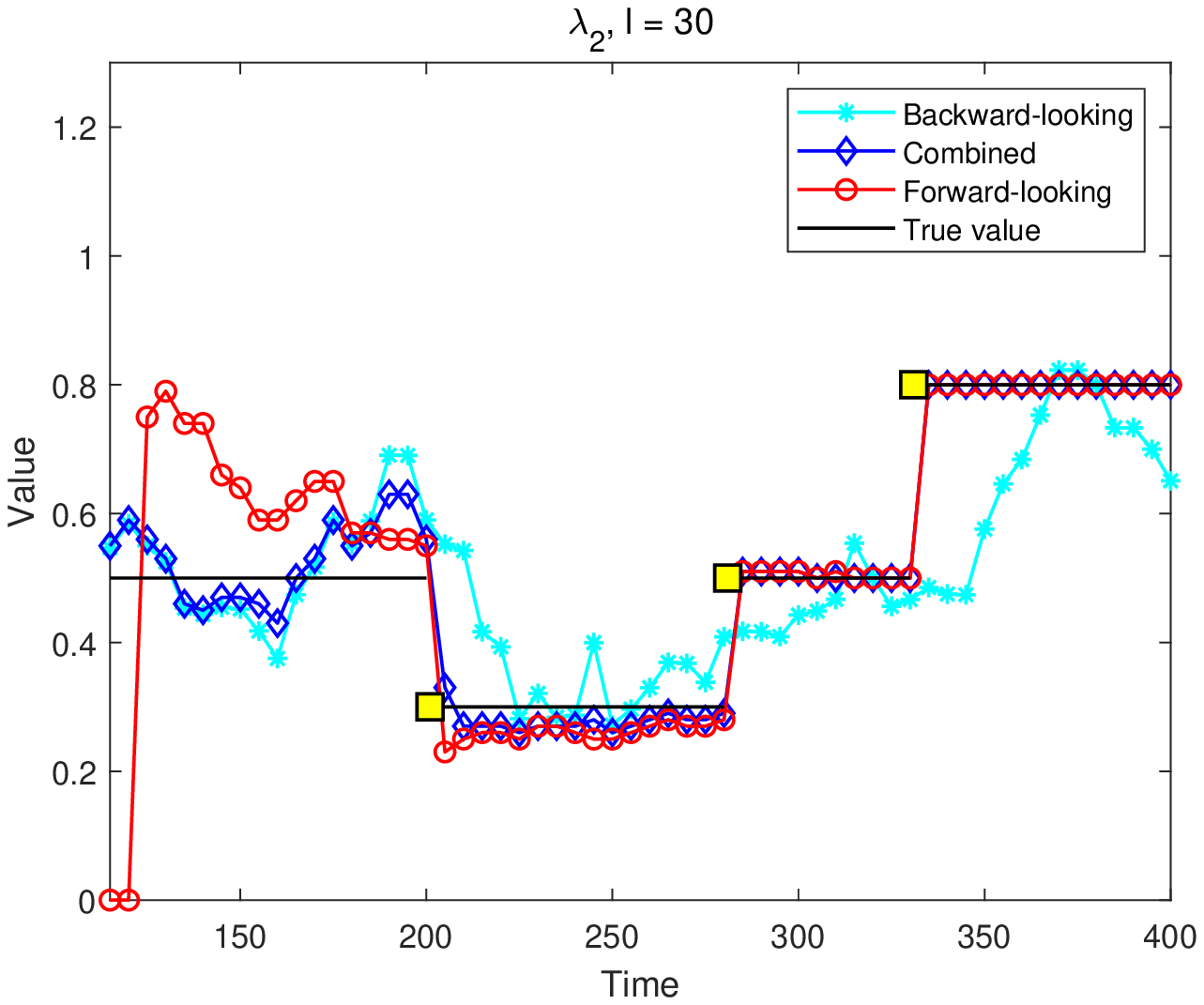}
    \includegraphics[height=1.8in,width=1.8in]{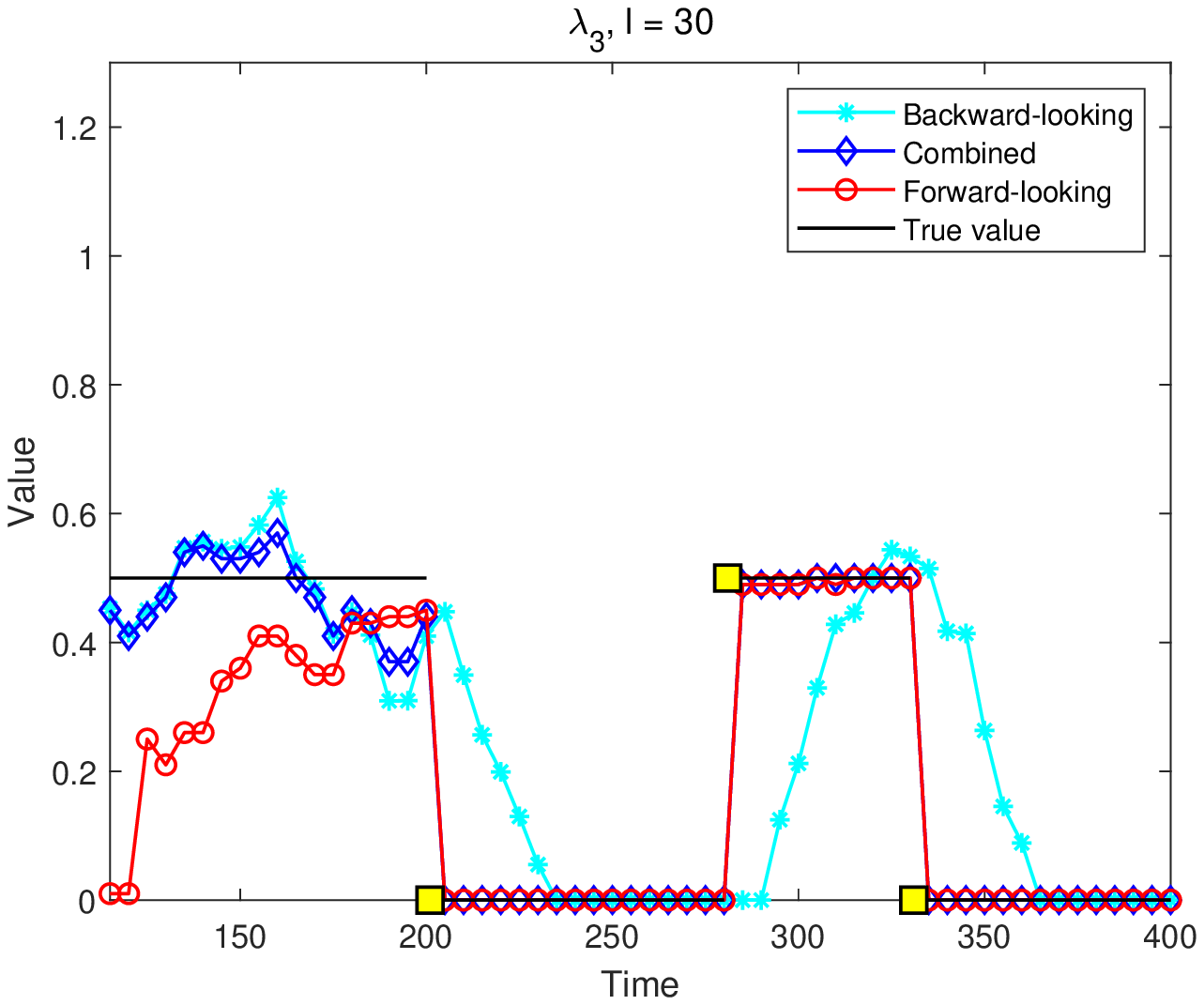}}
\subcaptionbox{$\alpha_{I,0}=0.600$\label{subfig:c}}
    {
    \includegraphics[height=1.8in,width=1.8in]{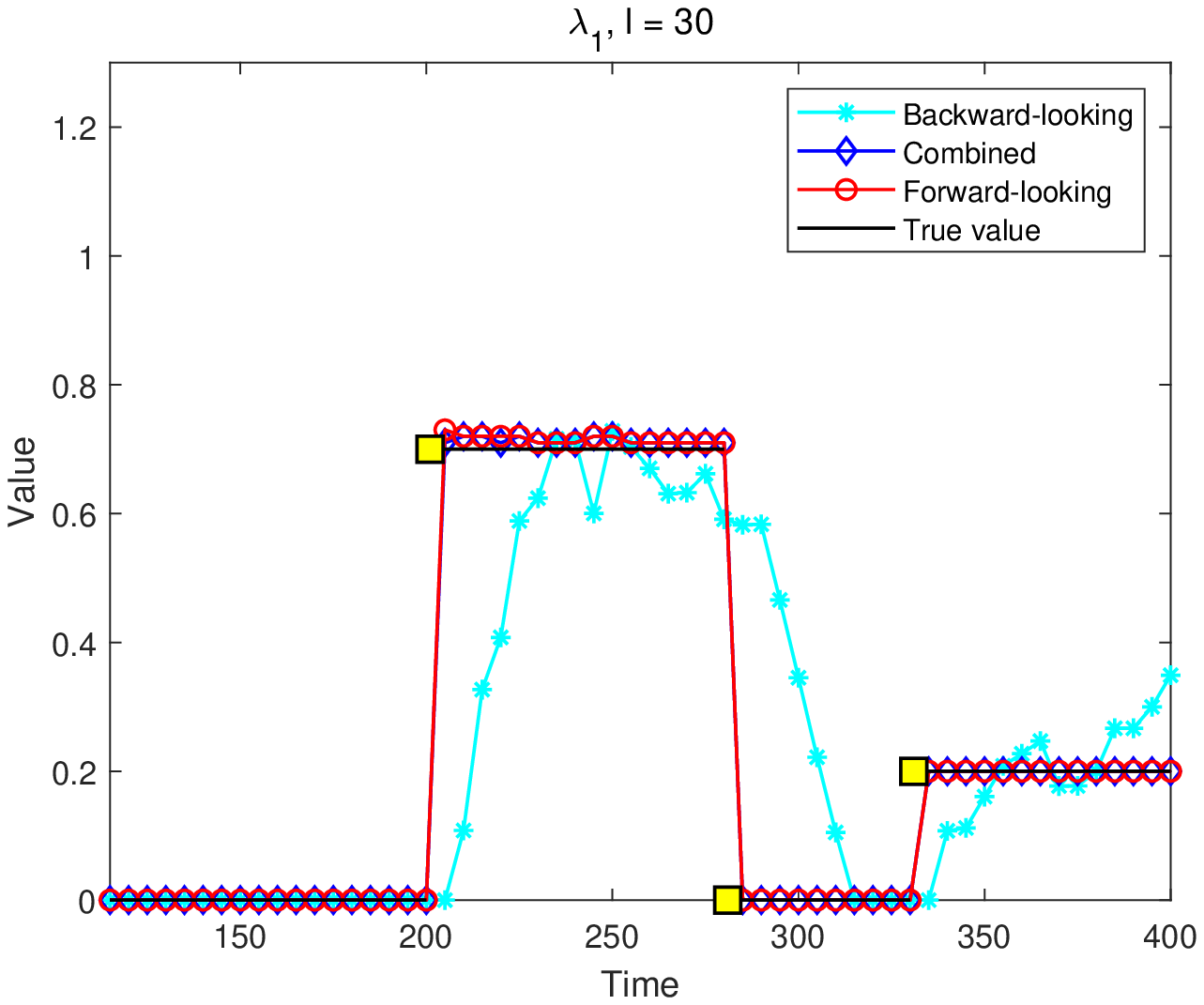}\quad
    \includegraphics[height=1.8in,width=1.8in]{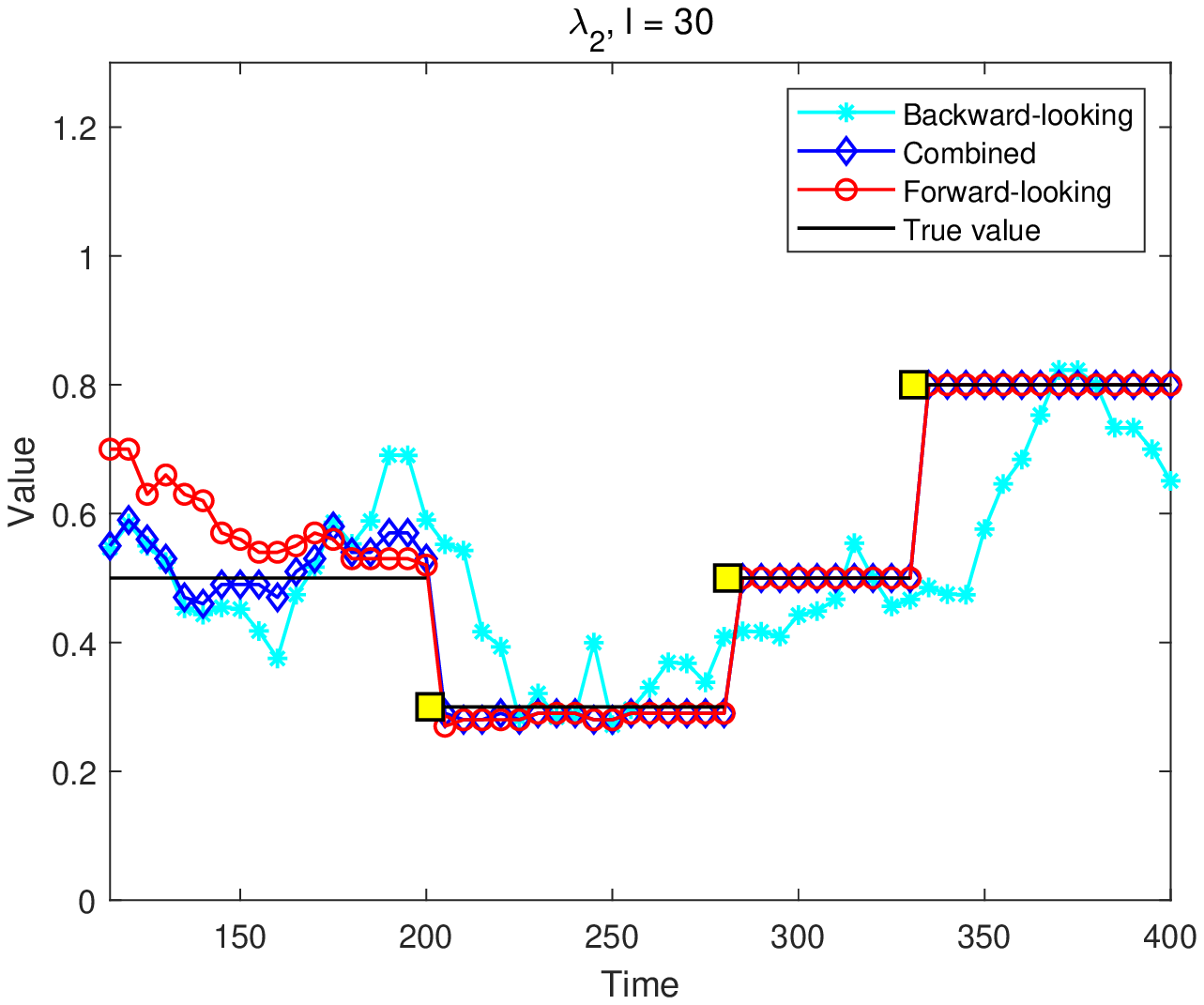}
    \includegraphics[height=1.8in,width=1.8in]{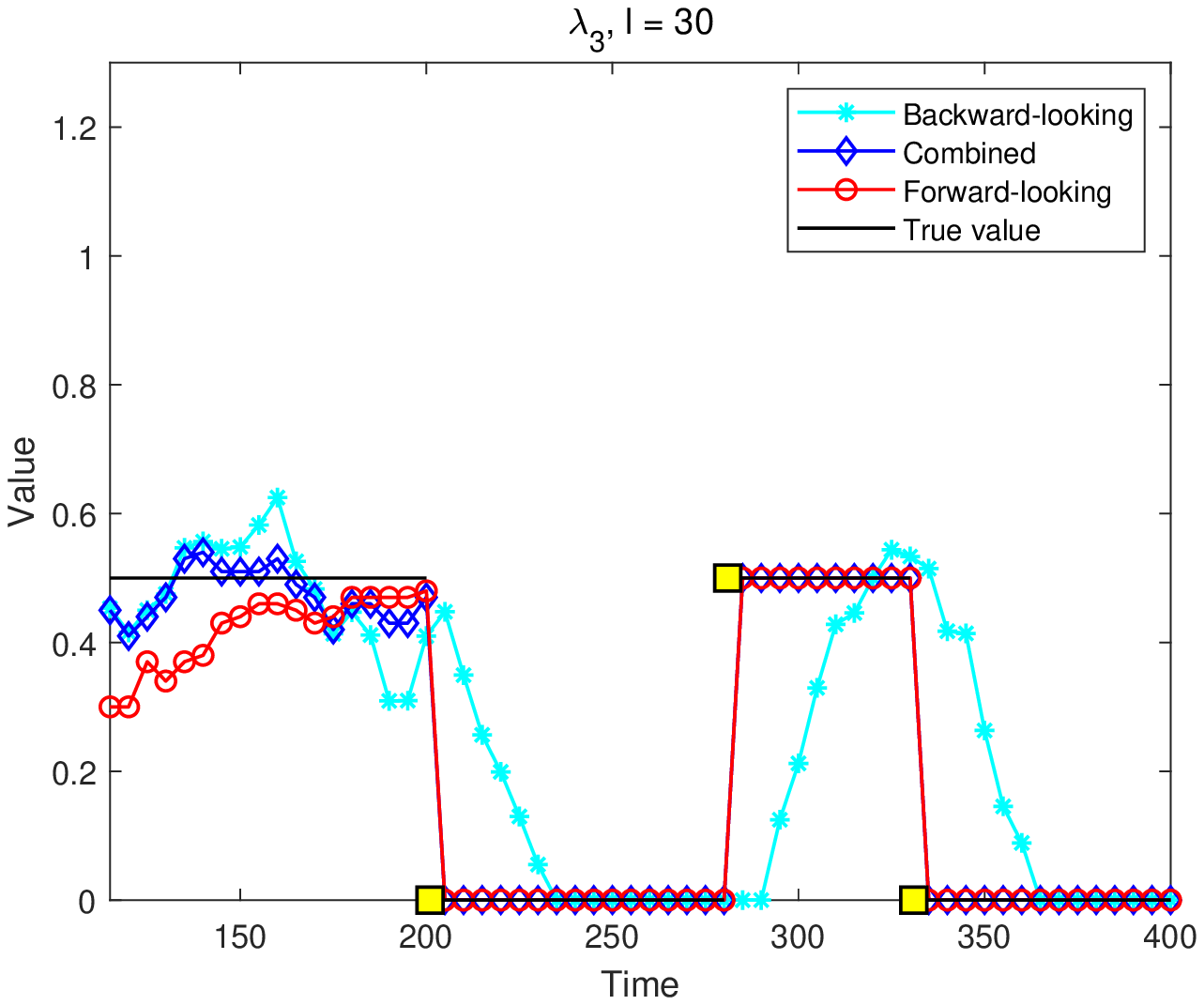}}
\subcaptionbox{$\alpha_{I,0}=0.990$\label{subfig:d}}
    {
    \includegraphics[height=1.8in,width=1.8in]{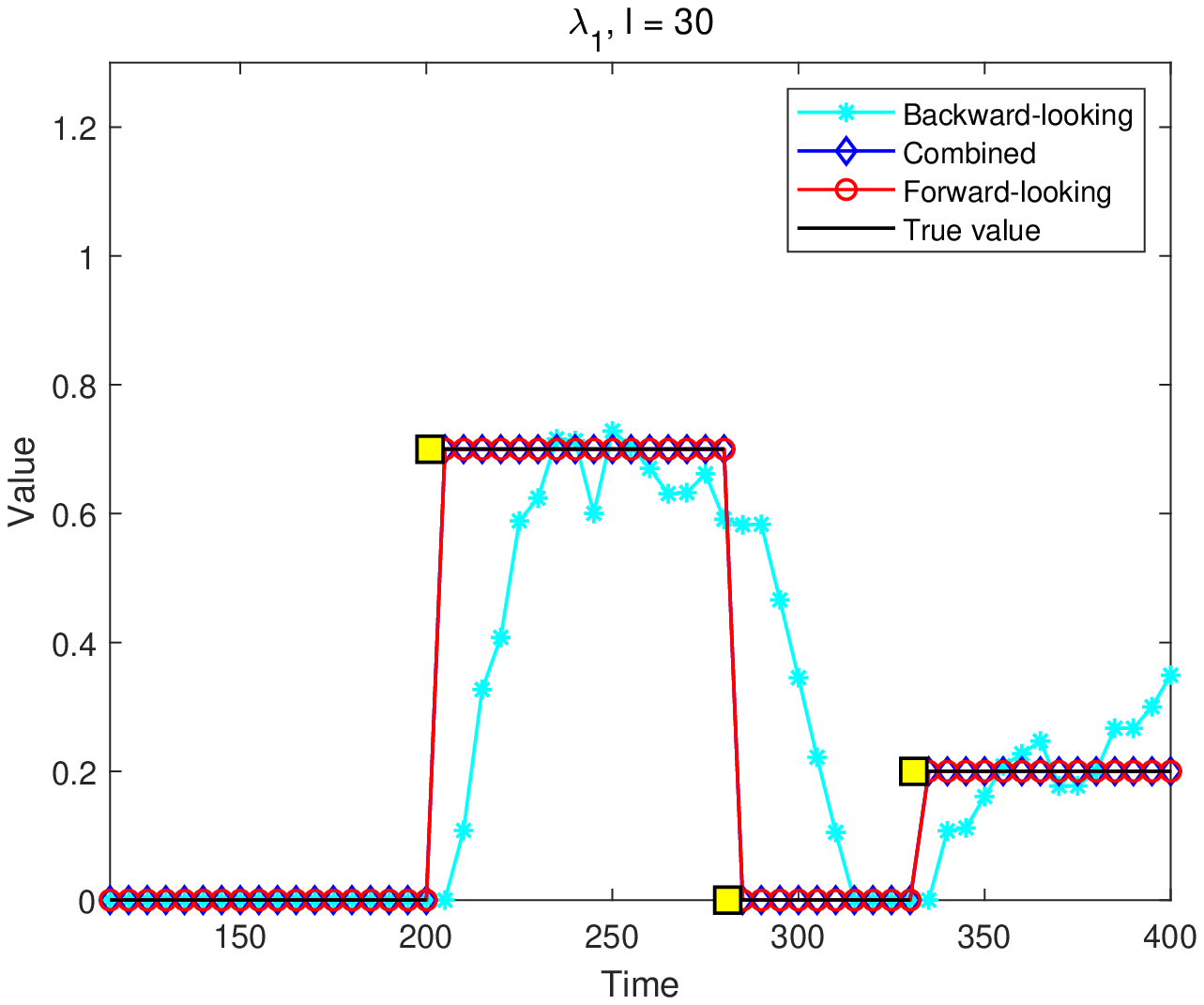}\quad
    \includegraphics[height=1.8in,width=1.8in]{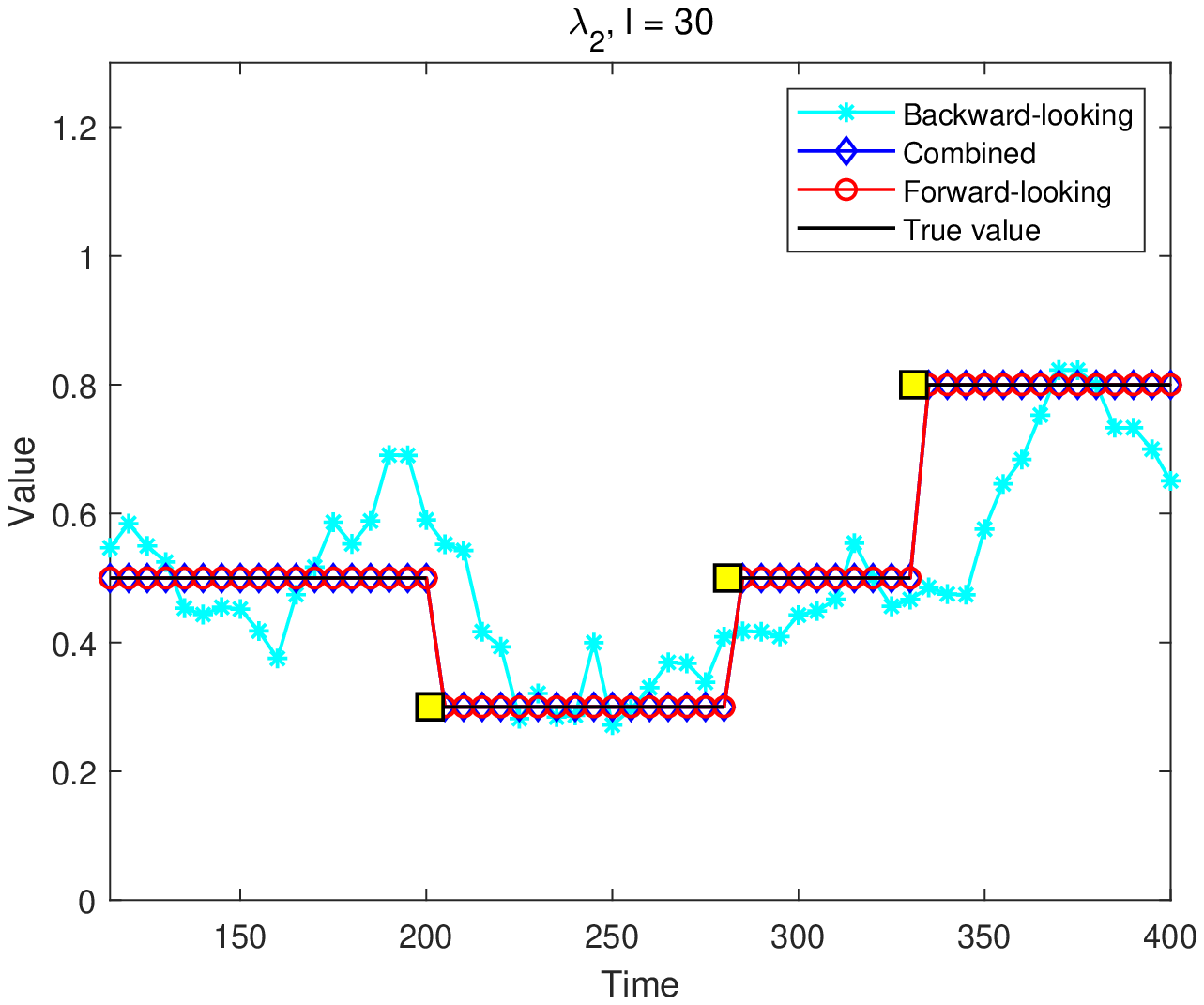}
    \includegraphics[height=1.8in,width=1.8in]{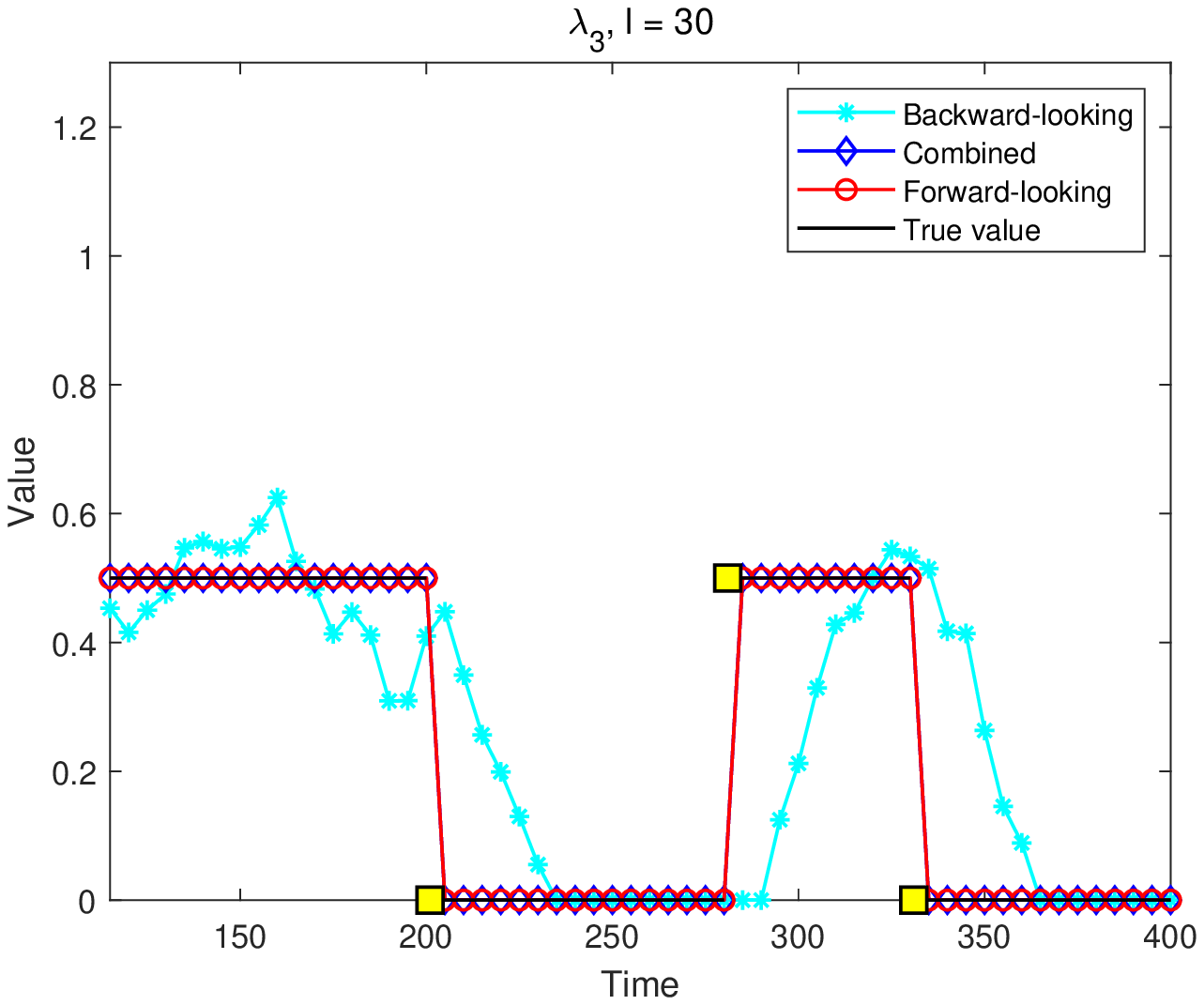}}
\caption{Comparisons of true values, backward-looking, combined, and forward-looking estimations.
We show the results in four typical markets, including a nearly completely effective market ($\alpha_{I,0}=0.990$), an almost ineffective
market ($\alpha_{I,0}=0.001$), and two relatively efficient markets ($\alpha_{I,0}=0.300,0.600$).}
\label{fig:2}
\end{figure}

Table \ref{table:1}
reports the RMSE of three types of estimations. First, we explore the impact of model parameters discussed in subsection \ref{2.4}.
It is clear that the RMSE of the backward-looking estimation $\eta_B$ in all market settings remains unchanged, while the RMSE of the combined estimation and forward-looking estimation vary with market parameters. In a market that is dominated by the informed investor ($\alpha_{I,0}=0.990$, $\sigma=\num{2.603e-5}$, or
$\delta_I=\num{2.777e-3}$),
both the combined estimation and the forward-looking estimation are close to the true value with RMSE $\eta_C=\eta_F\approx 0$. In markets where the
market
information is completely noisy  ($\alpha_{I,0}=0.001$,
$\sigma=\num{2.603e3}$, or $\delta_I=\num{2.250e3}$),
the forward-looking estimation exhibits larger errors, while the combined estimation is close to the backward-looking estimation with a smaller RMSE. For other market settings,  the combined estimation shows similar prediction pattern. It is closer to the forward-looking estimation if $\alpha_{I,0}$ is higher, $\sigma$ is lower, or $\delta_I$ is lower, and vice versa. In
addition, the combined estimation usually has a lower RMSE than the backward-looking one.

Second, we focus on the prediction power at turning points.
Figure \ref{fig:2}
shows the true mixture coefficients and the three estimations in four different markets by setting $\alpha_{I,0}$ to 0.001, 0.300, 0.600, and 0.990. Together with the RMSE at turning points shown in Table \ref{table:1},
we observe that the backward-looking approach lags in prediction when the market state changes, consistent with
\cite{zhu_portfolio_2014}.
As expected, the forward-looking and combined approaches can respond more quickly in markets with efficient forward-looking information. As the informed investor’s market share $\alpha_{I,0}$ decreases, the market becomes less efficient, and the accuracy of forward-looking estimation and combined estimation also decreases.

Overall, the forward-looking approach performs better in efficient markets, while the backward-looking approach is able to provide relatively reasonable estimation in extremely inefficient markets. However, for partially informative markets, the combined approach provides the highest accuracy in prediction, highlighting its robustness and self-adaptivity in integrating different sources of information.

\section{Empirical study}\label{Sec4}

In this section, we compare the out-of-sample performance of investment strategies determined by three types of estimations on 35 major stock markets over the world.

\subsection{Data description}

We select 35 constituent markets in the MSCI ACWI Index,\footnote{https://www.msci.com/zh/our-solutions/indexes/acwi} each of which has a complete time series of returns covering more than ten years.
Weekly returns, volatility indices, and market capitalization data of primary sector indices are obtained from the Bloomberg database.
Volatility indices are obtained to measure the risk-aversion coefficients. \footnote{For markets without option traded, we proxy the volatility index using EURO STOXX 50 volatility (VSTOXX), CBOE EFA ETF volatility index (VXEFA), or CBOE emerging markets ETF volatility index (VXEEM) according to their regions and market types.}
To proxy for the market shares of different investors, we obtain the market value ratio of institutional investors from the exchange websites.
\footnote{For markets that do not disclose this ratio, we use the indicators of markets with similar economic development levels as substitutes.}
Risk-free returns for each market are proxied by 3-month treasury yields, interbank-offered rates, or central bank rates.

Table \ref{table:2}
lists the detailed description of the data where DM represents the developed market, while EM denotes the emerging market.
Table \ref{table:3}
reports the descriptive statistics including the number of industry sectors, mean, standard deviation, skewness, and kurtosis of index excess returns.
Consistent with the existing literature, the excess returns is left-skewed and heavy-tailed, which highlights the significance of modeling asset returns with the Gaussian mixture distribution.

\begin{table}[htbp]
\centering
\captionsetup{labelsep = period, labelfont = bf, font = {stretch=1}}
  \caption{Data information of 35 markets.}\label{table:2}
  \resizebox{\linewidth}{8cm}{
    \begin{tabular}{clllllll}
    \hline
    \multicolumn{3}{c}{Market Description} &       & \multicolumn{2}{c}{Index} &  & \multicolumn{1}{c}{\multirow{2}[0]{*}{Sample period}} \\
\cmidrule{1-3}\cmidrule{5-6}    Region & Market & DM/EM &       & Market Index & Volatility Index & &  \\
\hline
    \multicolumn{1}{c}{\multirow{5}[0]{*}{Americas}} & Canada & DM    &       & SPTSX & TXLV &       & 2000.1.7-2020.11.13 \\
          & US  & DM    &       & SPX   & VIX &       & 2001.10.19-2020.11.13 \\
          & Brazil & EM    &       & MXBR  & VXEWZ &       & 2000.1.7-2020.11.13 \\
          & Chile & EM    &       & MXCL  & VXEEM &       & 2000.9.1-2020.11.13 \\
          & Mexico & EM    &       & MXMX  & VXEEM &       & 2000.1.7-2020.11.13 \\
\hline
    \multicolumn{1}{c}{\multirow{18}[0]{*}{Europe \& Middle East}} & Austria & DM    &       & MXAT  & VSTOXX &       & 2000.6.2-2020.11.13 \\
          & Belgium & DM    &       & BEL20 & VSTOXX &       & 2005.12.9-2020.11.14 \\
          & France & DM    &       & CAC40 & VSTOXX &       & 2000.12.22-2020.11.13 \\
          & Germany & DM    &       & MXDE  & VDAX &       & 2000.1.7-2020.11.13 \\
          & Ireland & DM    &       & MXIE  & VSTOXX &       & 2000.1.7-2020.11.13 \\
          & Israel & DM    &       & MXIL  & VXEFA &       & 2000.1.7-2020.11.13 \\
          & Italy & DM    &       & MXIT  & VSTOXX &       & 2000.1.7-2020.11.13 \\
          & Norway & DM    &       & OSEAX & VXEFA &       & 2000.1.7-2020.11.13 \\
          & Portugal & DM    &       & MXPT  & VSTOXX &       & 2000.1.7-2020.11.13 \\
          & Spain & DM    &       & MADX  & VSTOXX &       & 2005.1.7-2020.11.13 \\
          & Switzerland & DM    &       & MXCH  & VSMI &       & 2000.12.1-2020.11.14 \\
          & UK  & DM    &       & NMX   & VSTOXX &       & 2006.1.6-2020.11.13 \\
          & Czech Republic  & EM    &       & MXCZ  & VXEEM &       & 2000.1.7-2020.11.13 \\
          & Poland & EM    &       & MXPL  & VXEEM &       & 2000.1.7-2020.11.13 \\
          & Russia & EM    &       & MXRU  & VXEEM &       & 2005.6.3-2020.11.13 \\
          & South Africa & EM    &       & JALSH & SAVI &       & 2000.1.7-2020.11.13 \\
          & Turkey & EM    &       & MXTR  & VXEEM &       & 2000.8.4-2020.11.13 \\
          & UAE & EM    &       & ADSMI & VXEEM &       & 2006.7.13-2020.11.12 \\
\hline
    \multicolumn{1}{c}{\multirow{12}[0]{*}{Pacific}} & Australia & DM    &       & AS51  & VXEFA &       & 2000.3.31-2020.11.13 \\
          & Hong Kong & DM    &       & HSI   & VHSI &       & 2000.1.7-2020.11.13 \\
          & Japan & DM    &       & TPX   & VXEFA &       & 2000.1.7-2020.11.13 \\
          & Singapore & DM    &       & FSTAS & VXEFA &       & 2000.1.7-2020.11.13 \\
          & China & EM    &       & SH300 & VXFXI &       & 2005.1.7-2020.11.13 \\
          & India & EM    &       & BSE500 & VXEEM &       & 2005.1.9-2020.11.13 \\
          & Indonesia & EM    &       & JCI   & VXEEM &       & 2000.1.7-2020.11.13 \\
          & Korea & EM    &       & KOSPI & KSVKOSPI &       & 2000.1.2-2020.11.8 \\
          & Malaysia & EM    &       & FBMKLCI & VXEEM &       & 2000.5.19-2020.11.13 \\
          & Philippines & EM    &       & PCOMP & VXEEM &       & 2006.1.6-2020.11.13 \\
          & Taiwan & EM    &       & TWSE  & VXEEM &       & 2007.7.14-2020.11.14 \\
          & Thailand & EM    &       & SET   & VXEEM &       & 2000.1.7-2020.11.13 \\
\hline
    \end{tabular}}%
\end{table}%

\begin{table}[htbp]
\centering
\captionsetup{labelsep = period, labelfont = bf, font = {stretch=1}}
  \caption{Market parameters and descriptive statistics of excess return. This table reports the
market parameters of the 35 markets including the market share of the informed investor ($\alpha_{I,0}$) and noise-trading intensity ($\sigma$) and descriptive statistics for indices including the number of industry sectors ($n$), mean, standard deviation (Std), skewness (skew) and kurtosis (Kurt). Mean and Std are shown in percentage.}\label{table:3}
    \begin{tabular}{lccclrrrrr}
\hline
    \multicolumn{1}{l}{\multirow{2}[3]{*}{Market}} & \multicolumn{2}{c}{Parameters} &       & \multicolumn{1}{l}{\multirow{2}[3]{*}{Index}} & \multicolumn{5}{c}{Descriptive statistics} \\
\cmidrule{2-3}\cmidrule{6-10}    \multicolumn{1}{l}{} & $\alpha_{I,0}$ & $\sigma$     &       & \multicolumn{1}{c}{} &  $n$    & \multicolumn{1}{c}{Mean(\%)} & \multicolumn{1}{c}{Std(\%)} & \multicolumn{1}{c}{Skew} & \multicolumn{1}{c}{Kurt} \\
\hline
    Canada & 0.500   & 1.041  &       & SPTSX & 11    & 0.065  & 3.165  & -1.410  & 13.391  \\
    US & 0.600   & 0.260  &       & SPX & 10    & 0.118  & 2.459  & -1.019  & 12.165  \\
    Brazil & 0.300   & 4.165  &       & MXBR & 8     & 0.006  & 5.147  & -0.564  & 7.608  \\
    Chile & 0.300   & 4.165  &       & MXCL & 7     & 0.033  & 3.429  & -1.256  & 15.963  \\
    Mexico & 0.200   & 6.508  &       & MXMX & 5     & 0.050  & 3.930  & -0.736  & 10.569  \\
    Austria & 0.500   & 1.041  &       & MXAT & 5     & 0.017  & 4.082  & -1.489  & 15.005  \\
    Belgium & 0.500   & 1.041  &       & BEL20 & 16    & -0.034  & 3.443  & -1.604  & 13.843  \\
    France & 0.500   & 1.041  &       & CAC40 & 10    & -0.013  & 3.049  & -1.289  & 12.836  \\
    Germany & 0.500   & 1.041  &       & MXDE & 9     & 0.013  & 3.554  & -0.933  & 9.802  \\
    Ireland & 0.400   & 2.343  &       & MXIE & 3     & -0.049  & 3.833  & -1.869  & 18.257  \\
    Israel & 0.500   & 1.041  &       & MXIL & 4     & 0.012  & 3.155  & -0.311  & 6.532  \\
    Italy & 0.300   & 4.165  &       & MXIT & 6     & -0.057  & 3.674  & -1.203  & 11.632  \\
    Norway & 0.500   & 1.041  &       & OSEAX & 10    & 0.130  & 3.800  & -1.303  & 11.364  \\
    Portugal & 0.300   & 4.165  &       & MXPT & 5     & -0.071  & 3.288  & -1.037  & 8.745  \\
    Spain & 0.300   & 4.165  &       & MADX & 6     & -0.042  & 3.797  & -1.100  & 9.860  \\
    Switzerland & 0.500   & 1.041  &       & MXCH & 8     & 0.090  & 2.630  & -1.271  & 15.063  \\
    UK & 0.500   & 1.041  &       & NMX & 35    & -0.019  & 3.197  & -1.492  & 16.034  \\
    Czech Republic & 0.400   & 2.343  &       & MXCZ & 3     & 0.084  & 3.708  & -0.797  & 11.428  \\
    Poland & 0.400   & 2.343  &       & MXPL & 6     & -0.033  & 4.359  & -0.697  & 8.632  \\
    Russia & 0.500   & 1.041  &       & MXRU & 6     & -0.014  & 5.118  & -0.220  & 12.496  \\
    South Africa & 0.400   & 2.343  &       & JALSH & 25    & 0.058  & 3.894  & -0.334  & 7.989  \\
    Turkey & 0.300   & 4.165  &       & MXTR & 7     & 0.097  & 4.522  & -0.210  & 7.994  \\
    UAE & 0.500   & 1.041  &       & ADSMI & 8     & 0.043  & 2.680  & -1.837  & 15.099  \\
    Australia & 0.400   & 2.343  &       & AS51 & 10    & 0.083  & 3.353  & -1.746  & 18.056  \\
    Hong Kong & 0.500   & 1.041  &       & HSI & 4     & 0.042  & 3.007  & -0.233  & 5.431  \\
    Japan & 0.500   & 1.041  &       & TPX & 12    & 0.006  & 2.646  & -0.419  & 7.360  \\
    Singapore & 0.500   & 1.041  &       & FSTAS & 11    & 0.018  & 2.850  & -0.575  & 9.910  \\
    China & 0.300   & 4.165  &       & SH300 & 10    & 0.351  & 3.998  & -0.235  & 4.784  \\
    India & 0.400   & 2.343  &       & BSE500 & 11    & 0.187  & 3.023  & -0.621  & 7.761  \\
    Indonesia & 0.200   & 6.508  &       & JCI & 9     & 0.095  & 3.864  & -0.874  & 8.549  \\
    Korea & 0.400   & 2.343  &       & KOSPI & 19    & 0.067  & 4.076  & -0.795  & 9.558  \\
    Malaysia & 0.400   & 2.343  &       & PCOMP & 10    & 0.027  & 2.254  & -0.509  & 6.450  \\
    Philippines & 0.300   & 4.165  &       & FBMKLCI & 6     & 0.151  & 3.244  & -0.990  & 9.398  \\
    Taiwan & 0.400   & 2.343  &       & TWSE & 28    & 0.066  & 2.926  & -0.677  & 4.924  \\
    Thailand & 0.400   & 2.343  &       & SET   & 23    & 0.110  & 3.210  & -1.107  & 10.304  \\
\hline
\end{tabular}%
\end{table}%

\subsection{Empirical setting}
\label{4.2}

We first introduce the settings of market parameters.
The market share $\alpha_{I,0}$ of the informed investor at time 0 is set as the market share of institutional investors. In addition, we simply set the market share of the noise investor as $\alpha_{N,0}=0.100$ and the initial market share of the less-informed investor as $\alpha_{U,0}=1-\alpha_{I,0}-\alpha_{N,0}.$ The updating of the market shares of the three types of investors is in line with simulation studies.

With respect to noise intensity, we assume that markets with a higher share of the informed investor exhibit lower noise trading intensity.
We assign 95\% confidence intervals of $[-1,1],[-2,2],[-3,3],[-4,4],[-5,5]$ to the noise investor's holding in markets with informed investor shares of 0.6, 0.5, 0.4, 0.3, and 0.2.
The market shares of informed investors and noise trading intensities of the 35 markets are listed in the second and third columns of
Table \ref{table:3}.

To more effectively characterize the behavior of investors in the real world, we introduce a volatility index-based time-varying risk aversion coefficient for the informed investor, as the risk-neutral volatility has been shown to encapsulate investors' risk aversion (see e.g.,
\cite{bekaert2014vix}).
Specifically
\begin{align}
\notag
\delta_{I,t}=\rho_t\delta_0,~\rho_t=\frac{VIX_{t}}{\frac{1}{\Delta t}\sum_{i=t-\Delta t+1}^{t} VIX_i},
\end{align}
where $\delta_0=2.500$, $VIX_{t}$ is the volatility index in the corresponding market, and $\Delta t=100$ weeks.
 The risk-aversion coefficient of the less-informed investor is set as $\delta_U = 2.500$.

Now we introduce the setting of the mixture components. We set $m=4$ Gaussian components, including three determined by the historical data and one derived according to
\cite{black_global_1992}.
The three historical data related components are estimated using the return data before January 2012 and remain unchanged in the investment period, which characterize the excess return in bull, oscillating, and bear markets, respectively. \footnote{To do so, we compute the percentage rise (fall) of the index based on the highest and lowest value. A period is classfies as a bull market if the percentage rise is greater than 90\%, a bear market if the percentage fall is less than -50\%, and an oscillating market if the excess return is close to zero.}
The selection of component associated with the prior distribution of
\cite{black_global_1992}
aims to better utilize the forward-looking information. Specifically, this component is determined according to the market portfolio as  $\mathcal{N}(\delta_{I,t} \hat{\Sigma}_t \bm{x}_{M,t},\hat{\Sigma}_t )$, where $ \hat{\Sigma}_t$ is the covariance matrix dynamically estimated with the latest 100-week historical data by rolling windows.

\subsection{Empirical results}

We compute three types of mixture weights based on
Defintion\ref{defin}
and obtain corresponding estimations for the mean vector and covariance matrix with
\eqref{eq:8}.
Then, we input these estimations into mean-variance optimization
\eqref{eq:1}
and obtain the backward-looking strategy, forward-looking strategy, and combined strategy, respectively.

The out-of-sample investment period is from January 2012 to November 2020. Short-selling is not allowed and we only consider the allocation within risky assets
\citep{demiguel_improving_2013}.
For robustness, we rebalance the portfolio every month, quarter, and half a year, and a 100-week rolling window is adopted. We also set the risk-aversion coefficients as $0.3\delta_{I,t}$, $\delta_{I,t}$ and $3\delta_{I,t}$, corresponding to aggressive, moderate, and conservative investors, respectively.

We calculate indicators including average (across rebalancing window) annualized Sharpe ratio (SR), turnover rate (TRN) \citep{demiguel_improving_2013},
and maximum drawdown (MDD) \citep{chekhlov2005drawdown}
to evaluate the out-of-sample performance of the strategies
\begin{subequations}
\begin{align}
SR&=\frac{{\frac{250}{\triangle t}\frac{1}{T}\sum_{t=1}^T r_{p,t}}}{\sqrt{\frac{250}{\triangle t}\frac{1}{T-1}\sum_{t=1}^T \left(r_{p,t}-\frac{1}{T}\sum_{t=1}^T r_{p,t}\right)^2}},\nonumber \\
TRN&=\frac{1}{T-1}\sum_{t=1}^{T-1} \sum_{i=1}^n |x_{i,t+1}-x_{i,{t}^+}|,\nonumber \\
MDD&=\notag \mathop{\max}\limits_{0 \leq i< j \leq T}\{1-w_{p,j}/w_{p,i}\}.
\end{align}
\end{subequations}
where $w_{p,t}$ and $r_{p,t}$ are wealth and portfolio return at time $t$ computed by
\begin{align}
\notag w_{p,t} =  r_{p,t-1}w_{p,t-1}~\text{and}~r_{p,t} &= \sum_{i=1}^{n}\tilde{r}_{i,t}x_{i,t}\left(1-c\sum_{i=1}^{n} |x_{i,t}-x_{i,{t-1}}|\right).
\end{align}
We set the transaction cost ratio as $c=0.5\%$
\citep{demiguel_optimal_2009}.
An outperforming portfolio usually has a higher SR, a lower TRN, and a lower MDD.

Table \ref{table:4}
reports the average (across the rebalancing window) annualized SR, TRN, and MDD of the three types of strategies with moderate risk-aversion setting in the 35 markets.
\footnote{Table \ref{table:5}
in the Appendix shows results with aggressive, moderate, and conservative risk-aversion settings, results are consistent.
Table \ref{table:5}
also shows the comparisons with benchmark (1/N, index, and Minimum-variance) strategies.}
Panel A shows the results of markets where the backward-looking strategy performs the best, Panel B reports the results of markets where the combined strategy performs the best, and Panel C shows the results of markets where the forward-looking strategy performs the best. We also plot the SR of the optimal mean-variance portfolio for each market in
Figure \ref{fig:3}
for a more intuitive exhibition.
\begin{figure}[h]
\centering
\includegraphics[height=4in,width=5.5in]{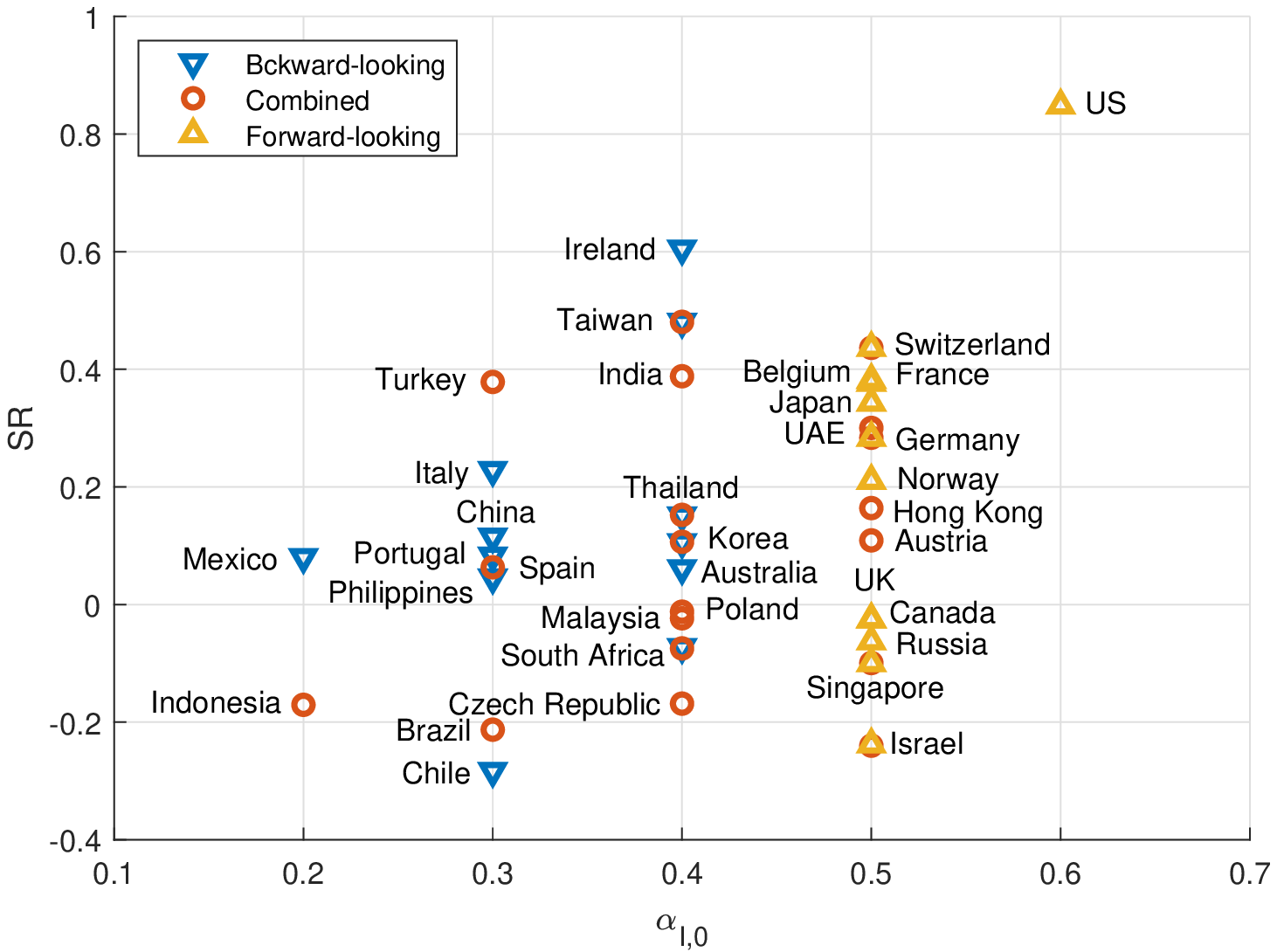}
\caption{Sharpe ratio (SR) of the optimal mean-variance portfolio for each market. The horizontal axis indicates the market share of informed investors at time 0.}
\label{fig:3}
\end{figure}

\begin{table}
  \centering
  \captionsetup{labelsep = period, labelfont = bf, font = {stretch=1}}
  \caption{Empirical tests: Comparisons of backward-looking strategy, combined strategy and forward-looking strategy. The results with a moderate risk-aversion setting are shown. This table reports out-of-sample performance indicators, including annualized SR, TRN and MDD. Each indicator is the average across the three rebalancing windows.}\label{table:4}
  \resizebox{\linewidth}{9.6cm}{
    \begin{tabular}{lcrrrrrrrrrrrr}
\hline
          &       & & \multicolumn{3}{c}{Backward-looking Strategy} & & \multicolumn{3}{c}{Combined Strategy} & & \multicolumn{3}{c}{Forward-looking Strategy} \\
\cmidrule{4-6}\cmidrule{8-10}\cmidrule{12-14}   \multicolumn{1}{c}{Market} & \multicolumn{1}{c}{$\alpha_{I,0}$} & & SR    & TRN   & MDD &   & SR    & TRN   & MDD &   & SR    & TRN   & MDD \\
\hline
    \multicolumn{14}{l}{Panel A: Markets with Better Performance for Backward-Looking Strategy} \\
\hline
    Mexico       & 0.200  &       & \textbf{0.081} & \textbf{0.279} & \textbf{0.599} &       & -0.107  & \textbf{0.278} & 0.651  &       & -0.194  & 0.483  & 0.686  \\
    Chile       & 0.300  &       & \textbf{-0.283} & \textbf{0.316} & \textbf{0.657} &       & -0.490  & 0.395  & 0.765  &       & -0.497  & 0.420  & 0.774  \\
    Italy       & 0.300  &       & \textbf{0.229} & \textbf{0.269} & \textbf{0.452} &       & 0.095  & 0.316  & 0.551  &       & 0.080  & 0.309  & 0.555  \\
    Portugal       & 0.300  &       & \textbf{0.083} & \textbf{0.058} & \textbf{0.283} &       & 0.049  & 0.086  & 0.287  &       & -0.002  & 0.149  & 0.311  \\
    Philippines       & 0.300  &       & \textbf{0.046} & \textbf{0.213} & \textbf{0.495} &       & -0.018  & 0.289  & 0.552  &       & -0.015  & 0.276  & 0.550  \\
    China       & 0.300  &       & \textbf{0.115} & 0.195  & 0.466  &       & 0.114  & \textbf{0.160} & 0.466  &       & 0.114  & 0.342  & \textbf{0.443} \\
    Australia       & 0.400  &       & \textbf{0.061} & 0.226  & \textbf{0.462} &       & -0.005  & 0.201  & 0.501  &       & -0.011  & \textbf{0.160} & 0.491  \\
    Ireland       & 0.400  &       & \textbf{0.605} & \textbf{0.097} & \textbf{0.354} &       & 0.581  & 0.122  & 0.361  &       & 0.574  & 0.130  & 0.369  \\
    Taiwan       & 0.400  &       & \textbf{0.481} & \textbf{0.024} & 0.266  &       & \textbf{0.481} & \textbf{0.024} & 0.266  &       & 0.424  & 0.397  & \textbf{0.255} \\
    Korea &    0.400  &       & \textbf{0.106} & \textbf{0.030} & \textbf{0.396} &       & \textbf{0.107} & 0.044  & \textbf{0.396} &       & -0.011  & 0.420  & 0.508  \\
    South Africa        & 0.400  &       & \textbf{-0.073} & \textbf{0.092} & \textbf{0.647} &       & \textbf{-0.075} & \textbf{0.097} & \textbf{0.647} &       & -0.231  & 0.497  & 0.779  \\
    Thailand &   0.400  &       & \textbf{0.151} & \textbf{0.073} & \textbf{0.393} &       & \textbf{0.152} & \textbf{0.071} & \textbf{0.393} &       & 0.078  & 0.821  & 0.504  \\
\hline
    \multicolumn{14}{l}{Panel B: Markets with Better Performance for Combined Strategy} \\
\hline
    Indonesia & 0.200  &       & -0.185  & 0.190  & 0.475  &       & \textbf{-0.171} & \textbf{0.144} & \textbf{0.469} &       & \textbf{-0.174} & 0.177  & \textbf{0.468} \\
    Brazil & 0.300  &       & -0.252  & 0.201  & 0.803  &       & \textbf{-0.213} & \textbf{0.093} & \textbf{0.793} &       & -0.221  & 0.183  & \textbf{0.793} \\
    Turkey & 0.300  &       & 0.347  & 0.195  & 0.307  &       & \textbf{0.379} & \textbf{0.041} & \textbf{0.301} &       & 0.350  & 0.075  & \textbf{0.301} \\
    Spain & 0.300  &       & 0.041  & 0.344  & 0.504  &       & \textbf{0.063} & 0.271  & \textbf{0.488} &       & 0.053  & \textbf{0.217} & 0.543  \\
    Korea & 0.400  &       & \textbf{0.106} & \textbf{0.030} & \textbf{0.396} &       & \textbf{0.107} & 0.044  & \textbf{0.396} &       & -0.011  & 0.420  & 0.508  \\
    Taiwan & 0.400  &       & \textbf{0.481} & \textbf{0.024} & 0.266  &       & \textbf{0.481} & \textbf{0.024} & 0.266  &       & 0.424  & 0.397  & \textbf{0.255} \\
    South Africa & 0.400  &       & \textbf{-0.073} & \textbf{0.092} & \textbf{0.647} &       & \textbf{-0.075} & 0.097  & \textbf{0.647} &       & -0.231  & 0.497  & 0.779  \\
    Thailand & 0.400  &       & \textbf{0.151} & \textbf{0.073} & \textbf{0.393} &       & \textbf{0.152} & \textbf{0.071} & \textbf{0.393} &       & 0.078  & 0.821  & 0.504  \\
    Czech Republic & 0.400  &       & -0.200  & 0.251  & 0.569  &       & \textbf{-0.168} & \textbf{0.210} & 0.552  &       & -0.199  & 0.418  & \textbf{0.532} \\
    Poland & 0.400  &       & -0.039  & \textbf{0.113} & 0.631  &       & \textbf{-0.012} & 0.211  & \textbf{0.625} &       & -0.093  & 0.280  & 0.685  \\
    India & 0.400  &       & 0.369  & 0.410  & 0.281  &       & \textbf{0.388} & 0.346  & \textbf{0.267} &       & 0.261  & \textbf{0.249} & \textbf{0.274} \\
    Malaysia & 0.400  &       & -0.028  & 0.250  & 0.280  &       & \textbf{-0.023} & 0.165  & \textbf{0.265} &       & -0.144  & \textbf{0.131} & 0.331  \\
    Austria & 0.500  &       & 0.052  & 0.164  & \textbf{0.622} &       & \textbf{0.109} & 0.128  & \textbf{0.623} &       & \textbf{0.109} & \textbf{0.095} & 0.640  \\
    HongKong & 0.500  &       & 0.100  & 0.249  & \textbf{0.269} &       & \textbf{0.164} & 0.235  & 0.288  &       & \textbf{0.157} & \textbf{0.111} & 0.298  \\
    UAE   & 0.500  &       & -0.151  & 0.376  & 0.565  &       & \textbf{0.300} & \textbf{0.118} & \textbf{0.279} &       & 0.222  & 0.239  & 0.330  \\
    Germany & 0.500  &       & 0.082  & 0.397  & 0.469  &       & \textbf{0.284} & \textbf{0.044} & \textbf{0.416} &       & \textbf{0.284} & \textbf{0.044} & \textbf{0.416} \\
    Israel & 0.500  &       & -0.382  & 0.264  & 0.657  &       & \textbf{-0.240} & \textbf{0.111} & \textbf{0.533} &       & \textbf{-0.238} & \textbf{0.109} & \textbf{0.530} \\
    Switzerland & 0.500  &       & 0.329  & 0.281  & 0.274  &       & \textbf{0.437} & \textbf{0.016} & \textbf{0.230} &       & \textbf{0.437} & \textbf{0.016} & \textbf{0.230} \\
    Singapore & 0.500  &       & -0.178  & 0.177  & 0.495  &       & \textbf{-0.099} & \textbf{0.058} & \textbf{0.458} &       & \textbf{-0.099} & \textbf{0.058} & \textbf{0.458} \\
\hline
    \multicolumn{14}{l}{Panel C: Markets with Better Performance for Forward-Looking Strategy} \\
\hline
    Germany & 0.500  &       & 0.082  & 0.397  & 0.469  &       & \textbf{0.284} & \textbf{0.044} & \textbf{0.416} &       & \textbf{0.284} & \textbf{0.044} & \textbf{0.416} \\
    Israel & 0.500  &       & -0.382  & 0.264  & 0.657  &       & \textbf{-0.240} & \textbf{0.111} & \textbf{0.533} &       & \textbf{-0.238} & \textbf{0.109} & \textbf{0.530} \\
    Switzerland & 0.500  &       & 0.329  & 0.281  & 0.274  &       & \textbf{0.437} & \textbf{0.016} & \textbf{0.230} &       & \textbf{0.437} & \textbf{0.016} & \textbf{0.230} \\
    Singapore & 0.500  &       & -0.178  & 0.177  & 0.495  &       & \textbf{-0.099} & \textbf{0.058} & \textbf{0.458} &       & \textbf{-0.099} & \textbf{0.058} & \textbf{0.458} \\
    Norway & 0.500  &       & 0.185  & \textbf{0.139} & 0.545  &       & 0.184  & 0.172  & 0.541  &       & \textbf{0.211} & 0.340  & \textbf{0.529} \\
    Canada & 0.500  &       & -0.036  & 0.236  & \textbf{0.475} &       & -0.037  & 0.187  & 0.487  &       & \textbf{-0.025} & \textbf{0.090} & \textbf{0.473} \\
    Belgium & 0.500  &       & 0.262  & 0.276  & 0.461  &       & 0.264  & \textbf{0.268 } & 0.460  &       & \textbf{0.382} & 0.298  & \textbf{0.355} \\
    France & 0.500  &       & 0.333  & 0.115  & \textbf{0.331} &       & 0.359  & 0.050  & \textbf{0.331} &       & \textbf{0.377} & \textbf{0.021} & \textbf{0.331} \\
    UK    & 0.500  &       & -0.027  & 0.073  & 0.528  &       & -0.030  & \textbf{0.038} & 0.528  &       & \textbf{-0.025} & 0.131  & \textbf{0.511} \\
    Russia & 0.500  &       & -0.123  & 0.167  & 0.660  &       & -0.090  & 0.106  & 0.639  &       & \textbf{-0.062} & \textbf{0.073} & \textbf{0.620} \\
    Japan & 0.500  &       & 0.056  & 0.686  & 0.356  &       & 0.247  & 0.397  & 0.252  &       & \textbf{0.343} & \textbf{0.272} & \textbf{0.228} \\
    US    & 0.600  &       & 0.684  & 0.282  & 0.165  &       & 0.828  & 0.045  & \textbf{0.143} &       & \textbf{0.849} & \textbf{0.020} & \textbf{0.143} \\
\hline
    \end{tabular}}%
\end{table}%

Most markets reported in Panel A are the emerging markets with lower $\alpha_{I,0}$ (0.200-0.400) such as Mexico and Thailand, while most markets reported in Panel C are developed markets with higher $\alpha_{I,0}$ (0.500-0.600) such as the US and Germany, which shows that backward (forward)-looking estimation predicts better in emerging (developed) markets. An intuitive explanation is that in an efficient market, it is difficult to predict future returns with historical data, and the market portfolio is close to the optimal portfolio. In contrast, in emerging markets, the forward-looking information in the market portfolio is less valuable due to the noise trader, and backward-looking information plays a more important role in investment.

The combined strategy achieves the best out-of-sample performance in a wide range of markets, as shown in Panel B. An interesting finding is that in developed markets such as Germany and Switzerland, the combined strategy is similar to the forward-looking strategies, while the combined strategy is similar to the backward-looking strategy in emerging markets such as South Africa and Thailand.
In addition, in markets shown in Panels A and C, the performance of the combined strategy is closer to the best strategy. These pieces of evidence show that our combined strategy can adaptively integrate different sources of information based on market characteristics.

Overall, although the backward (forward)-looking strategy has better performance in some emerging (developed) markets, the combined strategy outperforms in a wider range of markets, as shown in Panel B. Even if in markets reported in Panels A and C, the combined strategy is close to the best strategy and does not behave the worst, which confirms its flexibility and adaptivity. Therefore, our combined strategy shows the potential of improving portfolio out-of-sample performance.

\section{Conclusion}\label{Sec5}

In this paper, we propose a novel combined approach to estimate the future returns of assets and test the out-of-sample performance of investment decisions. The forecasting work is based on a heterogeneous market model with an informed investor, a less-informed investor, and a noise trader, and we derive the market equilibrium according to the decision behavior of investors.
By the Gaussian mixture model, we  extract forward-looking information from the market portfolio in the spirit of inverse optimization. With the historical prior, we maximize the posterior probability and obtain the combined estimation.

The combined estimation is influenced by three market characteristics including the market share of different investors, the intensity of noise trading, and the risk-aversion attitude of the informed investor.
We show through both theoretical analysis and simulation experiments that the combined approach exhibits self-adaptability in integrating different sources of information according to the degree of market efficiency.
Empirical analyses across 35 constituent markets of the MSCI ACWI Index are performed. While the backward-looking (forward-looking) strategy has better performance in emerging (developed) markets, we show that the combined strategy can be well applied to a wider range of stock markets, not limited to emerging or developed markets.

In summary, we attempt to provide a new research paradigm for return forecast in portfolio management.
Note that neither the market portfolio nor the Gaussian mixture model in this paper is unalterable; in fact, our research paradigm can also be applied to other market variables containing expectation information, other statistical method used to model asset returns, and even the framework other than mean-variance analysis. Uncovering more effective forecasting information from financial markets and generating more robust and practical optimal portfolios remains a longstanding issue in both practical portfolio management and academic research on portfolio theory.

\section{Appendix}
\subsection{Posterior distribution in the special case}

The following proposition gives an explicit posterior distribution of $\bm{\lambda}_-$.

\begin{proposition}
Suppose $\bm{\lambda}_- \sim\mathcal{N}\left(\hat{\bm{\lambda}}_-,\Phi\right)$, $\bm{x}_M=\bm{q}_0+P\bm{\lambda}_-+\bm{\varepsilon}$ where
$\bm{\varepsilon}\sim\mathcal{N}\left(\bm{0}_n,\Omega\right)$ and $\mathbb{C}ov\left(\bm{\lambda}_-,\bm{\varepsilon}\right)=\bm{0}$. If $\Phi$ and $\Omega$ are invertible,
then the conditional distribution of $\bm{\lambda}_-$ for given $\bm{x}_M$ becomes
\[
\bm{\lambda}_-|{\bm{x}_M} \sim\mathcal{N}
\left(\left(\Phi^{-1}+P'\Omega^{-1}P\right)^{-1}
\left(\Phi^{-1}\hat{\bm{\lambda}}_-+P'\Omega^{-1}({\bm{x}_M}-\bm{q}_0)\right),\left(\Phi^{-1}+P'\Omega^{-1}P\right)^{-1}\right).
\]
\end{proposition}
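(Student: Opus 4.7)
The plan is to carry out a standard Bayesian conjugate calculation: write the posterior density as prior times likelihood, reduce the exponent to a single quadratic form in $\bm{\lambda}_-$, complete the square, and read off the mean and covariance of the resulting Gaussian. Since the prior $p(\bm{\lambda}_-)$ is Gaussian in $\bm{\lambda}_-$ and the conditional $p(\bm{x}_M\mid\bm{\lambda}_-)$ is Gaussian with mean $\bm{q}_0 + P\bm{\lambda}_-$ (a linear function of $\bm{\lambda}_-$) and covariance $\Omega$ not depending on $\bm{\lambda}_-$, the log-posterior in $\bm{\lambda}_-$ will be a negative-definite quadratic, guaranteeing that the posterior is Gaussian; only the mean and covariance need to be identified.

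First, I would invoke Bayes' rule to obtain $p(\bm{\lambda}_-\mid\bm{x}_M) \propto p(\bm{\lambda}_-)\,p(\bm{x}_M\mid\bm{\lambda}_-)$, using the fact that $\mathbb{C}ov(\bm{\lambda}_-,\bm{\varepsilon})=\bm{0}$ so that $\bm{x}_M \mid \bm{\lambda}_- \sim \mathcal{N}(\bm{q}_0 + P\bm{\lambda}_-,\Omega)$. Expanding the two Gaussian exponents and collecting terms gives, up to an additive constant independent of $\bm{\lambda}_-$, the expression
\[
\bm{\lambda}_-'\bigl(\Phi^{-1}+P'\Omega^{-1}P\bigr)\bm{\lambda}_- - 2\bm{\lambda}_-'\bigl(\Phi^{-1}\hat{\bm{\lambda}}_- + P'\Omega^{-1}(\bm{x}_M-\bm{q}_0)\bigr).
\]
Setting $A \triangleq \Phi^{-1}+P'\Omega^{-1}P$ and $\bm{b} \triangleq \Phi^{-1}\hat{\bm{\lambda}}_- + P'\Omega^{-1}(\bm{x}_M-\bm{q}_0)$, this equals $(\bm{\lambda}_- - A^{-1}\bm{b})'A(\bm{\lambda}_- - A^{-1}\bm{b})$ modulo a constant. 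The matrix $A$ is invertible because $\Phi^{-1}$ is positive definite and $P'\Omega^{-1}P$ is positive semidefinite, so their sum is positive definite.

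From the completed square one immediately recognises the posterior as $\mathcal{N}(A^{-1}\bm{b},A^{-1})$, which is exactly the stated distribution. The normalising constant need not be computed explicitly: since a probability density proportional to $\exp\{-\tfrac{1}{2}(\bm{\lambda}_--\bm{m})'A(\bm{\lambda}_--\bm{m})\}$ is necessarily the $\mathcal{N}(\bm{m},A^{-1})$ density, the proof concludes by uniqueness of Gaussian densities.

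The main obstacle is purely bookkeeping in the completion-of-the-square step: one must be careful that all cross terms in $\bm{\lambda}_-$ from both quadratics are collected correctly (noting the symmetry $P'\Omega^{-1}P = (P'\Omega^{-1}P)'$ so there is no factor-of-two ambiguity), and that the terms absorbed into the constant of proportionality indeed do not depend on $\bm{\lambda}_-$. An alternative route, which avoids the manipulation entirely, is to note that $(\bm{\lambda}_-,\bm{x}_M)$ is jointly Gaussian (being an affine transform of the independent Gaussian pair $(\bm{\lambda}_-,\bm{\varepsilon})$) and to apply the standard conditional-Gaussian formula; this gives the same answer after a Woodbury-type identity, but the direct completion-of-the-square computation is cleaner and matches the form displayed in the proposition.
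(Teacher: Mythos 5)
Your proposal is correct and follows essentially the same route as the paper's own appendix proof: apply Bayes' rule, combine the two Gaussian exponents, complete the square in $\bm{\lambda}_-$ with $\Lambda \triangleq \Phi^{-1}+P'\Omega^{-1}P$ and $\bm{\upsilon} \triangleq \Phi^{-1}\hat{\bm{\lambda}}_-+P'\Omega^{-1}(\bm{x}_M-\bm{q}_0)$, and read off $\mathcal{N}(\Lambda^{-1}\bm{\upsilon},\Lambda^{-1})$. Your added remarks on the positive definiteness of $\Lambda$ and the alternative joint-Gaussian conditioning argument are sound but not needed beyond what the paper does.
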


\begin{proof}

With Bayes' rule, we can rewrite the posterior probability as
\begin{eqnarray*}
\bm{\lambda}_-|{\bm{x}_M}
&\propto& p\left({\bm{x}_M}|\bm{\lambda}_-\right)p\left(\bm{\lambda}_-\right) \nonumber\\
&\propto& exp\left\{-\frac{1}{2}\left(\bm{\lambda}_--\hat{\bm{\lambda}}_-\right)'\Phi^{-1}
\left(\bm{\lambda}_--\hat{\bm{\lambda}}_-\right)
-\frac{1}{2}\left({\bm{x}_M}-\bm{q}_0-P\bm{\lambda}_-\right)'\Omega^{-1}
\left({\bm{x}_M}-\bm{q}_0-P\bm{\lambda}_-\right)\right\} \nonumber\\
\notag &\propto& exp\left\{-\frac{1}{2}\left(\bm{\lambda}_--\Lambda^{-1}\bm{\upsilon}\right)'
\Lambda\left(\bm{\lambda}_--\Lambda^{-1}\bm{\upsilon}\right)\right\}.
\end{eqnarray*}
Therefore, the posterior distribution of $\bm{\lambda}_-$ is
\[
\bm{\lambda}_-|{\bm{x}_M} \sim\mathcal{N}
\left(\Lambda^{-1}\bm{\upsilon},\Lambda^{-1}\right).
\]
\end{proof}

\subsection{Empirical results}

Table \ref{table:5} reports the detailed out-of-sample indicators of six types of investment strategies on 35 major stock markets over the world.
\linespread{1.06}
\footnotesize
\begin{longtable}{lccccccccc}
\captionsetup{labelsep = period, labelfont = bf, font = {small,stretch=1}}
 \caption{Empirical tests: Comparisons of the mean-variance strategies and benchmark strategies (1/N, Index, and Minimum-variance). Results with three types of risk-aversion settings are shown. This table reports out-of-sample performance indicators, including annualized SR, TRN and MDD. Each indicator is the average across the three rebalancing windows.}\label{table:5}\\
\hline
    \multicolumn{1}{c}{\multirow{2}[3]{*}{Strategy}}      & \multicolumn{3}{c}{\underline{Aggressive Investor}}       & \multicolumn{3}{c}{\underline{Moderate Investor}}       & \multicolumn{3}{c}{\underline{Conservative Investor}} \\
& SR    & TRN   & MDD          & SR    & TRN   & MDD          & SR    & TRN   & MDD \\
\hline
\endfirsthead

\specialrule{0em}{0pt}{11.06pt}

\multicolumn{10}{l}{ {\bf Table \ref{table:5}.} Continued}\\

\specialrule{0em}{0pt}{6.03pt}

\hline
    \multicolumn{1}{c}{\multirow{2}[3]{*}{Strategy}}      & \multicolumn{3}{c}{\underline{Aggressive Investor}}       & \multicolumn{3}{c}{\underline{Moderate Investor}}       & \multicolumn{3}{c}{\underline{Conservative Investor}} \\
& SR    & TRN   & MDD          & SR    & TRN   & MDD          & SR    & TRN   & MDD \\
\hline
\endhead

\hline
\endfoot

\multicolumn{4}{l}{\underline{Panel A: Americas Markets}} \\
    \multicolumn{10}{c}{Canada: $\alpha_{I,0}=0.500, \sigma=1.041$} \\
\hline
    Backward-looking & -0.337  & 0.237  & 0.798  & -0.036  & 0.236  & \textbf{0.475 } & \textbf{0.181} & \textbf{0.316} & \textbf{0.282} \\
    Combined & \textbf{-0.323} & 0.238  & 0.790  & -0.037  & 0.187  & 0.487  & 0.171  & \textbf{0.321} & \textbf{0.280} \\
    Forward-looking & \textbf{-0.320} & \textbf{0.206} & \textbf{0.783} & \textbf{-0.025} & \textbf{0.090} & \textbf{0.473} & \textbf{0.180} & \textbf{0.319} & \textbf{0.277} \\
    \hline
    1/N   & 0.112  & \textbf{0.060} & 0.407  & 0.112  & \textbf{0.060} & 0.407  & 0.112  & \textbf{0.060} & 0.407  \\
    Index & -0.016  & -     & 0.466  & -0.016  & -     & 0.466  & -0.016  & -     & 0.466  \\
    Minimum-variance & \textbf{0.214} & 0.312  & \textbf{0.252} & \textbf{0.214} & 0.312  & \textbf{0.252} & \textbf{0.214} & 0.312  & \textbf{0.252} \\
    \hline
    \multicolumn{10}{c}{US: $\alpha_{I,0}=0.600, \sigma=0.260$} \\
    \hline
    Backward-looking & 0.484  & 0.506  & 0.226  & 0.684  & 0.282  & 0.165  & 0.790  & 0.336  & 0.115  \\
    Combined & 0.585  & \textbf{0.352} & 0.222  & 0.828  & 0.045  & \textbf{0.143} & \textbf{0.797} & \textbf{0.301} & \textbf{0.113} \\
    Forward-looking & \textbf{0.615} & \textbf{0.350} & \textbf{0.204} & \textbf{0.849} & \textbf{0.020} & \textbf{0.143} & \textbf{0.799} & \textbf{0.301} & \textbf{0.111} \\
    \hline
    1/N   & 0.705  & \textbf{0.036} & 0.145  & 0.705  & \textbf{0.036} & 0.145  & 0.705  & \textbf{0.036 } & 0.145  \\
    Index & \textbf{0.849} & -     & 0.143  & \textbf{0.849} & -     & 0.143  & \textbf{0.849} & -     & 0.143  \\
    Minimum-variance & 0.835  & 0.297  & \textbf{0.103} & 0.835  & 0.297  & \textbf{0.103} & 0.835  & 0.297  & \textbf{0.103} \\
    \hline
    \multicolumn{10}{c}{Brazil: $\alpha_{I,0}=0.300, \sigma=4.165$} \\
    \hline
    Backward-looking & -0.199  & 0.251  & \textbf{0.861} & -0.252  & 0.201  & 0.803  & -0.309  & \textbf{0.215} & 0.716  \\
    Combined & \textbf{-0.156} & \textbf{0.195} & \textbf{0.856} & \textbf{-0.213} & \textbf{0.093} & \textbf{0.793} & \textbf{-0.304} & \textbf{0.212} & \textbf{0.713} \\
    Forward-looking & -0.181  & 0.246  & \textbf{0.860} & -0.221  & 0.183  & \textbf{0.793} & -0.310  & 0.247  & 0.720  \\
    \hline
    1/N   & \textbf{-0.186} & \textbf{0.073} & 0.775  & \textbf{-0.186} & \textbf{0.073} & 0.775  & \textbf{-0.186} & \textbf{0.073} & 0.775  \\
    Index & -0.204  & -     & 0.794  & -0.204  & -     & 0.794  & -0.204  & -     & 0.794  \\
    Minimum-variance & -0.336  & 0.227  & \textbf{0.718} & -0.336  & 0.227  & \textbf{0.718} & -0.336  & 0.227  & \textbf{0.718 } \\
    \hline
    \multicolumn{10}{c}{Chile: $\alpha_{I,0}=0.300, \sigma=4.165$} \\
    \hline
    Backward-looking & \textbf{-0.269} & \textbf{0.635} & \textbf{0.752} & \textbf{-0.283} & \textbf{0.316} & \textbf{0.657} & \textbf{-0.302} & \textbf{0.293} & \textbf{0.620} \\
    Combined & -0.485  & 0.651  & 0.808  & -0.490  & 0.395  & 0.765  & -0.449  & 0.394  & 0.724  \\
    Forward-looking & -0.442  & 0.650  & 0.786  & -0.497  & 0.420  & 0.774  & -0.442  & 0.423  & 0.735  \\
    \hline
    1/N   & -0.326  & \textbf{0.072} & 0.714  & -0.326  & \textbf{0.072} & 0.714  & -0.326  & \textbf{0.072} & 0.714  \\
    Index & \textbf{-0.294} & -     & 0.654  & \textbf{-0.294} & -     & 0.654  & \textbf{-0.294} & -     & 0.654  \\
    Minimum-variance & -0.343  & 0.311  & \textbf{0.648} & -0.343  & 0.311  & \textbf{0.648} & -0.343  & 0.311  & \textbf{0.648} \\
    \hline
    \multicolumn{10}{c}{Mexico: $\alpha_{I,0}=0.200, \sigma=6.508$} \\
    \hline
    Backward-looking & \textbf{0.120} & \textbf{0.417} & \textbf{0.642} & \textbf{0.081} & \textbf{0.279} & \textbf{0.599} & \textbf{-0.081 } & \textbf{0.226} & \textbf{0.618} \\
    Combined & -0.014  & 0.456  & 0.676  & -0.107  & \textbf{0.278} & 0.651  & -0.167  & 0.286  & 0.652  \\
    Forward-looking & -0.172  & 0.535  & 0.720  & -0.194  & 0.483  & 0.686  & -0.200  & 0.447  & 0.673  \\
    \hline
    1/N   & -0.109  & \textbf{0.065} & 0.728  & -0.109  & \textbf{0.065} & 0.728  & -0.109  & \textbf{0.065} & 0.728  \\
    Index & \textbf{-0.066} & -     & \textbf{0.630} & \textbf{-0.066} & -     & \textbf{0.630} & \textbf{-0.066} & -     & \textbf{0.630} \\
    Minimum-variance & -0.154  & 0.204  & 0.650  & -0.154  & 0.204  & 0.650  & -0.154  & 0.204  & 0.650  \\
    \hline
\multicolumn{4}{l}{\underline{Panel B: Europe \& Middle East Markets}} \\
    \multicolumn{10}{c}{Austria: $\alpha_{I,0}=0.500, \sigma=1.041$} \\
    \hline
    Backward-looking & 0.069  & 0.346  & 0.642  & 0.052  & 0.164  & \textbf{0.622} & 0.001  & 0.186  & \textbf{0.621} \\
    Combined & \textbf{0.124} & 0.331  & \textbf{0.635} & \textbf{0.109} & 0.128  & \textbf{0.623} & \textbf{0.014} & 0.179  & \textbf{0.618} \\
    Forward-looking & 0.106  & \textbf{0.265} & 0.645  & \textbf{0.109} & \textbf{0.095} & 0.640  & \textbf{0.006} & \textbf{0.169} & 0.639  \\
    1/N   & \textbf{0.136} & \textbf{0.078} & \textbf{0.592} & \textbf{0.136} & \textbf{0.078} & \textbf{0.592} & \textbf{0.136} & \textbf{0.078} & \textbf{0.592} \\
    Index & 0.107  & -     & 0.635  & 0.107  & -     & 0.635  & 0.107  & -     & 0.635  \\
    Minimum-variance & -0.049  & 0.203  & 0.649  & -0.049  & 0.203  & 0.649  & -0.049  & 0.203  & 0.649  \\
    \hline
    \multicolumn{10}{c}{Belgium: $\alpha_{I,0}=0.500, \sigma=1.041$} \\
    \hline
    Backward-looking & 0.099  & \textbf{0.344} & 0.469  & 0.262  & 0.276  & 0.461  & 0.270  & 0.313  & 0.339  \\
    Combined & 0.089  & \textbf{0.345} & 0.479  & 0.264  & \textbf{0.268} & 0.460  & 0.266  & 0.312  & 0.341  \\
    Forward-looking & \textbf{0.150} & \textbf{0.340} & \textbf{0.428} & \textbf{0.382} & 0.298  & \textbf{0.355} & \textbf{0.377} & \textbf{0.304} & \textbf{0.274} \\
    \hline
    1/N   & \textbf{0.412} & \textbf{0.061} & \textbf{0.257} & \textbf{0.412} & \textbf{0.061} & \textbf{0.257} & \textbf{0.412} & \textbf{0.061} & \textbf{0.257} \\
    Index & 0.381  & -     & 0.381  & 0.381  & -     & 0.381  & 0.381  & -     & 0.381  \\
    Minimum-variance & 0.331  & 0.317  & 0.268  & 0.331  & 0.317  & 0.268  & 0.331  & 0.317  & 0.268  \\
    \hline
    \multicolumn{10}{c}{France: $\alpha_{I,0}=0.500, \sigma=1.041$} \\
    \hline
    Backward-looking & \textbf{0.436} & 0.344  & \textbf{0.356} & 0.333  & 0.115  & \textbf{0.331} & \textbf{0.381} & 0.265  & \textbf{0.305} \\
    Combined & \textbf{0.438 } & \textbf{0.318} & \textbf{0.356} & 0.359  & 0.050  & \textbf{0.331} & \textbf{0.381} & \textbf{0.260} & \textbf{0.305} \\
    Forward-looking & \textbf{0.438} & \textbf{0.318} & \textbf{0.356} & \textbf{0.377} & \textbf{0.021} & \textbf{0.331} & \textbf{0.380} & \textbf{0.258} & \textbf{0.305} \\
    \hline
    1/N   & 0.317  & \textbf{0.046} & 0.335  & 0.317  & \textbf{0.046} & 0.335  & 0.317  & \textbf{0.046} & 0.335  \\
    Index & 0.377  & -     & 0.331  & 0.377  & -     & 0.331  & 0.377  & -     & 0.331  \\
    Minimum-variance & \textbf{0.393} & 0.324  & \textbf{0.298} & \textbf{0.393} & 0.324  & \textbf{0.298} & \textbf{0.393} & 0.324  & \textbf{0.298} \\
    \hline
    \multicolumn{10}{c}{Germany: $\alpha_{I,0}=0.500, \sigma=1.041$} \\
    \hline
    Backward-looking & 0.147  & 0.507  & 0.599  & 0.082  & 0.397  & 0.469  & 0.349  & 0.337  & 0.350  \\
    Combined & \textbf{0.198} & \textbf{0.317} & \textbf{0.571} & \textbf{0.284} & \textbf{0.044} & \textbf{0.416} & \textbf{0.407} & \textbf{0.252} & \textbf{0.339} \\
    Forward-looking & \textbf{0.198} & \textbf{0.317} & \textbf{0.571} & \textbf{0.284} & \textbf{0.044} & \textbf{0.416} & \textbf{0.407} & \textbf{0.252} & \textbf{0.339} \\
    \hline
    1/N   & 0.307  & \textbf{0.050} & 0.369  & 0.307  & \textbf{0.050} & 0.369  & 0.307  & \textbf{0.050} & 0.369  \\
    Index & 0.288  & -     & 0.413  & 0.288  & -     & 0.413  & 0.288  & -     & 0.413  \\
    Minimum-variance & \textbf{0.455} & 0.292  & \textbf{0.312} & \textbf{0.455} & 0.292  & \textbf{0.312} & \textbf{0.455} & 0.292  & \textbf{0.312} \\
    \hline
    \multicolumn{10}{c}{Ireland: $\alpha_{I,0}=0.400, \sigma=2.343$} \\
    \hline
    Backward-looking & \textbf{0.386} & \textbf{0.238} & 0.506  & \textbf{0.605} & \textbf{0.097} & \textbf{0.354} & \textbf{0.721} & 0.098  & \textbf{0.297} \\
    Combined & 0.354  & 0.291  & \textbf{0.500} & 0.581  & 0.122  & 0.361  & 0.700  & \textbf{0.092} & \textbf{0.302 } \\
    Forward-looking & 0.348  & 0.283  & \textbf{0.497} & 0.574  & 0.130  & 0.369  & 0.700  & \textbf{0.088} & 0.309  \\
    \hline
    1/N   & \textbf{0.737} & \textbf{0.060} & 0.318  & \textbf{0.737} & \textbf{0.060} & 0.318  & \textbf{0.737} & \textbf{0.060} & 0.318  \\
    Index & 0.596  & -     & 0.374  & 0.596  & -     & 0.374  & 0.596  & -     & 0.374  \\
    Minimum-variance & 0.715  & 0.111  & \textbf{0.287} & 0.715  & 0.111  & \textbf{0.287} & 0.715  & 0.111  & \textbf{0.287} \\
    \hline
    \multicolumn{10}{c}{Israel: $\alpha_{I,0}=0.500, \sigma=1.041$} \\
    \hline
    Backward-looking & -0.546  & 0.324  & 0.864  & -0.382  & 0.264  & 0.657  & -0.135  & 0.192  & 0.462  \\
    Combined & -0.455  & \textbf{0.270} & 0.860  & \textbf{-0.240} & \textbf{0.111} & \textbf{0.533} & \textbf{-0.068} & 0.136  & \textbf{0.392} \\
    Forward-looking & \textbf{-0.448} & 0.277  & \textbf{0.853} & \textbf{-0.238} & \textbf{0.109} & \textbf{0.530} & -0.082  & \textbf{0.129} & 0.400  \\
    \hline
    1/N   & -0.154  & \textbf{0.094} & 0.474  & -0.154  & \textbf{0.094} & 0.474  & -0.154  & \textbf{0.094} & 0.474  \\
    Index & -0.184  & -     & 0.526  & -0.184  & -     & 0.526  & -0.184  & -     & 0.526  \\
    Minimum-variance & \textbf{0.050} & 0.133  & \textbf{0.386} & \textbf{0.050} & 0.133  & \textbf{0.386 } & \textbf{0.050} & 0.133  & \textbf{0.386} \\
    \hline
    \multicolumn{10}{c}{Italy: $\alpha_{I,0}=0.300, \sigma=4.165$} \\
    \hline
    Backward-looking & \textbf{0.251} & \textbf{0.358} & \textbf{0.548} & \textbf{0.229} & \textbf{0.269} & \textbf{0.452} & \textbf{0.211} & \textbf{0.228} & \textbf{0.408} \\
    Combined & 0.037  & 0.457  & 0.732  & 0.095  & 0.316  & 0.551  & 0.159  & 0.331  & 0.415  \\
    Forward-looking & 0.041  & 0.456  & 0.729  & 0.080  & 0.309  & 0.555  & 0.157  & 0.321  & \textbf{0.415} \\
    \hline
    1/N   & 0.147  & \textbf{0.064} & 0.472  & 0.147  & \textbf{0.064} & 0.472  & 0.147  & \textbf{0.064 } & 0.472  \\
    Index & 0.102  & -     & 0.527  & 0.102  & -     & 0.527  & 0.102  & -     & 0.527  \\
    Minimum-variance & \textbf{0.173} & 0.284  & \textbf{0.407} & \textbf{0.173} & 0.284  & \textbf{0.407} & \textbf{0.173} & 0.284  & \textbf{0.407} \\
    \hline
    \multicolumn{10}{c}{Norway: $\alpha_{I,0}=0.500, \sigma=1.041$} \\
    \hline
    Backward-looking & 0.030  & \textbf{0.211} & 0.727  & 0.185  & \textbf{0.139} & 0.545  & 0.271  & \textbf{0.255} & 0.392  \\
    Combined & 0.054  & 0.249  & 0.726  & 0.184  & 0.172  & 0.541  & 0.253  & 0.303  & 0.392  \\
    Forward-looking & \textbf{0.081} & 0.426  & \textbf{0.634} & \textbf{0.211} & 0.340  & \textbf{0.529} & \textbf{0.342} & 0.301  & \textbf{0.382} \\
    \hline
    1/N   & \textbf{0.411} & \textbf{0.065} & 0.441  & \textbf{0.411} & \textbf{0.065} & 0.441  & \textbf{0.411} & \textbf{0.065} & 0.441  \\
    Index & 0.143  & -     & 0.548  & 0.143  & -     & 0.548  & 0.143  & -     & 0.548  \\
    Minimum-variance & 0.264  & 0.239  & \textbf{0.343} & 0.264  & 0.239  & \textbf{0.343} & 0.264  & 0.239  & \textbf{0.343} \\
    \hline
    \multicolumn{10}{c}{Portugal: $\alpha_{I,0}=0.300, \sigma=4.165$} \\
    \hline
    Backward-looking & \textbf{0.202} & \textbf{0.294} & \textbf{0.278} & \textbf{0.083} & \textbf{0.058} & \textbf{0.283} & \textbf{-0.012} & \textbf{0.100} & \textbf{0.303} \\
    Combined & 0.168  & 0.338  & 0.295  & 0.049  & 0.086  & 0.287  & -0.030  & 0.115  & \textbf{0.303} \\
    Forward-looking & 0.002  & 0.354  & 0.324  & -0.002  & 0.149  & 0.311  & -0.047  & 0.125  & 0.313  \\
    \hline
    1/N   & -0.069  & \textbf{0.074} & \textbf{0.244} & -0.069  & \textbf{0.074} & \textbf{0.244} & -0.069  & \textbf{0.074 } & \textbf{0.244} \\
    Index & \textbf{0.015} & -     & 0.286  & \textbf{0.015} & -     & 0.286  & \textbf{0.015} & -     & 0.286  \\
    Minimum-variance & -0.054  & 0.148  & 0.319  & -0.054  & 0.148  & 0.319  & -0.054  & 0.148  & 0.319  \\
    \hline
    \multicolumn{10}{c}{Spain: $\alpha_{I,0}=0.300, \sigma=4.165$} \\
    \hline
    Backward-looking & -0.123  & 0.538  & 0.709  & 0.041  & 0.344  & 0.504  & 0.232  & 0.264  & \textbf{0.453 } \\
    Combined & -0.097  & 0.498  & \textbf{0.685} & \textbf{0.063} & 0.271  & \textbf{0.488} & \textbf{0.241} & \textbf{0.245} & \textbf{0.446} \\
    Forward-looking & \textbf{-0.083} & \textbf{0.137} & 0.752  & 0.053  & \textbf{0.217} & 0.543  & \textbf{0.239} & 0.271  & \textbf{0.453} \\
    \hline
    1/N   & 0.139  & \textbf{0.047} & 0.496  & 0.139  & \textbf{0.047} & 0.496  & 0.139  & \textbf{0.047} & 0.496  \\
    Index & 0.084  & -     & 0.532  & 0.084  & -     & 0.532  & 0.084  & -     & 0.532  \\
    Minimum-variance & \textbf{0.294} & 0.257  & \textbf{0.445} & \textbf{0.294} & 0.257  & \textbf{0.445} & \textbf{0.294} & 0.257  & \textbf{0.445} \\
    \hline
    \multicolumn{10}{c}{Switzerland: $\alpha_{I,0}=0.500, \sigma=1.041$} \\
    \hline
    Backward-looking & 0.213  & 0.507  & 0.337  & 0.329  & 0.281  & 0.274  & 0.426  & 0.280  & 0.247  \\
    Combined & \textbf{0.327} & \textbf{0.313} & \textbf{0.324} & \textbf{0.437} & \textbf{0.016} & \textbf{0.230} & \textbf{0.487} & \textbf{0.203} & \textbf{0.200} \\
    Forward-looking & \textbf{0.327} & \textbf{0.313} & \textbf{0.324} & \textbf{0.437} & \textbf{0.016} & \textbf{0.230} & \textbf{0.487} & \textbf{0.203} & \textbf{0.200} \\
    \hline
    1/N   & \textbf{0.537} & \textbf{0.065} & 0.236  & \textbf{0.537} & \textbf{0.065} & 0.236  & \textbf{0.537} & \textbf{0.065} & 0.236  \\
    Index & 0.436  & -     & 0.230  & 0.436  & -     & 0.230  & 0.436  & -     & 0.230  \\
    Minimum-variance & 0.496  & 0.259  & \textbf{0.200} & 0.496  & 0.259  & \textbf{0.200} & 0.496  & 0.259  & \textbf{0.200} \\
    \hline
    \multicolumn{10}{c}{UK: $\alpha_{I,0}=0.500, \sigma=1.041$} \\
    \hline
    Backward-looking & \textbf{0.056} & \textbf{0.386} & 0.598  & -0.027  & 0.073  & 0.528  & 0.187  & \textbf{0.303} & 0.347  \\
    Combined & \textbf{0.055} & \textbf{0.386} & 0.598  & -0.030  & \textbf{0.038} & 0.528  & 0.186  & \textbf{0.303 } & 0.347  \\
    Forward-looking & \textbf{0.064} & 0.432  & \textbf{0.575} & \textbf{-0.025} & 0.131  & \textbf{0.511} & \textbf{0.207} & 0.328  & \textbf{0.330} \\
    \hline
    1/N   & 0.131  & \textbf{0.057} & 0.487  & 0.131  & \textbf{0.057} & 0.487  & 0.131  & \textbf{0.057} & 0.487  \\
    Index & -0.030  & -     & 0.527  & -0.030  & -     & 0.527  & -0.030  & -     & 0.527  \\
    Minimum-variance & \textbf{0.195} & 0.355  & \textbf{0.320} & \textbf{0.195} & 0.355  & \textbf{0.320} & \textbf{0.195} & 0.355  & \textbf{0.320} \\
    \hline
    \multicolumn{10}{c}{Czech Republic: $\alpha_{I,0}=0.400, \sigma=2.343$} \\
    \hline
    Backward-looking & -0.107  & \textbf{0.326} & 0.548  & -0.200  & 0.251  & 0.569  & \textbf{-0.253} & 0.225  & 0.608  \\
    Combined & \textbf{-0.056} & 0.362  & \textbf{0.508} & \textbf{-0.168} & \textbf{0.210} & 0.552  & \textbf{-0.251} & \textbf{0.188} & 0.606  \\
    Forward-looking & -0.155  & 0.427  & 0.566  & -0.199  & 0.418  & \textbf{0.532} & -0.286  & 0.283  & \textbf{0.577} \\
    \hline
    1/N   & \textbf{-0.094} & \textbf{0.064} & \textbf{0.507} & \textbf{-0.094} & \textbf{0.064} & \textbf{0.507} & \textbf{-0.094} & \textbf{0.064} & \textbf{0.507} \\
    Index & -0.227  & -     & 0.580  & -0.227  & -     & 0.580  & -0.227  & -     & 0.580  \\
    Minimum-variance & -0.318  & 0.185  & 0.654  & -0.318  & 0.185  & 0.654  & -0.318  & 0.185  & 0.654  \\
    \hline
    \multicolumn{10}{c}{Poland: $\alpha_{I,0}=0.400, \sigma=2.343$} \\
    \hline
    Backward-looking & -0.064  & \textbf{0.310} & \textbf{0.754} & -0.039  & \textbf{0.113} & 0.631  & -0.154  & \textbf{0.153} & \textbf{0.613} \\
    Combined & \textbf{-0.060} & 0.389  & \textbf{0.746} & \textbf{-0.012} & 0.211  & \textbf{0.625} & \textbf{-0.132} & 0.192  & \textbf{0.613} \\
    Forward-looking & -0.139  & 0.412  & 0.800  & -0.093  & 0.280  & 0.685  & -0.156  & 0.233  & 0.635  \\
    \hline
    1/N   & \textbf{-0.079} & \textbf{0.088} & \textbf{0.604} & \textbf{-0.079} & \textbf{0.088} & \textbf{0.604} & \textbf{-0.079} & \textbf{0.088} & \textbf{0.604} \\
    Index & \textbf{-0.077} & -     & 0.631  & \textbf{-0.077} & -     & 0.631  & \textbf{-0.077} & -     & 0.631  \\
    Minimum-variance & -0.211  & 0.213  & 0.650  & -0.211  & 0.213  & 0.650  & -0.211  & 0.213  & 0.650  \\
    \hline
    \multicolumn{10}{c}{Russia: $\alpha_{I,0}=0.500, \sigma=1.041$} \\
    \hline
    Backward-looking & -0.148  & 0.379  & 0.758  & -0.123  & 0.167  & 0.660  & -0.037  & 0.223  & 0.601  \\
    Combined & -0.120  & 0.370  & 0.740  & -0.090  & 0.106  & 0.639  & \textbf{-0.029} & \textbf{0.207} & 0.597  \\
    Forward-looking & \textbf{-0.081} & \textbf{0.329} & \textbf{0.724} & \textbf{-0.062} & \textbf{0.073} & \textbf{0.620} & -0.039  & \textbf{0.208} & \textbf{0.592} \\
    \hline
    1/N   & -0.130  & \textbf{0.080} & 0.668  & -0.130  & \textbf{0.080} & 0.668  & -0.130  & \textbf{0.080} & 0.668  \\
    Index & -0.071  & -     & 0.619  & -0.071  & -     & 0.619  & -0.071  & -     & 0.619  \\
    Minimum-variance & \textbf{0.008} & 0.248  & \textbf{0.601} & \textbf{0.008} & 0.248  & \textbf{0.601} & \textbf{0.008} & 0.248  & \textbf{0.601} \\
    \hline
    \multicolumn{10}{c}{South Africa: $\alpha_{I,0}=0.4, \sigma=2.343$} \\
    \hline
    Backward-looking & \textbf{-0.065} & \textbf{0.449} & 0.744  & \textbf{-0.073} & \textbf{0.092} & \textbf{0.647} & \textbf{-0.214} & \textbf{0.357} & \textbf{0.726} \\
    Combined & \textbf{-0.065} & \textbf{0.451 } & 0.744  & \textbf{-0.075} & 0.097  & \textbf{0.647} & \textbf{-0.215} & \textbf{0.357} & \textbf{0.727} \\
    Forward-looking & -0.330  & 0.385  & \textbf{0.725} & -0.231  & 0.497  & 0.779  & -0.354  & 0.658  & 0.847  \\
    \hline
    1/N   & -0.215  & \textbf{0.090} & 0.755  & -0.215  & \textbf{0.090} & 0.755  & -0.215  & \textbf{0.090} & 0.755  \\
    Index & \textbf{-0.052} & -     & \textbf{0.646} & \textbf{-0.052} & -     & \textbf{0.646} & \textbf{-0.052} & -     & \textbf{0.646} \\
    Minimum-variance & -0.255  & 0.387  & 0.741  & -0.255  & 0.387  & 0.741  & -0.255  & 0.387  & 0.741  \\
    \hline
    \multicolumn{10}{c}{Turkey: $\alpha_{I,0}=0.300, \sigma=4.165$} \\
    \hline
    Backward-looking & \textbf{0.205} & \textbf{0.061} & \textbf{0.472} & 0.347  & 0.195  & 0.307  & 0.462  & 0.223  & 0.212  \\
    Combined & \textbf{0.211} & \textbf{0.059} & \textbf{0.473} & \textbf{0.379} & \textbf{0.041} & \textbf{0.301} & \textbf{0.482} & \textbf{0.194} & \textbf{0.204} \\
    Forward-looking & \textbf{0.211} & \textbf{0.059} & \textbf{0.473} & 0.350  & 0.075  & \textbf{0.301} & \textbf{0.478} & \textbf{0.189} & \textbf{0.204} \\
    \hline
    1/N   & \textbf{0.568} & \textbf{0.058} & 0.230  & \textbf{0.568} & \textbf{0.058} & 0.230  & \textbf{0.568} & \textbf{0.058} & 0.230  \\
    Index & 0.379  & -     & 0.301  & 0.379  & -     & 0.301  & 0.379  & -     & 0.301  \\
    Minimum-variance & 0.501  & 0.195  & \textbf{0.191} & 0.501  & 0.195  & \textbf{0.191} & 0.501  & 0.195  & \textbf{0.191} \\
    \hline
    \multicolumn{10}{c}{UAE: $\alpha_{I,0}=0.500, \sigma=1.041$} \\
    \hline
    Backward-looking & -0.139  & 0.426  & 0.676  & -0.151  & 0.376  & 0.565  & -0.084  & 0.249  & 0.370  \\
    Combined & \textbf{0.248} & \textbf{0.282} & \textbf{0.432} & \textbf{0.300} & \textbf{0.118} & \textbf{0.279} & \textbf{0.237} & \textbf{0.168} & \textbf{0.189} \\
    Forward-looking & 0.195  & 0.404  & 0.485  & 0.222  & 0.239  & 0.330  & 0.200  & 0.214  & 0.197  \\
    \hline
    1/N   & \textbf{0.334} & \textbf{0.068} & 0.287  & \textbf{0.334} & \textbf{0.068} & 0.287  & \textbf{0.334} & \textbf{0.068} & 0.287  \\
    Index & 0.265  & -     & 0.283  & 0.265  & -     & 0.283  & 0.265  & -     & 0.283  \\
    Minimum-variance & 0.027  & 0.206  & \textbf{0.233} & 0.027  & 0.206  & \textbf{0.233 } & 0.027  & 0.206  & \textbf{0.233} \\
    \hline
\multicolumn{4}{l}{\underline{Panel C: Pacific Markets}} \\
    \multicolumn{10}{c}{Australia: $\alpha_{I,0}=0.400, \sigma=2.343$} \\
    \hline
    Backward-looking & \textbf{-0.114} & \textbf{0.325} & \textbf{0.711} & \textbf{0.061} & 0.226  & \textbf{0.462} & \textbf{0.124} & \textbf{0.320} & \textbf{0.359} \\
    Combined & -0.172  & 0.353  & \textbf{0.713} & -0.005  & 0.201  & 0.501  & 0.096  & \textbf{0.317} & 0.387  \\
    Forward-looking & -0.165  & 0.374  & \textbf{0.711} & -0.011  & \textbf{0.160} & 0.491  & 0.099  & 0.326  & 0.381  \\
    \hline
    1/N   & 0.093  & \textbf{0.047} & 0.435  & 0.093  & \textbf{0.047} & 0.435  & 0.093  & \textbf{0.047} & 0.435  \\
    Index & 0.000  & -     & 0.487  & 0.000  & -     & 0.487  & 0.000  & -     & 0.487  \\
    Minimum-variance & \textbf{0.114} & 0.333  & \textbf{0.354} & \textbf{0.114} & 0.333  & \textbf{0.354 } & \textbf{0.114} & 0.333  & \textbf{0.354} \\
    \hline
    \multicolumn{10}{c}{HongKong: $\alpha_{I,0}=0.500, \sigma=1.041$} \\
    \hline
    Backward-looking & 0.162  & 0.380  & 0.310  & 0.100  & 0.249  & \textbf{0.269} & 0.021  & 0.161  & \textbf{0.253} \\
    Combined & \textbf{0.238} & \textbf{0.362 } & \textbf{0.289} & \textbf{0.164} & 0.235  & 0.288  & \textbf{0.060} & 0.149  & \textbf{0.252} \\
    Forward-looking & 0.233  & 0.400  & 0.310  & \textbf{0.157} & \textbf{0.111} & 0.298  & \textbf{0.058} & \textbf{0.134} & 0.255  \\
    \hline
    1/N   & 0.161  & \textbf{0.030} & \textbf{0.264} & 0.161  & \textbf{0.030} & \textbf{0.264} & 0.161  & \textbf{0.030} & \textbf{0.264} \\
    Index & \textbf{0.168} & -     & 0.291  & \textbf{0.168} & -     & 0.291  & \textbf{0.168} & -     & 0.291  \\
    Minimum-variance & -0.003  & 0.115  & \textbf{0.263} & -0.003  & 0.115  & \textbf{0.263} & -0.003  & 0.115  & \textbf{0.263} \\
    \hline
    \multicolumn{10}{c}{Japan: $\alpha_{I,0}=0.500, \sigma=1.041$} \\
    \hline
    Backward-looking & 0.107  & 0.481  & 0.351  & 0.056  & 0.686  & 0.356  & -0.039  & 0.829  & 0.363  \\
    Combined & 0.183  & \textbf{0.465} & \textbf{0.333} & 0.247  & 0.397  & 0.252  & 0.189  & 0.534  & 0.258  \\
    Forward-looking & \textbf{0.244} & \textbf{0.465} & \textbf{0.335} & \textbf{0.343} & \textbf{0.272} & \textbf{0.228} & \textbf{0.257} & \textbf{0.443} & \textbf{0.230} \\
    \hline
    1/N   & \textbf{0.359} & \textbf{0.036} & \textbf{0.218} & \textbf{0.359} & \textbf{0.036} & \textbf{0.218} & \textbf{0.359} & \textbf{0.036} & \textbf{0.218} \\
    Index & \textbf{0.360} & -     & \textbf{0.225} & \textbf{0.360} & -     & \textbf{0.225} & \textbf{0.360} & -     & \textbf{0.225} \\
    Minimum-variance & 0.196  & 0.420  & 0.241  & 0.196  & 0.420  & 0.241  & 0.196  & 0.420  & 0.241  \\
    \hline
    \multicolumn{10}{c}{Singapore: $\alpha_{I,0}=0.500, \sigma=1.041$} \\
    \hline
    Backward-looking & \textbf{-0.284} & 0.450  & \textbf{0.598} & -0.178  & 0.177  & 0.495  & -0.250  & \textbf{0.357} & 0.520  \\
    Combined & \textbf{-0.280} & \textbf{0.430} & \textbf{0.597} & \textbf{-0.099} & \textbf{0.058} & \textbf{0.458} & \textbf{-0.231} & 0.366  & \textbf{0.510} \\
    Forward-looking & \textbf{-0.280} & \textbf{0.430} & \textbf{0.597} & \textbf{-0.099} & \textbf{0.058} & \textbf{0.458} & \textbf{-0.231} & 0.366  & \textbf{0.510} \\
    \hline
    1/N   & -0.399  & \textbf{0.052} & 0.614  & -0.399  & \textbf{0.052} & 0.614  & -0.399  & \textbf{0.052} & 0.614  \\
    Index & \textbf{-0.091} & -     & \textbf{0.454} & \textbf{-0.091} & -     & \textbf{0.454} & \textbf{-0.091} & -     & \textbf{0.454} \\
    Minimum-variance & -0.257  & 0.410  & 0.523  & -0.257  & 0.410  & 0.523  & -0.257  & 0.410  & 0.523  \\
    \hline
    \multicolumn{10}{c}{China: $\alpha_{I,0}=0.300, \sigma=4.165$} \\
    \hline
    Backward-looking & \textbf{0.014} & \textbf{0.395} & 0.625  & \textbf{0.115} & 0.195  & 0.466  & \textbf{0.197} & \textbf{0.241} & \textbf{0.380} \\
    Combined & 0.002  & 0.420  & \textbf{0.621} & 0.114  & \textbf{0.160} & 0.466  & 0.184  & 0.260  & \textbf{0.381} \\
    Forward-looking & -0.012  & 0.465  & 0.633  & 0.114  & 0.342  & \textbf{0.443} & \textbf{0.200} & 0.339  & 0.390  \\
    \hline
    1/N   & 0.173  & \textbf{0.054} & 0.466  & 0.173  & \textbf{0.054} & 0.466  & 0.173  & \textbf{0.054} & 0.466  \\
    Index & 0.131  & -     & 0.464  & 0.131  & -     & 0.464  & 0.131  & -     & 0.464  \\
    Minimum-variance & \textbf{0.226} & 0.241  & \textbf{0.366} & \textbf{0.226} & 0.241  & \textbf{0.366} & \textbf{0.226} & 0.241  & \textbf{0.366} \\
    \hline
    \multicolumn{10}{c}{India: $\alpha_{I,0}=0.400, \sigma=2.343$} \\
    \hline
    Backward-looking & 0.139  & 0.601  & 0.503  & 0.369  & 0.410  & 0.281  & 0.646  & 0.324  & 0.167  \\
    Combined & \textbf{0.190} & \textbf{0.560} & 0.466  & \textbf{0.388} & 0.346  & \textbf{0.267} & \textbf{0.659} & \textbf{0.307} & \textbf{0.164} \\
    Forward-looking & 0.081  & 0.588  & \textbf{0.435} & 0.261  & \textbf{0.249} & \textbf{0.274} & 0.630  & 0.353  & \textbf{0.159} \\
    \hline
    1/N   & 0.424  & \textbf{0.057} & 0.267  & 0.424  & \textbf{0.057} & 0.267  & 0.424  & \textbf{0.057} & 0.267  \\
    Index & 0.309  & -     & 0.266  & 0.309  & -     & 0.266  & 0.309  & -     & 0.266  \\
    Minimum-variance & \textbf{0.820} & 0.279  & \textbf{0.129} & \textbf{0.820} & 0.279  & \textbf{0.129} & \textbf{0.820} & 0.279  & \textbf{0.129} \\
    \hline
    \multicolumn{10}{c}{Indonesia: $\alpha_{I,0}=0.200, \sigma=6.508$} \\
    \hline
    Backward-looking & -0.151  & 0.564  & \textbf{0.539} & -0.185  & 0.190  & 0.475  & -0.329  & 0.301  & \textbf{0.613} \\
    Combined & \textbf{-0.144} & \textbf{0.529} & \textbf{0.536} & \textbf{-0.171} & \textbf{0.144} & \textbf{0.469} & \textbf{-0.323} & \textbf{0.247} & \textbf{0.609} \\
    Forward-looking & -0.151  & 0.538  & \textbf{0.539} & \textbf{-0.174} & 0.177  & \textbf{0.468} & -0.325  & 0.255  & \textbf{0.609} \\
    \hline
    1/N   & -0.265  & \textbf{0.053} & 0.546  & -0.265  & \textbf{0.053} & 0.546  & -0.265  & \textbf{0.053} & 0.546  \\
    Index & \textbf{-0.159 } & -     & \textbf{0.465} & \textbf{-0.159} & -     & \textbf{0.465} & \textbf{-0.159} & -     & \textbf{0.465} \\
    Minimum-variance & -0.384  & 0.288  & 0.658  & -0.384  & 0.288  & 0.658  & -0.384  & 0.288  & 0.658  \\
    \hline
    \multicolumn{10}{c}{Korea: $\alpha_{I,0}=0.400, \sigma=2.343$} \\
    \hline
    Backward-looking & -0.004  & \textbf{0.441} & \textbf{0.474} & \textbf{0.106} & \textbf{0.030} & \textbf{0.396} & \textbf{0.148} & \textbf{0.342} & \textbf{0.420} \\
    Combined & \textbf{0.012} & 0.451  & \textbf{0.474} & \textbf{0.106} & 0.044  & \textbf{0.396} & \textbf{0.150} & 0.347  & \textbf{0.420} \\
    Forward-looking & -0.066  & 0.597  & 0.533  & -0.011  & 0.420  & 0.508  & 0.132  & 0.430  & 0.442  \\
    \hline
    1/N   & 0.100  & \textbf{0.063} & 0.430  & 0.100  & \textbf{0.063} & 0.430  & 0.100  & \textbf{0.063 } & 0.430  \\
    Index & 0.105  & -     & \textbf{0.396} & 0.105  & -     & \textbf{0.396} & 0.105  & -     & \textbf{0.396} \\
    Minimum-variance & \textbf{0.154} & 0.389  & 0.425  & \textbf{0.154} & 0.389  & 0.425  & \textbf{0.154} & 0.389  & 0.425  \\
    \hline
    \multicolumn{10}{c}{Malaysia: $\alpha_{I,0}=0.400, \sigma=2.343$} \\
    \hline
    Backward-looking & 0.129  & 0.607  & 0.199  & -0.028  & 0.250  & 0.280  & -0.060  & 0.300  & 0.280  \\
    Combined & \textbf{0.142} & \textbf{0.497} & \textbf{0.183} & \textbf{-0.023} & 0.165  & \textbf{0.265} & \textbf{-0.055} & \textbf{0.281} & \textbf{0.266} \\
    Forward-looking & 0.001  & 0.525  & 0.271  & -0.144  & \textbf{0.131} & 0.331  & -0.127  & 0.289  & 0.306  \\
    \hline
    1/N   & \textbf{0.014} & \textbf{0.050} & 0.292  & \textbf{0.014} & \textbf{0.050} & 0.292  & \textbf{0.014} & \textbf{0.050} & 0.292  \\
    Index & -0.024  & -     & \textbf{0.271} & -0.024  & -     & \textbf{0.271} & -0.024  & -     & \textbf{0.271} \\
    Minimum-variance & -0.053  & 0.278  & \textbf{0.270} & -0.053  & 0.278  & \textbf{0.270} & -0.053  & 0.278  & \textbf{0.270} \\
    \hline
    \multicolumn{10}{c}{Philippines: $\alpha_{I,0}=0.300, \sigma=4.165$} \\
    \hline
    Backward-looking & \textbf{0.329} & \textbf{0.459} & \textbf{0.389} & \textbf{0.046} & \textbf{0.213} & \textbf{0.495} & \textbf{-0.127} & \textbf{0.230} & \textbf{0.576} \\
    Combined & 0.219  & 0.532  & 0.468  & -0.018  & 0.289  & 0.552  & -0.173  & 0.269  & 0.592  \\
    Forward-looking & 0.212  & 0.514  & 0.474  & -0.015  & 0.276  & 0.550  & -0.177  & 0.265  & 0.595  \\
    \hline
    1/N   & -0.037  & \textbf{0.050} & 0.588  & -0.037  & \textbf{0.050} & 0.588  & -0.037  & \textbf{0.050} & 0.588  \\
    Index & \textbf{0.074} & -     & \textbf{0.493} & \textbf{0.074} & -     & \textbf{0.493} & \textbf{0.074} & -     & \textbf{0.493} \\
    Minimum-variance & -0.162  & 0.243  & 0.598  & -0.162  & 0.243  & 0.598  & -0.162  & 0.243  & 0.598  \\
    \hline
    \multicolumn{10}{c}{Taiwan: $\alpha_{I,0}=0.400, \sigma=2.343$} \\
    \hline
    Backward-looking & \textbf{0.364} & \textbf{0.372} & \textbf{0.336} & \textbf{0.481} & \textbf{0.024} & 0.266  & 0.153  & \textbf{0.352} & 0.281  \\
    Combined & \textbf{0.364} & \textbf{0.372} & \textbf{0.336} & \textbf{0.481} & \textbf{0.024} & 0.266  & 0.153  & \textbf{0.352} & 0.281  \\
    Forward-looking & 0.346  & 0.682  & 0.320  & 0.424  & 0.397  & \textbf{0.255} & \textbf{0.164} & 0.429  & \textbf{0.247} \\
    \hline
    1/N   & 0.313  & \textbf{0.046} & 0.304  & 0.313  & \textbf{0.046} & 0.304  & 0.313  & \textbf{0.046} & 0.304  \\
    Index & \textbf{0.482} & -     & \textbf{0.265} & \textbf{0.482} & -     & \textbf{0.265} & \textbf{0.482} & -     & \textbf{0.265} \\
    Minimum-variance & 0.054  & 0.344  & 0.295  & 0.054  & 0.344  & 0.295  & 0.054  & 0.344  & 0.295  \\
    \hline
    \multicolumn{10}{c}{Thailand: $\alpha_{I,0}=0.400, \sigma=2.343$} \\
    \hline
    Backward-looking & \textbf{0.177} & 0.431  & \textbf{0.444} & \textbf{0.151} & \textbf{0.073} & \textbf{0.393} & \textbf{0.040} & \textbf{0.323} & \textbf{0.404} \\
    Combined & \textbf{0.179} & \textbf{0.420} & \textbf{0.444} & \textbf{0.154} & \textbf{0.067} & \textbf{0.393} & \textbf{0.038} & \textbf{0.318 } & \textbf{0.404} \\
    Forward-looking & 0.020  & 0.967  & 0.543  & 0.078  & 0.821  & 0.504  & 0.010  & 0.696  & 0.485  \\
    \hline
    1/N   & \textbf{0.232} & \textbf{0.062} & 0.396  & \textbf{0.232} & \textbf{0.062} & 0.396  & \textbf{0.232} & \textbf{0.062} & 0.396  \\
    Index & 0.163  & -     & \textbf{0.393} & 0.163  & -     & \textbf{0.393} & 0.163  & -     & \textbf{0.393} \\
    Minimum-variance & -0.016  & 0.284  & 0.419  & -0.016  & 0.284  & 0.419  & -0.016  & 0.284  & 0.419  \\
    \hline
    Times Periods & \multicolumn{9}{c}{2012.1.6-2020.11.30} \\
\hline
\end{longtable}

\bibliographystyle{ormsv080}
\bibliography{Forward_looking_paper}

\end{document}